\newcommand{\specrad}{\rho}
\newcommand{\csq}{s_c}
\newcommand{\dsq}{s_d}
\newcommand{\vr}{\varrho}
\newcommand{\vrstar}{\vr^*}
\newcommand{\lupper}{\ol{\lambda}}
\newcommand{\llower}{\ul{\lambda}}
\newcommand{\DEG}{\kappa}
\newcommand{\nontriv}{\mathrm{PS}}
\newcommand{\spec}{\mathrm{spec}} \newcommand{\Spec}{\spec}
\newcommand{\Tree}{\mathbb{T}}
\newcommand{\Primal}{\mathbb{G}}
\newcommand{\vdist}{\mathrm{dist}}
\newcommand{\EIG}{\mathrm{EIG}}
\newcommand{\LTheta}[1]{\vartheta(\overline{#1})}
\newcommand{\SDP}{\mathrm{SDP}}
\newcommand{\SDPtri}{\mathrm{SDP}_\triangle}
\newcommand{\OPT}{\mathrm{OPT}}
\begin{document}

\title{The threshold for SDP-refutation of random regular NAE-3SAT}
\author{Yash Deshpande\thanks{MIT Department of Mathematics. \texttt{yash@mit.edu}} \and Andrea Montanari\thanks{Department of Electrical Engineering and Department of Statistics, Stanford University. \texttt{montanari@stanford.edu}. Supported by grants NSF CCF-1714305 and NSF IIS-1741162.} \and Ryan O'Donnell\thanks{Computer Science Department, Carnegie Mellon University.  \texttt{odonnell@cs.cmu.edu}. Supported by NSF grants CCF-1618679, CCF-1717606. This material is based upon work supported by the National Science Foundation under grant numbers listed above. Any opinions, findings and conclusions or recommendations expressed in this material are those of the author and do not necessarily reflect the views of the National Science Foundation (NSF).} \and Tselil Schramm\thanks{Harvard University and MIT. \texttt{tselil@seas.harvard.edu}. This work was partly supported by an NSF graduate research fellowship (1106400), and took place in part while T.S. was a fellow at the Simons Institute program on Optimization.} \and Subhabrata Sen\thanks{Microsoft Research NE and MIT Department of Mathematics. \texttt{ssen90@mit.edu}}}

\maketitle

\begin{abstract}
    Unlike its cousin 3SAT, the NAE-3SAT (not-all-equal-3SAT) problem has the property that spectral/SDP algorithms can efficiently refute random instances when the constraint density is a large constant (with high probability).  But do these methods work immediately above the ``satisfiability threshold'', or is there still a range of constraint densities for which random NAE-3SAT instances are unsatisfiable but hard to refute?

    We show that the latter situation prevails, at least in the context of random regular instances and SDP-based refutation.  More precisely, whereas a random $d$-regular instance of NAE-3SAT is easily shown to be unsatisfiable (whp) once $d \geq 8$, we establish the following sharp threshold result regarding efficient refutation: If $d < 13.5$ then the basic SDP, even augmented with triangle inequalities, fails to refute satisfiability (whp); if $d > 13.5$ then even the most basic spectral algorithm refutes satisfiability~(whp).
\end{abstract}

\setcounter{page}{0}
\thispagestyle{empty}
\newpage

\section{Introduction}

A randomly chosen $n$-variable constraint satisfaction problem (CSP) will typically be unsatisfiable once the constraint density~$\alpha$ (ratio of constraints to variables) is a sufficiently large constant.  Taking 3SAT as an example, the conjectural satisfiability threshold~\cite{MPZ02,MMZ06} is $\alpha_c \approx 4.2667$, and the trivial first moment method already establishes unsatisfiability (whp) once $\alpha > \log_{7/8}(1/2) \approx 5.19$.  Despite this, there is no known efficient algorithm that can refute random 3SAT instances (whp) for any large constant~$\alpha$.  The best known algorithms~\cite{FGK05,GL03,CGL07,FO07,FKO06}, all of which use spectral or semidefinite-programming (SDP) techniques, work only once $\alpha \gtrapprox \sqrt{n}$.  Indeed, there are lower bounds~\cite{Sch10,Tul09,KMOW17} showing that any polynomial-time algorithm based on such techniques --- more generally, based on the constant-degree ``Sum of Squares'' method --- will fail to refute unless $\alpha \gtrapprox \sqrt{n}$.  The most general of these results~\cite{KMOW17} applies to \emph{any} CSP for which the constraint predicate supports a pairwise-uniform probability distribution.\footnote{That is, there is a distribution $D$ over satisfying assignments $x$ to the predicate, with the property that the order $1$ and $2$ moments of $D$ are identical to those of the uniform distribution.}

On the other hand, for any CSP whose predicate does \emph{not} support a pairwise-uniform probability distribution, it has been shown~\cite{AOW15} that there \emph{is} an efficient SDP-based algorithm for refuting random instances once the constraint density~$\alpha$ is a sufficiently large constant.\footnote{In~\cite{AOW15}, it is stated that $\alpha = n^{k/2-1}\polylog n$ suffices when no $k$-wise uniform distribution is supported; however, in the particular case of $k = 2$ one can show that the $\polylog n$ is unnecessary, using the (worst-case) strong refutation algorithm for 2XOR-SAT~\cite{CW04}.}  For such CSPs, where ``all of the action'' is in the sparse regime of $O(n)$ constraints, it is more plausible to hope for an efficient refutation algorithm that works just above the satisfiability threshold --- or at least to identify sharp thresholds for when efficient refutation algorithms succeed.

Perhaps the simplest and most natural $\mathsf{NP}$-complete CSP of this type is NAE-3SAT.  This is the variant of 3SAT in which a clause is considered ``satisfied'' if and only if it has at least one true literal \emph{and} one false literal; i.e., the literals' truth values are Not All Equal.  (The further variant wherein all literals appear positively is equivalent to the problem of $2$-coloring a $3$-uniform hypergraph.)  Being a more symmetric --- and in some sense, simpler --- variant of 3SAT, the NAE-3SAT problem has received a great deal of attention in the study of random CSPs; see, e.g.,~\cite{AS93,ACIM01,AM02,GJ03,CNRZ03,DRZ08,DKR15,DSS16}.  In particular, by 2003 Goerdt and Jurdzi{\'n}ski~\cite{GJ03} had already proven that SDP methods could refute random NAE-3SAT instances at sufficiently high constant constraint density.  NAE-3SAT is also closely related to the Max-Cut and 2XOR-SAT CSPs and has a natural basic SDP relaxation; for this reason, the problem has also been well-studied from the point of view of worst-case approximation algorithms~\cite{KLP96,AE98,Zwi98,Zwi99}.

This paper is motivated by the question of whether efficient algorithms might be able to refute unsatisfiability of random NAE-3SAT instances at densities all the way down to the satisfiability threshold --- or whether there is still a range of constant densities where random instances are unsatisfiable, but this is hard for efficient algorithms to certify.  The latter case seems to prevail for 3SAT, and one would likely pessimistically guess the same is true for NAE-3SAT.  However one may need a finer analysis for NAE-3SAT; the range of presumably-hard densities for refuting 3SAT is between a constant and~$\sqrt{n}$, whereas for NAE-3SAT it is between two universal constants.

One way to give evidence for the existence of hard densities for NAE-3SAT refutation would be to study the \emph{SDP-satisfiability threshold} for random instances; i.e., the largest density for which the basic SDP algorithm fails to refute satisfiability.  The goal would be to give a lower-bound for the SDP-satisfiability threshold that exceeds the actual NAE-3SAT satisfiability threshold. In fact, the main result of this paper is a determination of the \emph{exact} SDP-satisfiability threshold of random NAE-3SAT instances, in the setting of random regular instances.  This threshold provably exceeds the actual satisfiability threshold, thus establishing a range of degrees for which random regular NAE-3SAT refutation is hard for SDP algorithms.

\subsection{Our results}   \label{sec:our-results}
For technical simplicity, we work in the setting of random \emph{regular} instances of NAE-3SAT, where every variable participates in the same number, $d$, of 3NAE-constraints.  (This is in contrast to the ``Erd\H{o}s--R\'{e}nyi'' setting with clause density~$\alpha$, in which the degree of each variable is like a  Poisson random variable with mean~$3\alpha$.)  We also use the ``random lift'' model for $d$-regular instances, rather than, say, the ``configuration'' model.  For precise details see \Cref{sec:lift-notation}, but in brief, our random $d$-regular instances are chosen as follows:
\begin{enumerate}[label=\roman*]
    \item Start with the bipartite graph $K_{d,3}$.
    \item Choose a uniformly random $n$-lift~$\bH$, a bipartite graph with $dn$ vertices of degree $3$ in one part and $3n$ vertices of degree $d$ in the other part.
    \item Treat the degree-$d$ vertices as CSP variables and the degree-$3$ vertices as 3NAE constraints on the adjacent variables
    \item In each constraint, randomly replace each variable-appearance with its negation, uniformly and independently.
\end{enumerate}
Notice that for \emph{any} $(3,d)$-biregular graph~$H$ and any truth assignment to the variables, the randomness from the negations alone gives us that each constraint is independently satisfied with probability~$3/4$.  Thus the first moment method implies the following:
\begin{fact}                                        \label{fact:sat-thresh-upper}
    For $d > \log_{\frac43} 8 \approx 7.228$ (i.e., for $d \geq 8$) a random $d$-regular NAE-3SAT instance will be unsatisfiable with high probability (indeed, in any model with random negations).\footnote{In fact, the unsatisfiability threshold is more likely to be lower, specifically $d \geq 7$, based on heuristics from statistical physics. The ``1RSB'' prediction for the unsatisfiability threshold of random NAE-3SAT --- which was rigorously verified for NAE-$k$SAT, $k \geq k_0$, in~\cite{DSS16} --- was determined to be at average degree $3 \cdot 2.105 = 6.315$ in the Erd\H{o}s--R\'{e}nyi case~\cite{CNRZ03}, and at degree at most~$7$ in the regular case~\cite{DRZ08} (albeit these predictions were for the ``coloring'' version of NAE-3SAT without negations).}
\end{fact}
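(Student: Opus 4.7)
The plan is to apply the classical first-moment method to the number of satisfying assignments, using only the randomness in the negations (which is exactly what justifies the ``in any model with random negations'' parenthetical). So I would fix an arbitrary $(3,d)$-biregular graph $H$ on $3n$ variables and $dn$ constraints and work with it deterministically; the only randomness used will be the independent negations on each constraint.

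The key per-assignment computation is the one already flagged in the excerpt: for any fixed truth assignment $x\in\{0,1\}^{3n}$, each constraint is satisfied independently with probability $3/4$. Independence across constraints is immediate because each constraint carries its own private triple of negation bits. For the per-constraint probability, the three neighbors of a constraint in $H$ are three \emph{distinct} variables, since in the $n$-lift of $K_{d,3}$ the three neighbors of a given constraint project to the three distinct variable-vertices of $K_{d,3}$. Hence under the random negations the three literals are i.i.d.\ uniform on $\{0,1\}$, and the ``not all equal'' event has probability $6/8=3/4$.

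Summing over all $2^{3n}$ possible assignments and using linearity of expectation,
\[
    E[\,\#\text{SAT assignments}\,] \;=\; 2^{3n}\cdot (3/4)^{dn} \;=\; \bigl(8\cdot (3/4)^{d}\bigr)^{n},
\]
which tends to $0$ (exponentially in $n$) exactly when $8\cdot (3/4)^{d} < 1$, i.e.\ when $d > \log_{4/3} 8 \approx 7.228$; for integer $d$ this is equivalent to $d\ge 8$. Markov's inequality then finishes the argument:
\[
    \Pr[\text{instance is satisfiable}] \;=\; \Pr[\#\text{SAT}\geq 1] \;\leq\; E[\#\text{SAT}] \;\longrightarrow\; 0.
\]

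There is essentially no obstacle here; this is a textbook first-moment calculation. The only small sanity check is the one noted above, namely that in the $n$-lift each clause is supported on three genuinely distinct variables so that the per-constraint probability is exactly~$3/4$ (no self-coincidences or repeated literals to worry about). Because the estimate is completely indifferent to the choice of~$H$, the same conclusion also holds in any other reasonable random model that retains the independent random negations, including the Erd\H{o}s--R\'enyi model at density $\alpha=d/3$.
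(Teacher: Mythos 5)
Your proposal is correct and is exactly the paper's argument: the paper proves this fact by the same one-line first-moment computation (each constraint independently satisfied with probability $3/4$ under random negations for any fixed assignment, giving $\E[\#\mathrm{SAT}]=\bigl(8\cdot(3/4)^d\bigr)^n\to 0$ for $d>\log_{4/3}8$), stated in the sentence immediately preceding the fact. Your added check that each clause involves three distinct variables is a reasonable bit of diligence but introduces nothing beyond what the paper intends.
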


Our main theorem is the following sharp threshold for SDP-satisfiability:
\begin{theorem}                                     \label{thm:our-mainest}
    Let $\bI$ be a random $d$-regular instance of NAE-3SAT.  Then with high probability (meaning probability $1-o_{n \to \infty}(1)$):
    \begin{itemize}
        \item For $d < 13.5$,  the natural SDP relaxation will \emph{not} refute satisfiability of~$\bI$.
        \item For $d > 13.5$, the  natural SDP relaxation \emph{will} refute satisfiability of~$\bI$.
    \end{itemize}
\end{theorem}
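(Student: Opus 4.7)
The plan is to reduce both halves of the theorem to a single spectral question on the matrix $M := BB^{\top}$, where $B\in\{0,\pm 1\}^{n\times(dn/3)}$ is the signed variable-clause biadjacency of the instance $\bI$. A direct expansion yields the identity
\[
\sum_c \Bigl(\sum_{j<k}\epsilon_j\epsilon_k\,X_{i_j i_k}\Bigr) \;=\; \tfrac{1}{2}\bigl(\mathrm{tr}(XM) - dn\bigr)
\]
for any PSD matrix $X$ with $X_{ii}=1$. Thus satisfying every NAE constraint in the basic SDP forces $\mathrm{tr}(XM)\le dn/3$, and since $\mathrm{tr}(XM)\ge n\,\lambda_{\min}(M)$ for any such $X$, the basic SDP is infeasible whenever $\lambda_{\min}(M)>d/3$. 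Both halves of the theorem therefore reduce to locating $\lambda_{\min}(M)$ relative to the threshold $d/3$.

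The random $n$-lift of $K_{3,d}$ with i.i.d.\ $\pm 1$ edge signs locally converges, in the Benjamini-Schramm sense, to the $(3,d)$-biregular infinite tree $\Tree$ with i.i.d.\ signs; since every signing is a trivial cocycle on a tree, the spectrum of $M$ on $\Tree$ coincides with its unsigned spectrum. A spherically symmetric ansatz $a_k=A r^k$, $b_k=B r^k$ for values at tree distance $2k$, $2k+1$ from the root produces the characteristic equation $2(d-1)r^2+(d+1-\sigma^2)r+1=0$, so the essential spectrum of $M$ on $\Tree$ is the interval $[\lambda_{-},\lambda_{+}]$ with $\lambda_{\pm}=d+1\pm 2\sqrt{2(d-1)}$. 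Setting $\lambda_{-}=d/3$ and clearing denominators yields the quadratic $4d^{2}-60d+81=0$, whose relevant root is exactly $d=13.5$. For the upper bound ($d>13.5$), I would then show that $\lambda_{\min}(M)$ on the random lift concentrates at the tree lower edge $\lambda_{-}>d/3$ via the non-backtracking/Ihara framework of Bordenave for random lifts, or via a direct trace-moment expansion comparing walks on the lift to walks on $\Tree$. Combined with the reduction above, this gives the spectral refutation direction.

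For the lower bound ($d<13.5$), I must construct a feasible PSD $X$. My plan is to put
\[
X \;=\; \beta\,P \;+\; D,
\]
where $P$ is the orthogonal projector of $M$ onto a low-lying band $[\lambda_{-},\lambda_{-}+\delta]$ of dimension $\rho n$, and $D\succeq 0$ is the diagonal correction enforcing $X_{ii}=1$. Quantitative delocalization of the low-lying eigenfunctions on the random lift (via the same trace-moment machinery) gives $P_{ii}\approx\rho$ uniformly, so $D\succeq 0$ with room to spare and $X\succeq 0$. The aggregate $\mathrm{tr}(PM)\approx\lambda_{-}\rho n$ yields average per-clause sum $\beta\cdot\tfrac{3\rho}{2d}(\lambda_{-}-d)$; requiring this to be $\le -1$ while keeping $\beta\rho\le 1$ (needed for $D\succeq 0$) produces \emph{exactly} the inequality $4d^{2}-60d+81\le 0$, solvable precisely when $d<13.5$.

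The main technical task---and the central obstacle of the proof---is promoting this aggregate bound to a clause-by-clause inequality on all $dn/3$ clauses, namely showing that $\sum_{j<k}\epsilon_j\epsilon_k P_{i_j i_k}$ concentrates at its mean uniformly over clauses. This requires quantitative convergence of per-clause local statistics of the spectral projector of $M$ to their tree limits, together with delocalization; it is also precisely what must be strengthened to verify the additional triangle inequalities of the augmented SDP.
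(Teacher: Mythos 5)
Your refutation half ($d>13.5$) is essentially the paper's argument in different clothing: your $M=BB^{\top}$ equals $d\,\Id+A_I$ for the signed primal graph $I$, your identity is the statement $\SDP\le\frac{n}{4m}\lambda_{\max}(L_I)$ (the eigenvalue bound), your interval $[d+1\pm2\sqrt{2(d-1)}]$ is exactly $[\llower^2,\lupper^2]$ from \Cref{eqn:infinite-spectra}, and the concentration you invoke is the signed Bordenave theorem combined with Ihara--Bass. One caution: the published Bordenave theorem controls only the \emph{new} eigenvalues of unsigned lifts, and without random signs $\min\nontriv(A_{X_n})=0$ with probability $1$; you need the randomly-signed variant (proved here as \Cref{thm:signed-bordenave}), so this step is real work rather than a citation. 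Also note your reduction is one-directional: $\lambda_{\min}(M)<d/3$ does not by itself produce a feasible $X$, so the second half does not actually "reduce to locating $\lambda_{\min}(M)$."

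The feasibility half ($d<13.5$) has a genuine gap. First, a simplification you missed: "the SDP does not refute" only requires $\SDP(\bI)\ge1$, which by your own identity is the \emph{aggregate} condition $\tr(XM)\le dn/3$; the clause-by-clause inequality you identify as the central obstacle is not needed for the basic SDP (it matters only for the triangle-inequality strengthening). But even the aggregate version of your construction is not salvageable as written: taking $P$ to be the spectral projector of the \emph{actual random matrix} $M$ onto an edge band, you need $P_{ii}\le(1+o(1))\rho$ for \emph{every} vertex $i$ to make $D\succeq0$ without destroying the trace budget. Benjamini--Schramm convergence and trace-moment expansions give you $\frac1n\sum_iP_{ii}$ and the behavior of \emph{typical} vertices, not a uniform bound over all $n$ vertices; uniform delocalization of eigenvectors near the spectral edge of sparse random graphs is a deep open-ended problem (of Bauerschmidt--Huang--Knowles--Yau type) and is not available for this lifted biregular model. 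The paper avoids both obstacles by never touching eigenvectors of the random matrix: it builds the Gram matrix directly as a linear factor of IID Gaussians on $\Primal_{d,c}$ (\Cref{thm:idealized-FIID,cor:signed-FIID}), a local, deterministic formula that yields $\E[\bX_v^2]=1$ exactly and $\E[\bX_u\bX_v]=\xi_{uv}\vr$ exactly on every edge whose neighborhood is tree-like, with the $O(1/n)$ fraction of bad vertices zeroed out. That construction gives the per-clause bound for free (hence also handles $\SDPtri$), with no concentration or delocalization input. To complete your proof you would either need to import such a Gaussian-wave construction, or prove the uniform delocalization and local quantum-ergodicity statements you flag --- which is a far harder route than the theorem itself.
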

Of course, since $d$ is always an integer we could have phrased the two cases as $d \leq 13$ and $d \geq 14$.  However, as will be seen below, there is a sense in which the precise non-integer $13.5$ is the sharp threshold.  In any case, these results show that for $d = 8, 9, 10, 11, 12, 13$ (and likely also $d = 7$), a random $d$-regular NAE-3SAT instance is unsatisfiable, yet this cannot be efficiently refuted using the basic SDP relaxation.\\

In fact, our results are somewhat stronger than what is stated in \Cref{thm:our-mainest}.  Let us define
\[
    f(d) = \frac98 - \frac38\cdot\frac{\parens*{\sqrt{d-1} - \sqrt{2}}^2}{d},
\]
a quantity that decreases on $[3, \infty)$, with $f(13.5) = 1$ and $\lim_{d \to \infty} f(d) = 3/4$. We show:
\begin{itemize}
     \item (See \Cref{thm:sdp-for-girth1,thm:sdp-for-girth2} for details.) Even when augmented with the triangle inequalities, the SDP ``thinks'' that a random $d$-regular NAE-3SAT instance has a solution satisfying at least an $f(d) - \eps$ fraction of the constraints; in particular, it thinks the instance is satisfiable if $d < 13.5$.  Indeed this holds for \emph{any} $d$-regular NAE-3SAT instance of sufficiently large constant girth.
     \item (See \Cref{thm:refutation-upper} for details.)  Even the basic ``eigenvalue bound'' (a special case of the SDP method) shows that a random $d$-regular NAE-3SAT instance has no solution satisfying at least an $f(d) + \eps$ fraction of the constraints; in particular, it refutes satisfiability if $d > 13.5$.
\end{itemize}

\section{Methodology, further generalizations, and related work}

\subsection{2XOR-SAT and semidefinite programming}
One reason that semidefinite programming algorithms are particularly natural for NAE-3SAT is that the CSP is essentially a form of 2XOR-SAT.  Recall that the 2XOR-SAT CSP has constraints on pairs of literals, with the constraint being satisfied if the literals are assigned unequal truth values.  Now for literals $\ell_1, \ell_2, \ell_3$:
\begin{align*}
    \text{NAE}(\ell_1,\ell_2,\ell_3) \text{ \phantom{un}satisfied} &\iff \text{exactly $2$ of } \text{XOR}(\ell_1, \ell_2), \text{ XOR}(\ell_2, \ell_3), \text{ XOR}(\ell_3, \ell_1) \text{ satisfied;} \\
    \text{NAE}(\ell_1,\ell_2,\ell_3) \text{ unsatisfied} &\iff \text{exactly $0$ of } \text{XOR}(\ell_1, \ell_2), \text{ XOR}(\ell_2, \ell_3), \text{ XOR}(\ell_3, \ell_1) \text{ satisfied.}
\end{align*}
(In case all the literals are variables appearing positively, the resulting 2XOR-SAT instance is in fact a ``Max-Cut'' instance.) If we convert an NAE-3SAT CSP with $m$ constraints to a 2XOR-SAT CSP with $3m$ constraints in the above way, every truth assignment satisfying a $\beta$ fraction of NAE-3SAT constraints satisfies a $(2/3)\beta$ fraction of 2XOR-SAT constraints.

Indeed, the standard SDP relaxation for NAE-3SAT, first studied by Kann, Lagergren, and Panconesi~\cite{KLP96}, is nothing more than $3/2$ times the basic Goemans--Williamson~\cite{GW95} SDP for the associated 2XOR-SAT instance.  We recall here the basic definitions:
\begin{definition}
    Let $I$ be an instance of 2XOR-SAT with $m$ constraints on $n$ variables, to be assigned values in $\{\pm 1\}$.  We identify the instance with its (multi)set of constraints. Each constraint is a triple $(u,v,\xi)$ for $u,v \in [n]$ distinct and $\xi \in \{\pm 1\}$; this is thought of as the constraint $x_ux_v = -\xi$. The SDP relaxation value is defined to be
    \[
        \SDP(I) = \sup\braces*{ \frac{1}{m} \sum_{(u,v,\xi) \in I} \parens*{\frac12 - \frac12 \xi \la X_u, X_v \ra}} \in [0,1],
    \]
    where the $\sup$ is over all choices of vectors $(X_v)_{v \in [n]}$ satisfying $\la X_v, X_v \ra = 1$ for all~$v$.  Equivalently, instead of vectors, the $X_v$'s may be jointly (centered) Gaussian random variables, with $\la X_u, X_v \ra$ interpreted as $\E[X_u X_v]$.  The quantity $\SDP(I)$ always upper-bounds $\OPT(I)$, the maximum fraction of simultaneously satisfiable 2XOR-SAT constraints, since for any truth assignment $x \in \{\pm 1\}^n$ we may take the joint Gaussians $X_u = x_u Z$, where $Z$ is a standard Gaussian.  The advantage of $\SDP(I)$ is that while computing $\OPT(I)$ is $\mathsf{NP}$-hard, one can compute $\SDP(I)$ (to additive accuracy $2^{-n}$) in polynomial time.
\end{definition}
\begin{definition}
    A common algorithmic technique is to also enforce the \emph{triangle inequalities}, meaning to only take the $\sup$ over $X_v$'s satisfying
    \[
        \la X_u, X_v \ra + \la X_v, X_w \ra + \la X_w, X_u \ra  \geq -1, \qquad \la X_u, X_v \ra - \la X_v, X_w \ra - \la X_w, X_u \ra  \geq -1.
    \]
    The resulting value, $\SDPtri(I)$, is a tighter relaxation: $\OPT(I) \leq \SDPtri(I) \leq \SDP(I)$.
\end{definition}
\begin{definition}
    A related quantity is the \emph{Lov\'{a}sz theta function}~\cite{Lov79}; for a graph~$G$, the Lov\'asz theta function (of its complement), $\LTheta{G}$, is the least~$k$ such that there are centered joint Gaussians $(X_u)$ with $\la X_u, X_u \ra = 1$ for all vertices~$u$ and $\la X_u, X_v \ra  = -\frac{1}{k-1}$ for all edges~$(u,v)$.  In particular, if $G$ is thought of as a Max-Cut instance, then $\SDP(G) \geq \frac12 + \frac12\frac{1}{\LTheta{G}-1}$.
\end{definition}
\begin{definition}
    The SDP for 2XOR-SAT is also known to have a \emph{dual} characterization~\cite{DP93}:
    \[
        \SDP(I) = \inf_{\substack{w \in \R^n \\ \sum_u w_u = 0}} \\  \braces*{\frac{n}{4m} \cdot \lambda_{\text{max}}(L_I + \diag(w))},
    \]
    where $L_I$ denotes the \emph{Laplacian} matrix for~$I$ (defined in \Cref{sec:matrix-notation}), and $\lambda_{\text{max}}$ denotes the largest eigenvalue.  Note that by taking $w = 0$ we get an upper bound on~$\SDP(I)$; we refer to this as the \emph{eigenvalue bound},
    \[
        \EIG(I) =  \frac{n}{4m} \cdot \lambda_{\text{max}}(L_I) =  \frac{1}{2d} \cdot\lambda_{\text{max}}(L_I),
    \]
    the latter equality holding in case $I$ is $d$-regular.  The certificate $\OPT(I) \leq \EIG(I)$ is easy to see; it is a consequence of the definitions that $\OPT(I) = \frac{n}{4m} \cdot \max\{x^\top L_I x : x \in \{\pm \frac{1}{\sqrt{n}}\}^n\}$, and $\lambda_{\text{max}}(L_I)$ allows taking the max over all unit vectors.
\end{definition}

\subsection{Methodology and related work}   \label{sec:prior}
To prove \Cref{thm:our-mainest}, we convert our random NAE3-SAT instances into random 2XOR-SAT instances, and then try to analyze whether or not the SDP-value of these instances is as large as~$\frac23$.  (Recall that this corresponds to the SDP-value of the NAE3-SAT instances being as large as~$1$.)  There are a number of prior works on analyzing the Goemans--Williamson SDP on random graphs (see below);  however, our situation is a bit different.  The main difference is that the graphs underlying our random 2XOR-SAT instances are not uniformly random $2d$-regular graphs, but rather have a peculiar ``triangle-structure''.  Recall that they are generated by first choosing a large random $(3,d)$-biregular graph (by randomly lifting $K_{d,3}$), then replacing each $3$-regular vertex on the left with a triangle on the right.  Thus, locally, the resulting graphs look like the graph on the right in \Cref{fig:infinite-graphs} (for $d = 4$).  An additional small complication is that these random ``triangle-graphs'' effectively get random edge-signings when the random literal-negations are taken into account, converting the Max-Cut instance to a 2XOR-SAT instance.  Finally, in the remainder of the paper we will focus on the generalized problem in which triangles are replaced by $c$-cliques, for $c \geq 3$.  This generalization does not correspond to any well-known CSP, but analyzing general~$c$ turns out to be no harder than analyzing the $c = 3$ special case.

For the part of our main theorem showing that the simple eigenvalue bound succeeds as $d$ becomes large, we need to show tight bounds on the eigenvalues of the random ``triangle-graphs'' (more generally, $c$-clique graphs) that arise in our model.  If we simply had random $d$-regular graphs, Friedman's famous almost-Ramanujan theorem~\cite{Fri03} would have sufficed.  Instead, we relate the eigenvalues of our random graphs to those of a randomly lifted $(c,d)$-biregular bipartite graph.  We then use Bordenave's recent reproof~\cite{Bor17} of Friedman's theorem (revised to also include random edge-signings), as well as the Ihara--Bass formula, to show that with high probability the nontrivial spectrum of such random bipartite graphs is contained in $\pm [\sqrt{d-1} - \sqrt{c-1}, \sqrt{d-1} + \sqrt{c-1}]$.  Inspiration for these computations comes from~\cite{FM16}.

For the part of our main theorem showing that large-value SDP solutions exist, the tools we use come from a fairly recent line of work concerning ``Gaussian waves'' in infinite regular graphs~\cite{Elo09,CGHV15,HV15}.  This work can be seen as giving a way to convert eigenfunctions on the infinite regular tree (and other vertex-transitive infinite graphs) into Goemans--Williamson SDP solutions --- in fact, Lov\'{a}sz theta function solutions.  These may be converted to such solutions on high-girth finite graphs that locally resemble the infinite graphs.  Several works in this area~\cite{CGHV15,HV15,Cso16,Lyo17} used this method to show, e.g., that high-girth $3$-regular graphs must contain large independent sets, using techniques resembling the randomized rounding of independent-set SDPs (cf.~\cite{KMS98}) and also local improvement techniques applicable to cubic graphs (cf.~\cite{HLZ04}).  These techniques were also used to show limits on the performance of SDP for Max-Cut, Min-Bisection, and community detection problems in, e.g.,~\cite{MS16,FM16}.  See~\cite{BKM17} for similar approaches in the context of graph-coloring, and~\cite{JMR16} for more on phase transitions for SDPs in the context of community detection.

\section{Preliminaries on graphs, lifts, and eigenvalues}

\subsection{Graphs, hypergraphs, and edge-labeled graphs}       \label{sec:graph-notation}
We begin with some general notation.

$H$ will typically denote a simple $(c,d)$-biregular bipartite graph with $c, d \geq 2$.  The setting of most interest to us is $d \geq c = 3$.  Sometimes we will refer to the vertices on the $c$-regular side as \emph{constraints} and the vertices on the $d$-regular side as \emph{variables}. \Cref{fig:k43} shows an example, $K_{4,3}$, with the variables depicted as circles and the constraints depicted as squares.

\myfig{.1}{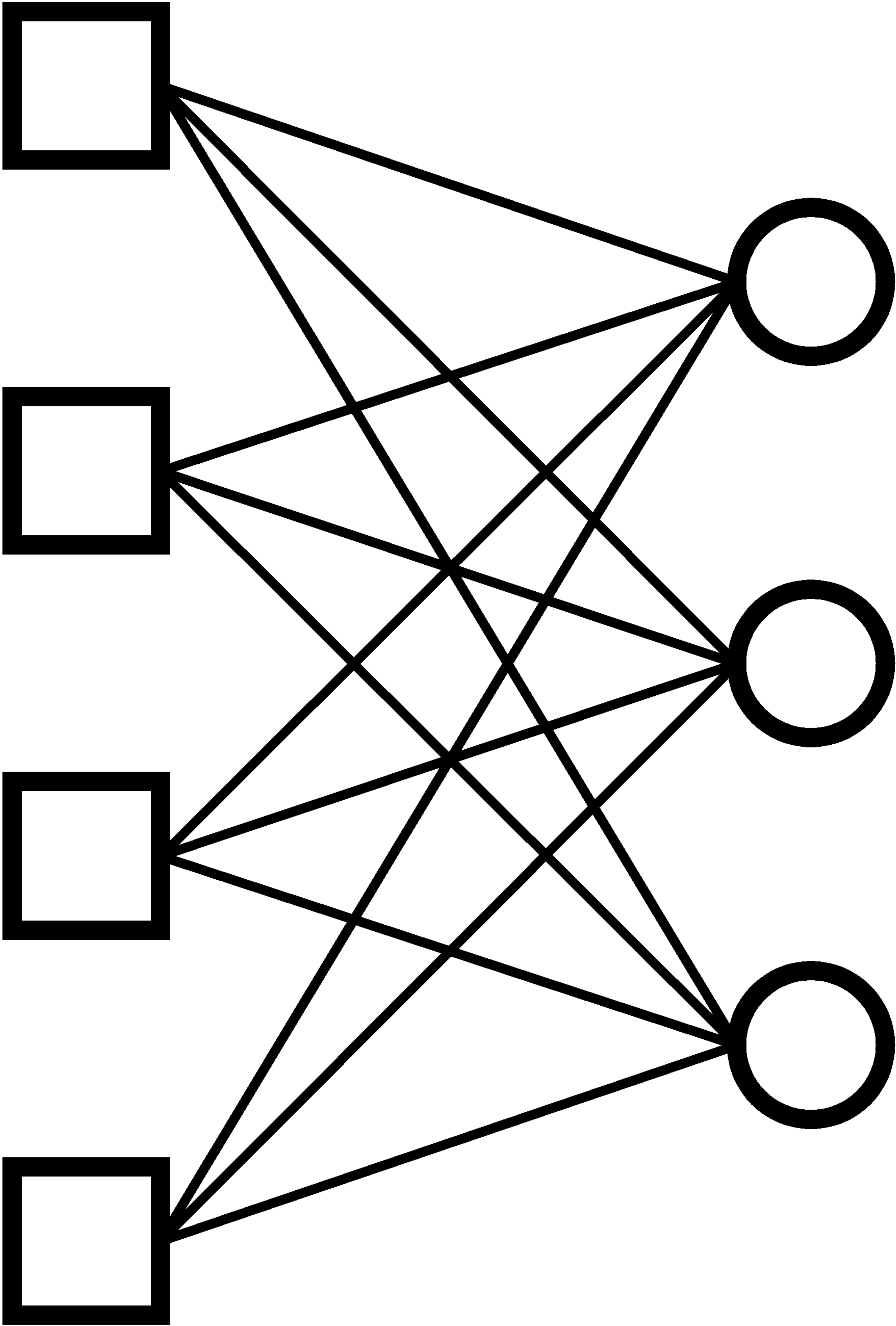}{$H = K_{4,3}$}{fig:k43}

We may also think of $H$ as a $c$-uniform $d$-regular hypergraph, with the variables as vertices and constraints as hyperedges.  $X$ will denote an edge-signed version of $H$ (thought of as a bipartite graph, not a hypergraph); i.e., one in which each edge of~$H$ is labeled with~$\pm 1$.  (In the unsigned case, we think of all edges as being labeled~$+1$.)  We say that $X$ is a ``random signing'' of $H$ if it is formed by independently labeling each edge of $H$ with~$\pm 1$, uniformly at random.

Given $H$, we will write $G = G_H$ for the (loopless multi-)graph formed by first thinking of $H$ as a hypergraph and then replacing each hyperedge by a $c$-clique.  As a result, $G$ is a $(c-1)d$-regular graph, called the \emph{primal graph} for~$H$.   Given an edge-signed version $X$ of $H$, we will write $I = I_X$ for the primal graph of~$X$, an edge-signed version of $G$ defined as follows:  whenever constraint $a$ is adjacent to variables~$i, j$ with edge-signs $\xi_{ai}, \xi_{aj} \in \{\pm 1\}$, we place the sign $\xi_{ai}\xi_{aj}$ on the resulting $\{i,j\}$ edge of~$G$.  We may think of~$I$ as a 2XOR-SAT instance, where the vertices are to be assigned values $x_i \in \{\pm 1\}$, and an edge $\{i,j\}$ with label $\xi$ corresponds to the constraint $x_ix_j = -\xi$.

In the special case of $c=3$, we can think of~$X$ as a NAE-3SAT instance, where the variables are to be assigned values $x_i \in \{\pm 1\}$, and a constraint $a$ adjacent to variables $i,j,k$ with labels $\xi_{ai}, \xi_{aj}, \xi_{ak}$ corresponds to the constraint that $\xi_{ai} x_i, \xi_{aj} x_j, \xi_{ak} x_k$ are not all equal.  In this case there is a precise relationship between the NAE-3SAT instance~$X$ and the 2XOR-SAT instance~$I$; any assignment to the vertices satisfying exactly a $\beta$ fraction of the NAE-3SAT constraints will necessarily satisfy exactly a $\frac23\beta$ fraction of the 2XOR-SAT constraints.

\subsection{Associated matrices}    \label{sec:matrix-notation}
Given any of $Y \in \{H, X, G, I\}$, we will write $A_Y$ for the adjacency matrix.  More precisely, $A_Y[i,j]$ is the sum of the (positive and negative) edge-labels on all edges connecting~$i$ and~$j$.

We will write $D_Y$ for the diagonal degree matrix of~$Y$, whose entry $D_Y[i,i]$ equals the degree of vertex~$i$. (Both signed and unsigned edges count~$1$ toward the degree.)
We write $L_Y = D_Y - A_Y$ for the Laplacian matrix of~$Y$; we also write $L_Y(u) = (1-u^2) \Id + u^2 D_Y - uA_Y$ for the ``deformed Laplacian'', parameterized by $u \in \R$, which reduces to the basic Laplacian when $u = 1$.  (Here $\Id$ denotes the identity operator.)

Finally, we will write $B_Y$ for the non-backtracking matrix of~$Y$.  Recall that this matrix is formed as follows: First, each undirected edge in $Y$ is converted to two directed edges (both having the same sign, in case $Y$ is edge-signed).   Then $B_Y$ is the square (non-symmetric) matrix indexed by the directed edges, in which $B_Y[(i,j),(k,\ell)]$ entry is nonzero if and only if $j = k$ and $i \neq \ell$, in which case it equals the sign-label of~$(i,j)$.

\subsection{Lifts}      \label{sec:lift-notation}
Suppose now that $Y = (V,E)$ denotes any undirected (multi-)graph.  For $n \in \Z^+$, an $n$-lift of $Y$ is a graph $Y_n$ whose vertex set is $V \times [n]$ and whose edges consist of a perfect matching between $\{u\} \times [n]$ and $\{v\} \times [n]$ for each edge $\{u,v\} \in E$.  When the $|E|$ perfect matchings are chosen independently and uniformly at random, we call $Y_n$ a random $n$-lift of $Y$.  Note that if $Y$ is a $d$-regular graph, then so is $Y_n$, and if $Y$ is a $(c,d)$-biregular bipartite graph, then so is $Y_n$.  If $B$ (respectively, $B_n$) denotes the non-backtracking matrix of~$Y$ (respectively, $Y_n)$, it is known that the multiset of $B_n$'s eigenvalues contains the multiset of $Y$'s eigenvalues.  The remaining eigenvalues are referred to as the ``new'' eigenvalues of~$B_n$.

\subsection{Eigenvalues}    \label{sec:eig-notation}
Given an $N$-dimensional matrix $M$, we write $\spec(M) \subset \C$ for its spectrum, the cardinality-$N$ \emph{multiset} of roots of its characteristic polynomial.  We also write $\specrad(M)$ for its spectral radius, $\max\{|\lambda| : \lambda \in \spec(M)\}$.  The adjacency matrix of a (possibly edge-signed) graph is symmetric, and hence its spectrum is real; the Laplacian is furthermore positive semidefinite, and hence its spectrum is nonnegative.  A non-backtracking matrix, however, will in general have complex spectrum.

We are particularly interested in bipartite graphs, so we record some facts concerning them here.  Suppose $X$ is a possibly edge-signed bipartite graph, with vertex parts of size $m \geq n$.  Then it is well known that
\[
    \spec(A_X) = \{0 : \text{with multiplicity } m-n\} \cup \{\pm \lambda : \lambda \in \nontriv(A_X)\}
\]
for some multiset $\nontriv(A_X) \subset \R^{\geq 0}$.\footnote{We chose ``$\nontriv$'' to stand for Positive Spectrum, notwithstanding our warning that it may contain~$0$.}   Further, if $X$ is $(c,d)$-biregular, we'll have $\nontriv(A) \subset [0, \sqrt{cd}]$.  The set $\pm \nontriv(A_X)$ may be called the ``nontrivial'' part of $A_X$'s spectrum.  A warning, though:  $\pm \nontriv(A_X)$ is not the same as the ``nonzero'' part of $A_X$'s spectrum, since $\nontriv(A_X)$ may contain~$0$ with positive multiplicity.  Indeed, this happens in one of the simplest cases, as is well known:
\begin{fact}                                        \label{fact:Kcd-A-spec}
    Let $H = K_{d,c}$, the complete bipartite graph with vertex parts of size $d \geq c$.  Then $\nontriv(A_H)$ consists of $c-1$ copies of~$0$ and $1$ copy of $\sqrt{cd}$.
\end{fact}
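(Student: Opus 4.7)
The plan is to compute the full spectrum of $A_H$ directly by exploiting the rank structure of $K_{d,c}$ and then read off $\nontriv(A_H)$ from the definition.

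First, I would write $A_H$ in block form as
\[
    A_H = \begin{pmatrix} 0 & J \\ J^\top & 0 \end{pmatrix},
\]
where $J = J_{d \times c}$ is the all-ones matrix with $d$ rows (the larger part) and $c$ columns. Squaring gives a block-diagonal matrix
\[
    A_H^2 = \begin{pmatrix} JJ^\top & 0 \\ 0 & J^\top J \end{pmatrix} = \begin{pmatrix} c\,J_{d\times d} & 0 \\ 0 & d\,J_{c\times c} \end{pmatrix},
\]
since each entry of $JJ^\top$ is the inner product of two length-$c$ all-ones rows, and similarly for $J^\top J$. Because $J_{k\times k}$ is rank one with nonzero eigenvalue $k$, the block $c J_{d\times d}$ has eigenvalue $cd$ once and $0$ with multiplicity $d-1$, and similarly $d J_{c\times c}$ has eigenvalue $cd$ once and $0$ with multiplicity $c-1$. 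Hence $\spec(A_H^2) = \{cd, cd, 0, 0, \dots, 0\}$ with $d+c-2$ zeros.

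Next I would use that $A_H$ is the adjacency matrix of a bipartite graph, so its spectrum is symmetric about $0$ and $|\lambda|$ is an eigenvalue of $A_H$ exactly when $\lambda^2$ is an eigenvalue of $A_H^2$ (with matching multiplicities, obtained from the block-antidiagonal structure via singular values of $J$). This gives
\[
    \spec(A_H) = \{+\sqrt{cd},\, -\sqrt{cd}\} \cup \{0 \text{ with multiplicity } d+c-2\}.
\]

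Finally I would match this against the definition of $\nontriv$. Writing $\spec(A_H) = \{0 \text{ (multiplicity } m-n)\} \cup \{\pm\lambda : \lambda \in \nontriv(A_H)\}$ with $m = d$, $n = c$, we strip off $d - c$ zeros as trivial, leaving $\{+\sqrt{cd}, -\sqrt{cd}\}$ together with $(d+c-2)-(d-c) = 2(c-1)$ remaining zeros. These remaining eigenvalues are already symmetric about $0$, so $\nontriv(A_H)$ consists of $c-1$ copies of $0$ together with one copy of $\sqrt{cd}$, as desired.

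There is no real obstacle here — the only thing to be careful about is the bookkeeping between multiplicities of $0$ inside and outside $\nontriv$, since the definition absorbs exactly $d-c$ zeros as trivial and keeps the rest as a symmetric $\pm 0$ pattern contributing $c-1$ zeros to $\nontriv(A_H)$.
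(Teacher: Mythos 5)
Your proof is correct. The paper states \Cref{fact:Kcd-A-spec} without proof as a well-known fact, and your argument --- computing $A_H^2$ via the rank-one structure of the all-ones block $J$, using bipartite symmetry of the spectrum to recover $\spec(A_H)=\{\pm\sqrt{cd}\}\cup\{0\text{ with multiplicity }d+c-2\}$, and then carefully splitting the zeros between the $d-c$ trivial ones and the $c-1$ copies retained in $\nontriv(A_H)$ --- is exactly the standard derivation the paper is implicitly relying on, including the multiplicity bookkeeping that the paper itself flags as the one subtle point.
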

We also record below the spectrum of the non-backtracking matrix of $K_{d,c}$, which we'll derive in \Cref{sec:ihara--bass} using the Ihara--Bass formula.  But first, some notation we'll use heavily in this paper:
\begin{notation}    \label{not:cd}
    For $c, d \geq 2$, we write
    \[
        \csq = \sqrt{c-1}, \quad \dsq = \sqrt{d-1}, \quad \rho_1 = \csq \dsq, \quad \lupper = \dsq + \csq, \quad \llower = |\dsq - \csq|, \quad \DEG = (c-1)d = \rho_1^2 + \csq^2.
    \]
    We will often assume $d \geq c$, in which case $\llower = \dsq - \csq$.
\end{notation}
\begin{proposition}                                        \label{prop:Kcd-B-spec}
    Let $B$ be the non-backtracking matrix of $K_{d,c}$, where $d \geq c \geq 2$, $d \neq 2$. Let $i$ be the fourth primitive root of unity. Then
    \[
        \spec(B) = \begin{cases}
                                                       \pm 1& \text{with multiplicity $(c-1)(d-1)$ each;}\\
                                                       \pm i\csq & \text{with multiplicity $(d-1)$ each;}\\
                                                       \pm i\dsq & \text{with multiplicity $(c-1)$ each;}\\
                                                       \pm \csq\dsq & \text{with multiplicity $1$ each;}\\
                                                   \end{cases}\\
        \qquad  \text{and hence, } \specrad(B) = \csq \dsq = \rho_1.
    \]
\end{proposition}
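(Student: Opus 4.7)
The plan is to compute the characteristic polynomial of $B$ directly via the Ihara--Bass formula, which for any graph $Y$ with $|V|$ vertices and $|E|$ edges reads
\[
    \det(\Id - zB_Y) = (1-z^2)^{|E|-|V|}\, \det\bigl((1-z^2)\Id + z^2 D_Y - z A_Y\bigr) = (1-z^2)^{|E|-|V|} \det(L_Y(z)).
\]
For $Y = K_{d,c}$ we have $|V| = c+d$ and $|E| = cd$, so the left-hand side is a degree-$2cd$ polynomial in $z$, whose reciprocal roots are exactly the nonzero eigenvalues of $B$ (with matching multiplicities). The factor $(1-z^2)^{cd-c-d}$ contributes the roots $z = \pm 1$ with multiplicity $cd - c - d$ each. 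It remains to analyse the determinant of the deformed Laplacian $L(z) := (1-z^2)\Id + z^2 D - zA$.

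Using the biregular structure, the adjacency matrix of $K_{d,c}$ has the off-diagonal block form $A = \bigl(\begin{smallmatrix} 0 & J \\ J^\top & 0 \end{smallmatrix}\bigr)$ with $J$ the $c \times d$ all-ones matrix, and the diagonal degree matrix acts as $d$ on the $c$-side and as $c$ on the $d$-side. The space $\R^c \oplus \R^d$ decomposes into three $L(z)$-invariant subspaces:
\begin{itemize}
    \item $\ker(J^\top) \oplus \{0\}$, of dimension $c-1$: here $L(z)$ acts as multiplication by $1 + (d-1)z^2$, which vanishes at $z = \pm i/\dsq$;
    \item $\{0\} \oplus \ker(J)$, of dimension $d-1$: here $L(z)$ acts as multiplication by $1 + (c-1)z^2$, vanishing at $z = \pm i/\csq$;
    \item the two-dimensional span of $(\mathbf{1}_c,0)$ and $(0,\mathbf{1}_d)$, on which $L(z)$ restricts to a $2\times 2$ matrix whose determinant equals the quartic $(c-1)(d-1) z^4 - \bigl((c-1)(d-1)+1\bigr) z^2 + 1$.
\end{itemize}
This quartic factors cleanly as $\bigl((c-1)(d-1) z^2 - 1\bigr)\bigl(z^2 - 1\bigr)$, contributing the simple roots $z = \pm 1$ and $z = \pm 1/\rho_1$. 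Under the standing hypothesis $d \geq c \geq 2$ with $d \neq 2$ one has $(c-1)(d-1) \geq 2$, so the four roots of the quartic are distinct and each kernel dimension is strictly positive.

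Collecting multiplicities, the zeros of $\det(\Id - zB)$ occur at $z = \pm 1$ with multiplicity $(cd - c - d) + 1 = (c-1)(d-1)$ each, at $z = \pm i/\dsq$ with multiplicity $c-1$ each, at $z = \pm i/\csq$ with multiplicity $d-1$ each, and at $z = \pm 1/\rho_1$ with multiplicity $1$ each. Substituting $\mu = 1/z$ (and observing that the $\pm$ sign is preserved in pairs because $1/(\pm i a) = \mp i/a$) yields exactly the claimed list for $\spec(B)$. A final sanity check: the multiplicities total $2(c-1)(d-1) + 2(c-1) + 2(d-1) + 2 = 2cd = 2|E|$, so $B$ has no zero eigenvalues and the enumeration is complete; the spectral radius is then $\rho_1 = \csq\dsq$ since $\rho_1 \geq 1, \csq, \dsq$ under the hypothesis. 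The only subtlety I anticipate is bookkeeping the multiplicities at $z = \pm 1$, where contributions arise from both factors of the Ihara--Bass formula; this is handled by the explicit factorization of the quartic above.
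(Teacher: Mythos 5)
Your proof is correct, and it takes a somewhat more self-contained route than the paper does. The paper obtains \Cref{prop:Kcd-B-spec} by chaining three ingredients: the general biregular Ihara--Bass eigenvalue correspondence (\Cref{thm:ihara-bass2}, quoted from Kempton), the known adjacency spectrum of $K_{d,c}$ (\Cref{fact:Kcd-A-spec}), and the explicit roots of the quartic $p_\lambda$ at $\lambda = 0$ and $\lambda = \sqrt{cd}$ (\Cref{prop:lambda-cd}). You instead start from the raw determinant identity (\Cref{thm:ihara-bass1}) and compute $\det(L(z))$ for $K_{d,c}$ by hand via the block decomposition $\ker(J^\top) \oplus \ker(J) \oplus \mathrm{span}\{(\mathbf{1}_c,0),(0,\mathbf{1}_d)\}$ --- which is precisely the eigendecomposition of $A_{K_{d,c}}$, so in effect you re-derive the relevant special case of \Cref{thm:ihara-bass2} rather than citing it. The bookkeeping all checks out: the restricted determinants $1+(d-1)z^2$, $1+(c-1)z^2$, and $(\rho_1^2 z^2 - 1)(z^2-1)$ are right, the multiplicity $(cd-c-d)+1 = (c-1)(d-1)$ at $z = \pm 1$ correctly merges the two sources, and the total count $2cd$ confirms that $B$ has no zero eigenvalues (a point worth making explicit, since Ihara--Bass only sees the nonzero spectrum, and you do make it). What your approach buys is independence from the external references and from \Cref{fact:Kcd-A-spec}; what the paper's buys is economy, since \Cref{thm:ihara-bass2} and the root analysis of $p_\lambda$ are needed anyway for the random-lift argument in \Cref{thm:random-signed-lift-eigenvalues}.
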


As described in \Cref{sec:graph-notation}, we will often consider forming the primal graph~$G$ of a $(c,d)$-biregular graph~$H$.  It is simple to work out the relationship between the eigenvalues of~$H$ and the eigenvalues of~$G$; this is done in, e.g.,~\cite[Section~4.1]{LS96}.  The analysis is unchanged for the edge-signed variant, and it yields:
\begin{proposition}                                     \label{prop:triangle-replace-eigs}
    Let $X$ be an edge-signed $(c,d)$-biregular graph, and let $I = I_X$ be the corresponding edge-signed primal graph.  Then
    \[
        \Spec(A_I) = \{\lambda^2 - d : \lambda \in \nontriv(A_X)\}.
    \]
    Since $I$ is $\DEG$-regular, where $\DEG = cd - d$, we can also conclude that
    \[
        \Spec(L_I) = \{cd - \lambda^2 : \lambda \in \nontriv(A_X)\}.
    \]
\end{proposition}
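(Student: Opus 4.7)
The plan is to exhibit a matrix identity relating $A_I$ to $A_X$, from which both spectral statements follow immediately. Label the constraints of $X$ by a set $C$ and the variables by a set $V$, with $|C| \geq |V|$ since $d \geq c$. Let $M$ be the $|C| \times |V|$ signed biadjacency matrix of $X$: $M[a,i] = \xi_{ai}$ if constraint $a$ is incident to variable $i$, and $0$ otherwise. Then
\[
A_X = \begin{pmatrix} 0 & M \\ M^\top & 0 \end{pmatrix}.
\]

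The first step is to compute $M^\top M$ directly from the definition. For $i \neq j$,
\[
(M^\top M)[i,j] = \sum_{a \in C} \xi_{ai}\xi_{aj} = \sum_{a : i,j \in N(a)} \xi_{ai}\xi_{aj},
\]
which is exactly the signed multi-edge weight $A_I[i,j]$ given by the clique-replacement rule (contributions come from every constraint containing both $i$ and $j$, each giving the product of the two signs). For $i=j$ we get $(M^\top M)[i,i] = \sum_{a \ni i} \xi_{ai}^2 = d$ by $d$-regularity of the variable side. Therefore
\[
A_I = M^\top M - d\,\Id.
\]

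Next, relate $\spec(M^\top M)$ to $\nontriv(A_X)$. The nonzero eigenvalues of a block-antidiagonal symmetric matrix come in $\pm$ pairs equal to the nonzero singular values of the off-diagonal block, and the extra eigenvalues are zeros. By the convention in \Cref{sec:eig-notation}, $\nontriv(A_X)$ is the multiset (of size $|V|$) of singular values of $M$ (including any zero singular values, the non-trivial zeros that remain after stripping off the $|C|-|V|$ forced ones). Since the eigenvalues of $M^\top M$ are precisely the squared singular values of $M$, we obtain
\[
\spec(M^\top M) = \{\lambda^2 : \lambda \in \nontriv(A_X)\}.
\]
Shifting by $-d\,\Id$ gives the first equation $\spec(A_I) = \{\lambda^2 - d : \lambda \in \nontriv(A_X)\}$.

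Finally, since $I$ is $\DEG$-regular with $\DEG = (c-1)d$, we have $L_I = \DEG\,\Id - A_I = cd\,\Id - d\,\Id - A_I = cd\,\Id - M^\top M$, which yields $\spec(L_I) = \{cd - \lambda^2 : \lambda \in \nontriv(A_X)\}$. There is no real obstacle here; the only point requiring a little care is the bookkeeping between the $|C|-|V|$ trivial zeros in $\spec(A_X)$ and the definition of $\nontriv(A_X)$, so that the multisets on both sides of the claimed equalities have the correct cardinality $|V|$.
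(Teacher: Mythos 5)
Your proof is correct, and it is essentially the argument the paper defers to by citing~\cite[Section~4.1]{LS96}: the identity $A_I = M^\top M - d\,\Id$ for the signed biadjacency block $M$, combined with the fact that $\spec(M^\top M)$ is the multiset of squared singular values of $M$, i.e.\ $\{\lambda^2 : \lambda \in \nontriv(A_X)\}$. Your explicit bookkeeping of the $|C|-|V|$ forced zeros versus $\nontriv(A_X)$, and the observation that the computation is unaffected by the signs since $\xi_{ai}^2 = 1$ on the diagonal, is exactly what is needed to justify the paper's remark that ``the analysis is unchanged for the edge-signed variant.''
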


\subsection{The infinite biregular tree and distance-regular graph}    \label{sec:infinite-bireg-notation}
Since a large random $(c,d)$-biregular graph looks locally like a tree, we will want to study the infinite $(c,d)$-biregular tree, which we denoted by $\Tree_{d,c}$.  More to the point, we will want to study its (infinite) primal graph, which we denote by $\Primal_{d,c}$. Fragments of these graphs, in the case $c = 3$, $d = 4$, are pictured in \Cref{fig:infinite-graphs}.
\begin{figure}[H]
    \centering
    \includegraphics[width=.3\textwidth]{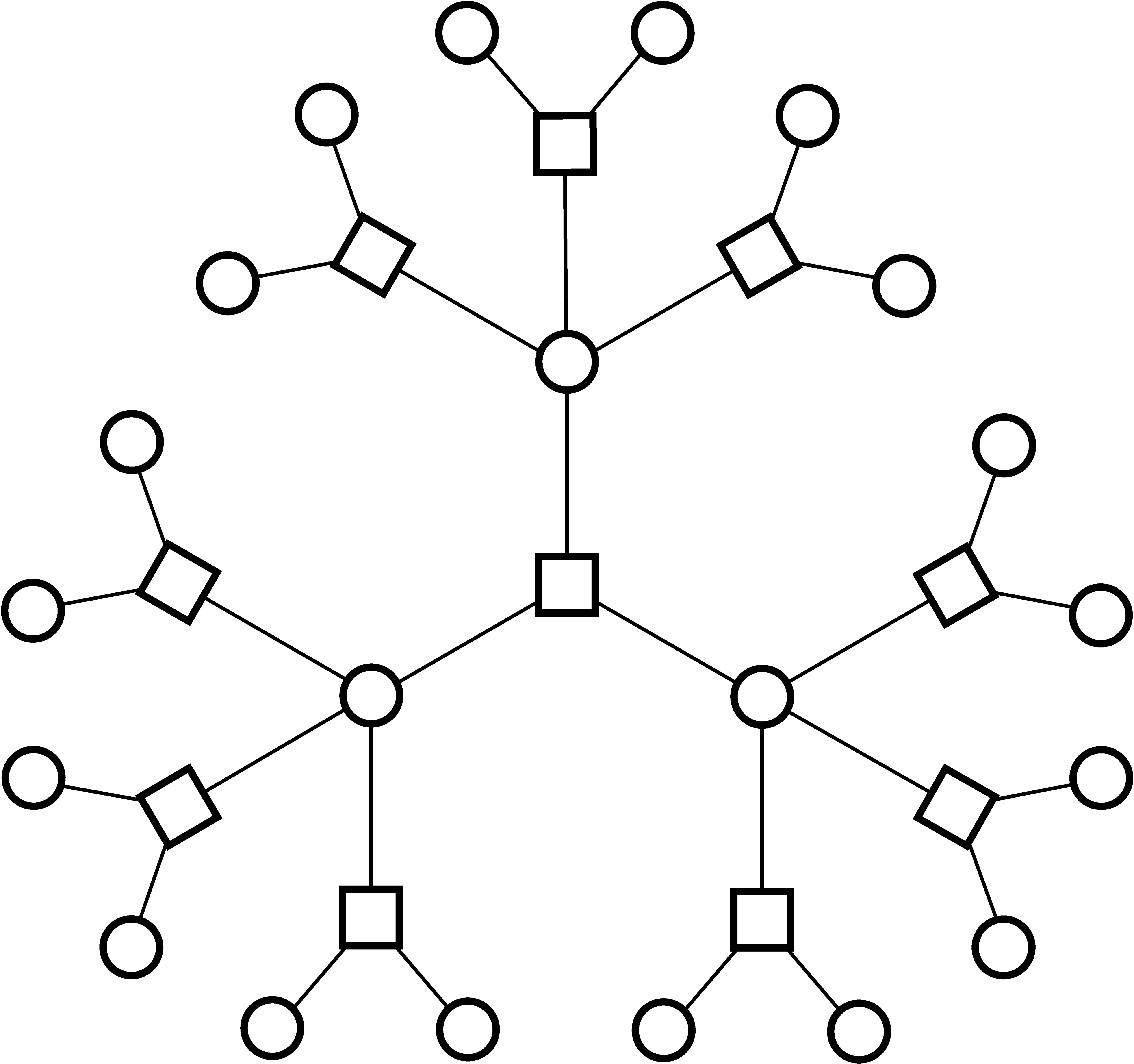}
    \qquad \qquad
    \includegraphics[width=.3\textwidth]{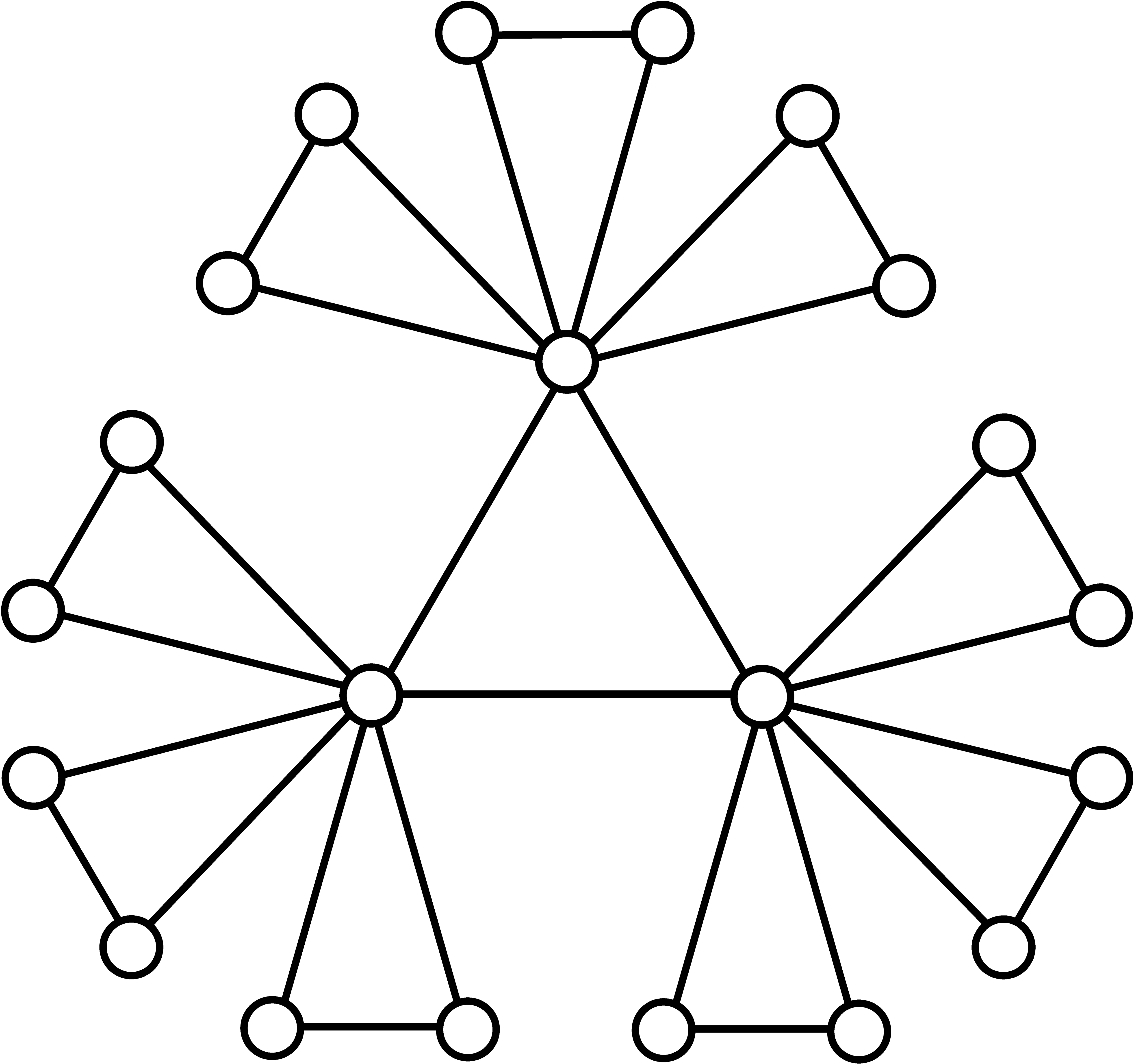}
    \caption{Fragments of the infinite biregular tree $\Tree_{4,3}$, and its primal graph $\Primal_{4,3}$}
    \label{fig:infinite-graphs}
\end{figure}
As shown by Ivanov~\cite{Iva83}, the graphs $\Primal_{d,c}$ are precisely the infinite graphs~$G$ that are \emph{distance-regular}, meaning that there exist constants~$p^h_{j,k}$ such that for every pair $u,v \in V(G)$ with $\dist_G(u,v) = h$, the number of vertices $w \in V(G)$ having $\dist_G(w,u) = j$ and $\dist_G(w,v) = k$ is equal to~$p^h_{j,k}$.  It is elementary to compute these quantities for~$\Primal_{d,c}$, and the results appears below. Only the cases $h = 0, 1$ are truly essential for the paper, and the reader might like to verify them while referring to \Cref{fig:infinite-graphs}.
\begin{proposition}                                     \label{prop:pijk}
    In the distance-regular graph $\Primal_{d,c}$, recalling the notation
    \[
        \csq^2 - 1 = c-2, \quad \rho_1^2 = (c-1)(d-1), \quad \rho_1^2 + \csq^2 = \DEG = (c-1)d, \quad \rho_1^2 - \csq^2 = (c-1)(d-2),
    \]
    we have
    \[
        p^0_{\ell,\ell} = \begin{cases}
                                        1                                                             & \text{if }\ell = 0\\
                                        (\rho_1^2 + \csq^2)\rho_1^{2(\ell-1)}    & \text{if }\ell \geq 1;
                                    \end{cases}
    \]
    and, for $h \geq 1$, $0 \leq t \leq h$,
    \begin{align*}
        \text{if $h$ and $t$ have the same parity:} \quad  p^{h}_{\ell, \ell + t} = p^{h}_{\ell + t, \ell} &=
                                    \begin{cases}
                                        0                                                             & \text{if }\ell < \frac{h-t}{2}\\
                                        1                                                             & \text{if }\ell = \frac{h-t}{2}\\
                                        \rho_1^{2\ell} & \text{if }\ell > \frac{h-t}{2} \text{ and } t = h \\
                                        (\rho_1^2 - \csq^2)\rho_1^{2(\ell-(\frac{h-t+2}{2}))}    & \text{if }\ell > \frac{h-t}{2} \text{ and } t \neq h;
                                    \end{cases} \\
        \text{if $h$ and $t$ have opposite parity:} \quad   p^{h}_{\ell, \ell + t} = p^{h}_{\ell + t, \ell} &=
                                    \begin{cases}
                                        0                                                             & \text{if }\ell < \frac{h-t+1}{2}\\
                                        (\csq^2-1)\rho_1^{2(\ell-\frac{h-t+1}{2})}    & \text{if }\ell \geq \frac{h-t+1}{2};
                                    \end{cases}
    \end{align*}
    and finally, $p^h_{j,k} = 0$ otherwise.
\end{proposition}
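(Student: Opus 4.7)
The plan is to reduce the count to a routine alternating branching enumeration on the infinite biregular tree $\Tree_{d,c}$. The key structural observation is that the vertices of $\Primal_{d,c}$ are exactly the variable (degree-$d$) vertices of $\Tree_{d,c}$, and for any two such vertices $u,v$ one has $\vdist_{\Primal}(u,v)=\tfrac12\vdist_{\Tree}(u,v)$, since every edge of $\Primal_{d,c}$ corresponds to a variable--constraint--variable arc of $\Tree_{d,c}$. In particular, for $u,v$ at primal-distance $h\geq 1$ there is a unique tree-geodesic $u=u_0,a_1,u_1,\ldots,a_h,u_h=v$ of length $2h$, alternating variables $u_m$ and constraints $a_m$. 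The case $h=0$ is then simply a ball-count in the tree: $p^0_{0,0}=1$, and for $\ell\geq 1$ the number of variables at tree-distance $2\ell$ from a fixed variable is $d(c-1)((d-1)(c-1))^{\ell-1}=(\rho_1^2+\csq^2)\rho_1^{2(\ell-1)}$.

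For $h\geq 1$, I would parameterize any variable $w$ by its closest geodesic point $p(w)$ and $t:=\vdist_{\Tree}(w,p(w))$. If $p(w)=u_m$ (so $t=2r$ is even), a direct calculation gives $\vdist_{\Primal}(w,u)=m+r$ and $\vdist_{\Primal}(w,v)=(h-m)+r$; if $p(w)=a_m$ (so $t=2r+1$ is odd), the distances are $m+r$ and $(h-m+1)+r$. In either case $(m,r)$ is uniquely determined by the target distances $\ell$ and $\ell+t$, which also fixes the parity of $t$ relative to $h$. Counting the admissible $w$ for each choice of $(m,r)$ then amounts to counting variable vertices at the specified tree-distance that lie in the subtree hanging off $p(w)$: at each subsequent step one has $c-1$ branches at a constraint and $d-1$ branches at a variable, and the only subtlety is the first step, which must avoid the geodesic direction(s) at $p(w)$.

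In the same-parity case, a mid-path projection $0<m<h$ has two forbidden directions at $p(w)=u_m$, giving $(d-2)(c-1)\rho_1^{2(r-1)}=(\rho_1^2-\csq^2)\rho_1^{2(\ell-(h-t+2)/2)}$ for $r\geq 1$; an endpoint projection $m\in\{0,h\}$ (equivalent to $t=h$) has only one forbidden direction and yields $\rho_1^{2\ell}$; and the degenerate case $r=0$ gives the single vertex $w=u_m$, producing the entry $1$ when $\ell=(h-t)/2$. In the opposite-parity case, the first step from $a_m$ must avoid $u_{m-1}$ and $u_m$, leaving $c-2$ admissible variable neighbors and then a free alternating count, which evaluates to $(c-2)\rho_1^{2r}=(\csq^2-1)\rho_1^{2(\ell-(h-t+1)/2)}$. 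All remaining quadruples $(h,j,k,\ell)$ are excluded by the parity or triangle-inequality constraints imposed by the tree geometry, giving $p^h_{j,k}=0$. The main obstacle is purely bookkeeping: verifying that the translation between $(m,r)$ and $(h,t,\ell)$ is correct and that each boundary case---$r=0$, endpoint projection, $h=0$---maps to the right clause of the piecewise formula; the underlying counts themselves are trivial.
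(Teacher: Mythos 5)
Your argument is correct: the paper itself gives no proof of this proposition (it declares the computation elementary and invites the reader to verify only the $h=0,1$ cases from the figure), and your derivation---identifying $\Primal_{d,c}$-distance with half the $\Tree_{d,c}$-distance, projecting $w$ onto the tree geodesic, and counting alternating branches with the appropriate forbidden first steps at interior variables ($d-2$ choices), endpoint variables ($d-1$), and constraints ($c-2$)---is exactly the standard way to carry it out. I checked your translation between $(m,r)$ and $(\ell,t)$ in all the boundary cases ($r=0$, endpoint projection $t=h$, and the vanishing/parity conditions), and each one lands on the correct clause of the stated piecewise formula.
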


The spectrum of the adjacency ``matrix'' (operator) of $\Primal_{d,c}$ --- and indeed, the whole ``spectral measure'' --- has been known since the early '80s.  (There are appropriate definitions for these terms, generalizing the definitions in the finitary case.  We will not give them here since, strictly speaking, this paper does not rely on them.)  In particular,
\begin{equation}    \label{eqn:infinite-spectra}
    \spec(A_{\Tree_{d,c}}) = \{0\} \cup \pm [\llower, \lupper], \qquad \spec(A_{\Primal_{d,c}}) = [\llower^2 - d, \lupper^2 - d];
\end{equation}
(the latter holding under the assumption $d \geq c$; if $d < c$ then also $-d \in \spec(A_{\Primal_{d,c}})$).  The history of these results can be found in \cite[Section~7E]{MW89} and~\cite[Section~5.2]{GM88}, the latter of which also shows that the spectral measures of large random $(c,d)$-biregular graphs converge to a measure with support $\spec(A_{\Tree_{d,c}})$ (and similarly for their primal graphs and $\spec(A_{\Primal_{d,c}})$).

\section{Eigenvalues of random lifts and signings}

Generalizing Friedman's celebrated characterization of the spectrum of random $d$-regular random graphs \cite{Fri03}, Bordenave recently proved the following theorem:
\begin{theorem} (\cite[Theorem 20]{Bor17}.)                                    \label{thm:bordenave}
    Let $Y$ be a connected multigraph (with more edges than vertices) having non-backtracking matrix $B$.  Fix $\eps > 0$.  Let $\bY_n$ be a random $n$-lift of $Y$, and let $\bB_n$ be its non-backtracking matrix.  Then
    \[
        \Pr[\text{$\bB_n$ has a \emph{new} eigenvalue of magnitude} \geq \sqrt{\specrad(B)} + \eps] = o_{n \to \infty}(1).
    \]
\end{theorem}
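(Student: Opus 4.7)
The plan is to prove this via a high-moment / trace method applied to the non-backtracking operator, exploiting the fact that powers $B_Y^k$ count non-backtracking closed walks of length $k$ in $Y$. The first step is to separate old and new spectrum: since every eigenvector of $B$ naturally lifts to an eigenvector of $\bB_n$ (by tensoring with the all-ones vector on the fibers), the multiset $\spec(B)$ is inherited as the ``old'' eigenvalues. Let $\Pi$ be the orthogonal projector onto the complement of this inherited subspace; it suffices to show that with probability $1-o(1)$, $\specrad(\Pi \bB_n \Pi) \le \sqrt{\specrad(B)} + \eps$. By Markov's inequality, this will follow once we control $\E\bigl[\mathrm{tr}\bigl((\Pi \bB_n^{\ell})(\Pi \bB_n^{\ell})^{*}\bigr)\bigr]$ for $\ell$ growing like a sufficiently large constant times $\log n$, and then take $2\ell$-th roots.

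The second step is to expand this trace combinatorially. The off-diagonal entries of $(\bB_n^\ell)(\bB_n^\ell)^*$ are indexed by pairs $(\gamma_1,\gamma_2)$ of non-backtracking walks of length $\ell$ in $\bY_n$ sharing endpoints, with a sign attached coming from the (lifted) signing. Projecting $(\gamma_1,\gamma_2)$ down to $Y$ gives a pair of non-backtracking walks with common endpoints, and the expectation over the random matchings factors along the edges of the projected walk. One groups pairs by their homomorphism type, so that the expected contribution of each type is a purely structural quantity depending only on the \emph{excess} $\mathrm{ex}(\sigma) := |E(\sigma)| - |V(\sigma)| + 1$ of the projected subgraph $\sigma \subseteq Y$. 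The projection $\Pi$ removes precisely the ``degenerate'' contribution coming from the inherited eigenvectors, leaving only terms from walks whose projection is not an inherited shape.

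The third step is to separate walks by excess. \emph{Tree-like} walks ($\mathrm{ex} = 0$, i.e.\ walks whose edge-trace is a tree in $Y$) contribute zero to $\Pi$-truncated expectations by the non-backtracking property once signings and matchings average out. Walks of excess exactly one contribute the dominant $\specrad(B)^{2\ell}$ term, but divided by $n$, giving only a single-eigenvalue worth of mass; this is where the crucial \emph{square-root} saving appears. All walks of excess $\ge 2$ — the ``tangles'' — must be shown to contribute $o(1)$ after the $2\ell$-th root. This last estimate, bounding the number of tangles via a switching / path-enumeration argument (Bordenave's Lemmas 13–17), is the main obstacle: a naive union bound over tangle shapes loses exponentially in $\ell$, so one must carefully enumerate tangles by a canonical decomposition into a core graph plus dangling non-backtracking paths, track the resulting factors of $n$ carefully (each ``extra'' edge beyond a tree costing a factor $1/n$ in lift probability), and verify that after summing over all shapes of length up to $C\log n$, the total tangle contribution is of smaller order than $\specrad(B)^{\ell}$.

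Finally, combining the tree-like and low-excess contributions with the tangle bound yields $\E\bigl[\mathrm{tr}(\Pi\bB_n^{\ell}\Pi (\Pi\bB_n^{\ell}\Pi)^{*})\bigr] \le \specrad(B)^{\ell} \cdot \mathrm{poly}(\ell)$, and Markov plus $\ell = \Theta(\log n)$ gives $\specrad(\Pi\bB_n\Pi) \le \sqrt{\specrad(B)} + \eps$ whp. I expect the genuine technical difficulty to lie entirely in the tangle enumeration of step three; the algebraic/orthogonality setup of steps one and two is essentially bookkeeping, whereas controlling tangled walks is what forces the appearance of $\sqrt{\specrad(B)}$ rather than $\specrad(B)$ and is the part that breaks in naive moment-method attempts.
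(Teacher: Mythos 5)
First, a point of orientation: the paper does not prove \Cref{thm:bordenave} at all --- it is quoted from \cite{Bor17} --- and what \Cref{sec:bordenave} actually proves is the randomly \emph{signed} variant \Cref{thm:signed-bordenave}, which the authors note is strictly easier. Your outline is Bordenave's strategy in broad strokes (trace method on $\bB_n^{\ell}(\bB_n^{\ell})^{*}$ with $\ell=\Theta(\log n)$, expectation over the matchings factoring edge-by-edge with cost $n^{-|E_\gamma|}$, enumeration of walk shapes by excess, separate treatment of tangles), so the architecture is the right one.

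The genuine gap is in your step three, and it sits exactly where the signed and unsigned problems diverge. You assert that tree-like walks ``contribute zero to $\Pi$-truncated expectations \dots once signings and matchings average out,'' and in step two you speak of ``a sign attached coming from the (lifted) signing'' --- but there are no signings in this statement. Over the random matchings alone, the expectation of $\prod_s \Ind[\sigma_{\be}(i_{2s-1})=i_{2s}]$ is $\approx n^{-|E_\gamma|}>0$ for \emph{every} consistent walk, tree-like or not; nothing cancels. Inserting the projector $\Pi$ does not kill these terms termwise either, since $\Pi$ does not commute with the individual matching matrices; to realize the cancellation one must center them ($\underline{M}_{\be} = M_{\be} - \E M_{\be}$) and then control the error from replacing powers of $B_n\Pi$ by products of centered matrices, which produces the telescoping decomposition that is the genuinely delicate part of \cite[Theorem 20]{Bor17}. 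Your plan silently assumes this cancellation is free --- which is precisely the simplification the paper purchases by adding random edge-signs in \Cref{thm:signed-bordenave}, where odd-multiplicity edges really do average to zero (see \Cref{prop:onewalk}). A secondary inaccuracy: the $\sqrt{\specrad(B)}$ does not come from excess-one walks ``divided by $n$''; it comes from the fact that surviving walks traverse each distinct edge at least twice, so $|E_\gamma|$ is at most half the walk length, combined with the count of canonical shapes on $s$ vertices being $\specrad(B)^{s}$ times polylogarithmic factors (cf.\ \Cref{lem:bord-bound-cW}); taking the $2\ell m$-th root of $\specrad(B)^{(\ell+2)m}$ is what yields the square root. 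The first issue is a real missing ingredient: without the centering/telescoping step the argument does not go through for unsigned lifts.
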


We will need a variant of this theorem in which the graph is randomly lifted and then randomly signed. The statement and proof are actually a little bit simpler.
\begin{theorem}                                   \label{thm:signed-bordenave}
    Let $Y$ be a connected graph (with more edges than vertices) having non-backtracking matrix $B$.  Fix $\eps > 0$.  Let $\bX_n$ be a random signing of a random $n$-lift $\bY_n$ of $Y$, and let $\bB_n$ denote the  non-backtracking matrix of $\bX_n$.   Then
    \[
        \Pr[\specrad(\bB_n) \geq \sqrt{\specrad(B)} + \eps] = o_{n \to \infty}(1).
    \]
\end{theorem}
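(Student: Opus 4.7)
My strategy is to adapt Bordenave's trace-method proof of \Cref{thm:bordenave} \cite{Bor17} and to use the random signings to obtain a substantial simplification. The starting point is the elementary inequality
\[
    \specrad(\bB_n)^{2\ell} \;\leq\; \|\bB_n^{\ell}\|_{\mathrm{op}}^2 \;\leq\; \|\bB_n^{\ell}\|_F^2 \;=\; \mathrm{tr}\bigl((\bB_n^{\ell})^{*}\bB_n^{\ell}\bigr),
\]
valid for every integer $\ell \geq 1$. Expanding the Frobenius norm combinatorially, $\|\bB_n^\ell\|_F^2$ becomes a sum over ordered pairs $(w_1, w_2)$ of length-$\ell$ non-backtracking walks in $\bY_n$ that share their starting directed edge and their ending directed edge, each pair weighted by the product of the $2\ell$ edge-signs encountered along $w_1$ and $w_2$. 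Because the signs are i.i.d.\ uniform $\pm 1$, taking the expectation over the signing (with $\bY_n$ frozen) kills every term in which any edge of $\bY_n$ is traversed an odd total number of times; the surviving ``even'' walk pairs contribute $+1$. This is the principal gain over the unsigned setting of \Cref{thm:bordenave}: a substantial cancellation is performed automatically by the random signs, and in particular the eigenvalues of $B$ itself --- which survive as ``trivial'' eigenvalues of $\bB_n$ in the unsigned case --- are erased by the signing.

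\textbf{Shape counting in the random lift.} It remains to control $\E_{\bY_n}[N_\ell(\bY_n)]$, where $N_\ell(\bY_n)$ is the count of even walk pairs. Following the standard approach, I would classify each pair by its quotient \emph{shape} --- the isomorphism class of the subgraph of $\bY_n$ actually visited together with its walk-structure --- and invoke the classical first-moment estimate for uniform random $n$-lifts: a shape with $V$ vertices and $E$ edges is realized in $\bY_n$ in expectation of order $n^{V - E}$ times per projected occurrence in $Y$. The even-multiplicity constraint forces every edge of the shape to be traversed at least twice, so $E \leq \ell$; the extremal ``tree-doubled'' case $E = \ell$ corresponds to $w_1 = w_2$ traversing a tree of $\ell$ edges, giving $V = E + 1$ and a leading factor $n^1$. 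In this regime the contribution reduces to the count of length-$\ell$ non-backtracking walks in $Y$, which is governed by $\|B^\ell\|_{\mathrm{op}} \leq \specrad(B)^\ell$ up to a polynomial-in-$\ell$ factor from the dimensions of~$Y$. Non-tree-doubled shapes contain a cycle, hence $V \leq E$, costing at least one factor of $n^{-1}$; combined with the fact that there are only polynomially many such shapes of total walk-length $2\ell$, they contribute only a lower-order term. Altogether this yields
\[
    \E\bigl[\|\bB_n^{\ell}\|_F^2\bigr] \;\leq\; n \cdot \specrad(B)^{\ell} \cdot \mathrm{poly}(\ell) \;+\; o\bigl(n \cdot \specrad(B)^\ell\bigr).
\]

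\textbf{Conclusion and main obstacle.} Taking $\ell = \lceil C \log n \rceil$ for $C = C(\eps)$ sufficiently large and applying Markov's inequality gives
\[
    \Pr\bigl[\specrad(\bB_n) \geq \sqrt{\specrad(B)}+\eps\bigr] \;\leq\; \frac{\E[\|\bB_n^\ell\|_F^2]}{(\sqrt{\specrad(B)}+\eps)^{2\ell}} \;=\; n^{1 - \Omega(C\eps)} \cdot \mathrm{poly}(\log n) \;=\; o(1),
\]
as desired. The hardest step is the shape-counting: while the even-doubling constraint removes most of the delicate ``tangle-free vs.\ tangled'' dichotomy that drives Bordenave's original proof, one still needs a careful uniform bound showing that shapes with cycles contribute $o(n \cdot \specrad(B)^\ell)$ even as $\ell = \Theta(\log n)$ grows. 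This is essentially a finite first-moment estimate in the style of \cite{Bor17}, simplified by the restriction to even walks but still requiring that the number of admissible shapes and their $n$-weighted counts be controlled simultaneously.
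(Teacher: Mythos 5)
Your overall strategy --- trace method plus the observation that the i.i.d.\ signs annihilate every walk pair that covers some edge an odd number of times, thereby removing the need to project away the Perron eigenspace --- is exactly the simplification the paper exploits. But there is a genuine gap at the step you yourself flag as the hardest one, and your claim that the evenness constraint ``removes most of the delicate tangle-free vs.\ tangled dichotomy'' is not correct. The problem is not the number of shapes with excess $g = |E|-|V|+1 \geq 1$ (which is indeed manageable) but the number of even walk pairs \emph{supported on a single such shape}. If $S$ is a ``tangle'' --- say two short cycles through a common vertex --- then taking $w_2 = w_1$ with $w_1$ any non-backtracking closed walk winding around $S$ already produces $e^{\Theta(\ell)}$ even pairs on that one shape, against a gain of only $n^{-(g-1)}$ from the lift expectation. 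Whether $e^{\Theta(\ell)} \cdot n^{1-g}$ is beaten by $(\sqrt{\specrad(B)}+\eps)^{2\ell}$ requires comparing the non-backtracking growth rate of the tangle to $(\sqrt{\specrad(B)}+\eps)^2$, a comparison your accounting (``$V \leq E$ costs at least one factor of $n^{-1}$, and there are only polynomially many shapes'') never makes. The heart of Bordenave's argument is precisely a counting lemma (\Cref{lem:bord-bound-cW}) showing that for \emph{tangle-free} walks the number of canonical sequences with excess $g$ is only $\specrad(B)^{s}(\kappa \ell m)^{O(mg)}$, i.e., polylogarithmic per unit of excess; without the tangle-free restriction no such bound holds.

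This is why the paper's proof cannot, and does not, use your parameters. It first conditions on the whp event that $\bY_n$ is $\ell$-tangle-free (\Cref{lem:no-tangle}), which forces $\ell \leq \tfrac14 \log_{d-1} n$ --- so you are not free to take $\ell = C(\eps)\log n$ with $C$ large. With $\ell$ capped and $m=1$, the dimension factor $n^{1/(2\ell)}$ in the Frobenius-norm version of the trace method is a constant strictly greater than $1$, which would only yield $\specrad(\bB_n) \leq O(1)\cdot\sqrt{\specrad(B)}$. The paper therefore bounds $\E[\tr((\bB_n^{\ell}(\bB_n^{\ell})^{*})^{m})]$ with $m \approx \log n/(17\log\log n)$, decomposing the total walk of length $2\ell m \gg \log n$ into $2m$ segments each of which is individually tangle-free (\Cref{fact:trace-method}, \Cref{prop:full-bd}). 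To repair your write-up you would either have to adopt this two-parameter scheme, or else supply a new counting argument for even, possibly tangled walk pairs of length $\Theta(\log n)$ --- which is not a routine first-moment estimate.
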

The proof, which closely follows that of~\cite[Theorem 20]{Bor17}, appears in \Cref{sec:bordenave}.\\

We will also quote some basic results about the scarcity of cycles in randomly lifted graphs:
\begin{theorem}                                     \label{thm:girthy}
    (Greenhill--Janson--Ruci{\'n}ski~\cite[Lemma~5.1]{GJR10}.)  Let $\bY_n$ be as in \Cref{thm:bordenave} or \Cref{thm:signed-bordenave}, and write $\bZ_k$ for the number of length-$k$ cycles in $\bY_n$.  Let $\bP_2, \bP_3, \dots$ be independent Poisson random variables with $\bP_k$ of mean $w_k/(2k)$, where $w_k = \tr(B^k)$ is the number of closed non-backtracking walks in~$Y$.  Then for any $g \in \N^+$, the random variables $(\bZ_2, \bZ_3, \dots, \bZ_g)$ converge jointly in distribution to $(\bP_2, \bP_3, \dots, \bP_g)$.  In particular, for a fixed~$g$ and $n$ sufficiently large, there is a positive probability (depending only on~$g$ and~$Y$) that $\bY_n$ has girth exceeding~$g$.
\end{theorem}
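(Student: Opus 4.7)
The plan is to apply the method of moments. Independent Poisson random variables are determined by their joint factorial moments, so it suffices to prove that for every fixed tuple $(m_2,\dots,m_g) \in \N^{g-1}$,
\[
    \E\left[\prod_{k=2}^g (\bZ_k)_{m_k}\right] \;\xrightarrow[n\to\infty]{}\; \prod_{k=2}^g \left(\frac{w_k}{2k}\right)^{m_k},
\]
where $(x)_m = x(x-1)\cdots(x-m+1)$ denotes the falling factorial. The left-hand side is exactly the expected number of ordered tuples consisting of $m_k$ pairwise distinct length-$k$ cycles of $\bY_n$ for each $k$, reducing the whole theorem to a combinatorial cycle-count in the random lift.

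First I would verify the single-variable asymptotics $\E[\bZ_k] = w_k/(2k) + o(1)$. Any length-$k$ cycle in $\bY_n$ projects, via the covering map $V(Y)\times[n]\to V(Y)$, to a length-$k$ closed walk in $Y$; because a lift is a covering, this walk is automatically non-backtracking. Conversely, fix a closed non-backtracking walk $w_0\to w_1\to\cdots\to w_{k-1}\to w_0$ in $Y$ (by definition there are $w_k$ such walks, counted with starting point and orientation) and a starting fibre element $(w_0,i)$, giving $n$ choices. Exposing the first $k-1$ edge-matchings along the walk determines a sequence of successive fibre elements, and the final edge-matching must identify the current fibre element with $(w_0,i)$; conditioning on the partial matchings, this last event has probability $1/n + O(n^{-2})$ (with minor adjustments when the walk repeats an edge of $Y$). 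Dividing by the $2k$-fold over-counting (cyclic rotation and reversal) of each unoriented cycle gives the stated limit.

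For the joint factorial moments I would expand $\prod_k (\bZ_k)_{m_k}$ as a sum over ordered tuples of pairwise-distinct cycles and classify tuples by their overlap pattern. When all cycles in the tuple are pairwise vertex-disjoint, the edge-matchings they use are essentially disjoint too, so the first-moment argument extends in parallel across the cycles and contributes $\prod_k (w_k/(2k))^{m_k} + o(1)$. \textbf{The main obstacle} is showing that every other overlap pattern contributes only $o(1)$. For this I would invoke the standard subgraph-count estimate for random lifts: for any fixed connected multigraph $F$ on $v$ vertices with $e$ edges that arises by gluing a specified collection of closed walks in $Y$, the expected number of copies of $F$ in $\bY_n$ is $\Theta(n^{v-e})$. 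A non-vertex-disjoint union of $k$-cycles (each of which individually satisfies $v=e$) necessarily has some connected component with $e>v$, so $v-e\leq -1$ and each such overlap pattern contributes $O(1/n)$; since there are only $O_{g,\,m_2+\cdots+m_g}(1)$ overlap patterns to enumerate, summing gives the required $o(1)$. Finally, the ``in particular'' statement about girth follows immediately from the joint distributional convergence:
\[
    \Pr[\text{girth}(\bY_n) > g] \;=\; \Pr[\bZ_2=\cdots=\bZ_g=0] \;\xrightarrow[n\to\infty]{}\; \prod_{k=2}^g e^{-w_k/(2k)} \;>\; 0,
\]
a positive constant depending only on $Y$ and $g$.
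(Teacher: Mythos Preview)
Your proof outline is correct and follows the standard method-of-moments route for Poisson convergence of short-cycle counts in random covers.  Note, however, that the paper does not actually prove this theorem: it is quoted verbatim as \cite[Lemma~5.1]{GJR10} and used as a black box.  So there is no ``paper's own proof'' to compare against; what you have written is essentially a sketch of how Greenhill--Janson--Ruci{\'n}ski themselves argue.  One small point worth tightening: in your first-moment calculation you are really computing the expected number of length-$k$ closed non-backtracking walks in~$\bY_n$ (up to rotation and reversal), not the expected number of simple $k$-cycles.  The two differ only by lifts that close up but revisit a fibre vertex, and for fixed~$k$ these require an extra coincidence in the matchings and hence contribute~$O(1/n)$; you should say this explicitly rather than leave it implicit.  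With that caveat, the overlap analysis and the deduction of the girth statement are fine.
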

\begin{theorem}   (Easily extracted from the proof of \cite[Lemma~24]{Bor17}.)   \label{thm:cycleless-neighborhoods}
    Let $\bY_n$ be as in \Cref{thm:bordenave} or \Cref{thm:signed-bordenave} and write $d$ for the maximum degree of~$Y$.  Call a vertex of $\bY_n$ \emph{$g$-bad} if its distance-$g$ neighborhood contains a cycle.  Then the expected number of $g$-bad vertices in~$\bY_n$ is $O((d+1)^g)$.
\end{theorem}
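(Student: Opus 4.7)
The plan is a first-moment calculation on the random lift via deferred decisions. Since the random signs in $\bX_n$ do not affect which vertices have a cycle in their distance-$g$ neighborhood, it suffices to handle $\bY_n$ and bound, for each vertex $v \in V(\bY_n)$, the probability that $v$ is $g$-bad.

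For fixed $v$, write $B(v,g)$ for its distance-$g$ neighborhood, and expose $B(v,g)$ by a breadth-first search of $\bY_n$ that reveals the random perfect matchings of the lift edge-by-edge, only as each BFS edge is traversed (principle of deferred decisions). Since every vertex of $\bY_n$ has degree at most $d$, this BFS visits at most
\[
    |B(v,g)| \le 1 + d + d(d-1) + \cdots + d(d-1)^{g-1} = O((d+1)^g)
\]
vertices and reveals at most that many matching edges. The vertex $v$ is $g$-bad exactly when one of these revelations closes a cycle by landing on a previously-visited vertex. Conditional on the history, each revelation is uniform over the remaining unmatched targets in its fiber (of size at least $n - O((d+1)^g)$), so it lands on a previously-visited vertex with probability $O((d+1)^g/n)$.

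A crude union bound over the $O((d+1)^g)$ revelations already gives $\Pr[v \text{ is } g\text{-bad}] = O((d+1)^{2g}/n)$, which after summing over the $n \cdot |V(Y)|$ vertices kills the $n$ and yields an expected count of $O((d+1)^{2g})$ bad vertices --- enough for most downstream uses. To recover the sharper $(d+1)^g$ asserted in the theorem, I would do per-fiber accounting instead: for each $u \in V(Y)$ let $k_u$ be the number of BFS-visited vertices in the fiber $\{u\} \times [n]$, so that the probability of \emph{any} collision occurring inside that fiber is at most $\binom{k_u}{2}/(n - O((d+1)^g)) = O(k_u^2/n)$; then use $\sum_u k_u \le |B(v,g)| = O((d+1)^g)$ together with structural control on $\max_u k_u$ (bounded by the number of length-$\le g$ non-backtracking walks in $Y$ between fixed endpoints) to save one factor of $(d+1)^g$. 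The only delicate step is this per-fiber bookkeeping; everything else is a routine deferred-decisions calculation, and the signed variant is handled identically since signs play no role in the graph structure.
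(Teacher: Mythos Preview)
The paper itself gives no proof of this statement; it simply cites Bordenave's Lemma~24. So there is nothing to compare against directly, and I evaluate your argument on its own terms.

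Your deferred-decisions BFS and the crude union bound are correct and yield $\Pr[v \text{ is $g$-bad}] = O((d+1)^{2g}/n)$, hence an expected $O((d+1)^{2g})$ bad vertices. As you note, this already suffices for the only place the theorem is used in the paper (the second part of \Cref{thm:sdp-for-girth2}, where $g = 2L+2$ is a fixed constant and one only needs a bound independent of~$n$).

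Your proposed sharpening to $O((d+1)^g)$, however, does not go through. You bound the collision probability by $\sum_{u \in V(Y)} O(k_u^2/n)$ and propose to control $\sum_u k_u^2 \le (\max_u k_u)\cdot(\sum_u k_u)$, with $\max_u k_u$ bounded by ``the number of length-${\le}\,g$ non-backtracking walks in $Y$ between fixed endpoints.'' But $Y$ is a \emph{fixed finite} base graph with only $|V(Y)|$ vertices; by pigeonhole some fiber must receive at least $(\sum_u k_u)/|V(Y)| = \Omega((d-1)^g)$ of the BFS vertices, and in any case Cauchy--Schwarz forces
\[
    \sum_{u \in V(Y)} k_u^2 \ \ge\ \frac{\bigl(\sum_u k_u\bigr)^2}{|V(Y)|} \ =\ \Omega\bigl((d-1)^{2g}\bigr).
\]
So the per-fiber accounting cannot improve on your crude union bound. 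Indeed, a birthday-style heuristic suggests the true expected count of $g$-bad vertices scales like $(d-1)^{2g}$, which exceeds $(d+1)^g$ once $d \ge 4$; the stated exponent in the theorem may simply be a bit loose, but this has no effect on how the result is applied downstream.
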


\subsection{The Ihara--Bass formula}            \label{sec:ihara--bass}
The Ihara--Bass formula relates the eigenvalues of a graph's adjacency matrix and its non-backtracking matrix.  Originally proved by Ihara~\cite{Iha66} for regular graphs, it was subsequently generalized to irregular graphs~\cite{Has92,Bas92,ST96,KS00}, vertex-weighted graphs~\cite{Kem16}, and most generally, edge-weighted graphs~\cite{WF09,FM16}.  We will need the last of these, but only in the special case that all edge-weights are~$\pm 1$.  In this case, the resulting formula looks identical to the usual (irregular, unweighted) Ihara--Bass formula:
\begin{theorem} (\cite[Theorem~2]{WF09}, specialized to all edge-weights $\pm 1$.)
\label{thm:ihara-bass1}
    Let $X$ be a edge-signed graph, having adjacency matrix~$A$, non-backtracking matrix~$B$, and deformed Laplacian $L(u) = (1-u^2) \Id + u^2 D - uA$. Then for all real $u \neq \pm 1$,
    \[
        \det(\Id - uB) = \det(L(u)) \cdot (1-u^2)^{\#E(X) - \#V(X)}.
    \]
\end{theorem}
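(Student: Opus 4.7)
The plan is to adapt the standard block-matrix/Schur-complement proof of the unsigned Ihara--Bass formula, tracking edge signs throughout. First I set up directed-edge notation: with $n = |V(X)|$ and $m = |E(X)|$, let $E$ denote the set of $2m$ directed edges (each undirected edge contributing two), let $S, T \in \{0,1\}^{V(X) \times E}$ be the source and target incidence matrices with $S_{v,(i,j)} = \mathbf{1}[v=i]$ and $T_{v,(i,j)} = \mathbf{1}[v=j]$, let $\Sigma$ be the diagonal $E \times E$ sign matrix (well-defined on directed edges, since reversing preserves the undirected sign), and let $J$ be the reversal involution on $E$. A direct check gives the algebraic identities $A = S\Sigma T^T$, $D = SS^T = TT^T$, $SJ = T$, $TJ = S$, $\Sigma J = J\Sigma$, $\Sigma^2 = J^2 = \Id$, and---crucially---$B = \Sigma(T^T S - J)$, which encodes the paper's definition that $B[e, e'] = \sigma_e$ exactly when the head of $e$ equals the tail of $e'$ and $e' \neq \bar e$.

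Second, I form an appropriate $(n+2m) \times (n+2m)$ block matrix $N(u)$ with $\Id_V$ in the upper-left block, $\Id_E - u\Sigma J$ in the lower-right block, and signed multiples of $uS$ and $u\Sigma T^T$ off the diagonal. I then compute $\det N(u)$ in two ways via the Schur complement formula. Schur-complementing the upper-left block $\Id_V$ produces (for appropriate off-diagonal signs) the expression $\det(\Id_E - u\Sigma J + u^2 \Sigma T^T S)$, which after the substitution $u \mapsto -u$ matches $\det(\Id_E - uB)$, since $\Sigma(T^T S - J) = B$.

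Third, Schur-complementing the lower-right block---which is invertible for $u \neq \pm 1$ because $(\Sigma J)^2 = \Id$ gives $\det(\Id_E - u\Sigma J) = (1-u^2)^m$---produces a factor of $(1-u^2)^m$ times a determinant on $V$ that simplifies via the key computation
\[
    S(\Id_E - u\Sigma J)^{-1}\Sigma T^T \;=\; \tfrac{1}{1-u^2}\bigl(S\Sigma T^T + u\, S(\Sigma J\Sigma) T^T\bigr) \;=\; \tfrac{1}{1-u^2}(A + u D),
\]
where I used $(\Id - u\Sigma J)^{-1} = \tfrac{1}{1-u^2}(\Id + u\Sigma J)$, then $\Sigma J \Sigma = \Sigma^2 J = J$, and finally $SJ = T$ together with $TT^T = D$. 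After matching the overall signs, the resulting upper-block Schur expression becomes $(1-u^2)^{-n}\det L(u)$, so that $\det N(u)$ equals $(1-u^2)^{m-n}\det L(u)$. Equating the two evaluations of $\det N(u)$ yields $\det(\Id - uB) = (1-u^2)^{m-n}\det L(u)$.

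The main obstacle is purely bookkeeping---choosing the exact signs in the off-diagonal blocks of $N(u)$ so that both Schur expansions simultaneously produce the correct right-hand sides, without spurious sign errors or $u^2$ appearing in place of $u$. All genuinely new algebraic content beyond the unsigned case is concentrated in the two facts $\Sigma J = J\Sigma$ and $\Sigma^2 = \Id$, which together give $\Sigma J \Sigma = J$ and hence $S \Sigma J \Sigma T^T = D$; this is precisely what allows the signing to be invisible to the Ihara--Bass identity, and is the reason Watanabe--Fukumizu can extend it to the (more general, edge-weighted) setting with essentially no change to the proof.
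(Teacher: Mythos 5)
The paper does not actually prove this statement; it is quoted directly from Watanabe--Fukumizu \cite[Theorem~2]{WF09} (specialized to $\pm 1$ weights), so there is no internal proof to compare against. Your Bass-style derivation is therefore a genuine addition, and its conceptual core is right: the identities $A = S\Sigma T^{\mathsf T}$, $SJ = T$, $TT^{\mathsf T} = D$, $B = \Sigma(T^{\mathsf T}S - J)$, $(\Sigma J)^2 = \Id$, and above all $\Sigma J \Sigma = J$ (hence $S\Sigma J\Sigma T^{\mathsf T} = D$) are all correct, and the last one is exactly why the signing is invisible in the final formula.

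However, the specific block matrix you describe cannot work as written, and the proposed repair ($u \mapsto -u$) does not fix it. If both off-diagonal blocks of $N(u)$ are proportional to $u$, then Schur-complementing the $\Id_V$ block necessarily produces $\pm u^2\,\Sigma T^{\mathsf T}S$ in the edge-indexed determinant, whereas $\det(\Id - uB)$ requires this term with coefficient $u$, not $u^2$; replacing $u$ by $-u$ changes signs but not degrees, so no choice of signs reconciles the two. The same degree mismatch shows up on the vertex side: with off-diagonal product $\propto u^2$ and lower-right block $\Id_E - u\Sigma J$, the Schur complement yields $(1-u^2)\Id - \alpha\beta u^2 A - \alpha\beta u^3 D$, and matching the $A$-coefficient of $L(u)$ forces $\alpha\beta = 1/u$, which then gives $-u^2 D$ instead of $+u^2 D$. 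The correct choice is to make the product of the off-diagonal coefficients equal to $u$ (degree one) and to flip the sign in the lower-right block: take
\[
    N(u) = \begin{pmatrix} \Id_V & S \\ u\Sigma T^{\mathsf T} & \Id_E + u\Sigma J \end{pmatrix}.
\]
Then the upper-left Schur complement gives $\det\bigl(\Id_E + u\Sigma J - u\Sigma T^{\mathsf T}S\bigr) = \det(\Id_E - uB)$ on the nose, while the lower-right one gives $\det(\Id_E + u\Sigma J) = (1-u^2)^{m}$ times $\det\bigl(\Id_V - \tfrac{u}{1-u^2}(A - uD)\bigr) = (1-u^2)^{-n}\det L(u)$, using $(\Id_E + u\Sigma J)^{-1} = \tfrac{1}{1-u^2}(\Id_E - u\Sigma J)$ and your identity $S\Sigma J\Sigma T^{\mathsf T} = D$. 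With this one correction the argument is complete and correct.
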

In the special case when $X$ is $(c,d)$-biregular, one can use this formula to work out a very explicit mapping between the eigenvalues of~$A$ and the eigenvalues of~$B$.  The computations appear in~\cite[Section~4.2]{Kem16}; that paper only considered unsigned edges, but the result is the same because the Ihara--Bass formula is identical.  Recalling the notation from \Cref{sec:eig-notation}:
\begin{theorem} (Follows from \cite[Theorem~6]{Kem16} using \Cref{thm:ihara-bass1}.)            \label{thm:ihara-bass2}
    Let $X$ be an edge-signed $(c,d)$-biregular graph, with $m$ vertices on the $c$-regular side and $n$ vertices on the $d$-regular side, so $e = cm = dn$ is the number of edges. Let $A$ denote the adjacency matrix of~$X$.  Then~$B$, the non-backtracking matrix of~$X$, has the following $2e$ eigenvalues:
    \begin{itemize}
        \item $e - (m+n)$ copies each of $\pm 1$.
        \item $m - n$ copies each of $\pm i\csq$.
        \item $4n$ ``nontrivial'' eigenvalues, all roots of $p_\lambda(u) = u^4 + (\csq^2+\dsq^2 - \lambda^2)u^2 + \rho_1^2$ for $\lambda \in \nontriv(A)$.
    \end{itemize}
\end{theorem}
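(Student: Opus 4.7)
My plan is to apply the Ihara--Bass formula (Theorem~\ref{thm:ihara-bass1}) directly and then compute $\det(L(u))$ exploiting the bipartite structure. Since $X$ is $(c,d)$-biregular, the degree matrix has $m$ diagonal entries equal to~$c$ and $n$ diagonal entries equal to~$d$, so in the block form corresponding to the bipartition,
\[
    L(u) = \begin{pmatrix} \alpha\, \Id_m & -u\, A_{12} \\ -u\, A_{12}^\top & \beta\, \Id_n \end{pmatrix}, \qquad \alpha = 1 + u^2\csq^2, \quad \beta = 1 + u^2\dsq^2,
\]
where $A_{12}$ is the $m \times n$ signed biadjacency matrix (so $A_X = \bigl(\begin{smallmatrix} 0 & A_{12} \\ A_{12}^\top & 0 \end{smallmatrix}\bigr)$).

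Using the Schur complement (valid as a polynomial identity in $u$),
\[
    \det(L(u)) = \alpha^{m}\det\bigl(\beta\, \Id_n - u^2 \alpha^{-1} A_{12}^\top A_{12}\bigr) = \alpha^{m-n} \prod_{\mu}\bigl(\alpha\beta - u^2 \mu\bigr),
\]
where the product runs over the eigenvalues $\mu$ of $A_{12}^\top A_{12}$. Since $X$ is bipartite with $m \geq n$, the multiset $\spec(A_{12}^\top A_{12}) = \{\lambda^2 : \lambda \in \nontriv(A_X)\}$ (the squared singular values of $A_{12}$), and $A_X$ additionally has $m-n$ zero eigenvalues accounted for by the factor $\alpha^{m-n}$. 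Expanding, $\alpha\beta - u^2\lambda^2 = 1 + u^2(\csq^2 + \dsq^2 - \lambda^2) + u^4\rho_1^2 =: q_\lambda(u)$.

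Putting this into Theorem~\ref{thm:ihara-bass1} with $\#E(X) - \#V(X) = e - (m+n)$, we obtain the polynomial identity
\[
    \det(\Id - uB) = (1+u^2\csq^2)^{m-n}\,(1-u^2)^{e-(m+n)}\,\prod_{\lambda \in \nontriv(A_X)} q_\lambda(u).
\]
Since $\det(\Id - uB)$ is (up to sign) the reverse characteristic polynomial of $B$, its roots in $u$ are the reciprocals of the eigenvalues of~$B$. The factor $(1-u^2)^{e-(m+n)}$ contributes eigenvalues $\pm 1$ with multiplicity $e-(m+n)$; the factor $(1+u^2\csq^2)^{m-n}$ has roots $u = \pm i/\csq$, contributing eigenvalues $\pm i\csq$ with multiplicity $m-n$. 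Finally, $u$ is a root of $q_\lambda(u) = 1 + u^2(\csq^2+\dsq^2-\lambda^2) + u^4\rho_1^2$ iff $v = 1/u$ satisfies $v^4 + (\csq^2+\dsq^2-\lambda^2)v^2 + \rho_1^2 = p_\lambda(v) = 0$, yielding exactly the $4n$ nontrivial eigenvalues claimed. A quick degree count confirms no eigenvalues are missed: $2(m-n) + 2(e-(m+n)) + 4n = 2e = \dim(B)$.

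The bookkeeping is routine; the only mild subtlety is the singular-value/bipartite spectrum dictionary $\spec(A_{12}^\top A_{12}) = \{\lambda^2 : \lambda \in \nontriv(A_X)\}$, which is what lets us translate $\det(L(u))$ (defined via the full $A_X$) into the eigenvalue-indexed product. Everything else is just carrying $\pm 1$ signs correctly and applying the reciprocal-roots observation for $\det(\Id - uB)$.
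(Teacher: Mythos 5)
Your derivation is correct and is essentially the argument the paper intends: it defers to Kempton's computation, which likewise plugs the biregular block structure of the deformed Laplacian into the Ihara--Bass formula and factors $\det(L(u))$ via the singular values of the biadjacency matrix. Your explicit Schur-complement calculation, the reciprocal-root translation $q_\lambda(u)\mapsto p_\lambda(v)$, and the degree count $2(m-n)+2(e-(m+n))+4n=2e$ all check out.
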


We would now like to understand the location of the $4$ roots of $p_\lambda(u)$ in~$\C$ as $\lambda$ varies in~$[0,\sqrt{cd}]$.
\ignore{
    Since $p_\lambda$ is just a quadratic in~$u^2$, one can easily work out the following picture.  When $\lambda = \sqrt{cd}$, the four roots are at $\pm \rho_1$ and $\pm 1$.  As $\lambda$ decreases, the roots travel along the real axis towards $\pm \sqrt{\rho_1}$, reaching there simultaneously when $\lambda = \lupper$.  As $\lambda$ further decreases, the $4$ roots travel separately in the $4$ quadrants, tracing out the circle of radius $\sqrt{\rho_1}$ until they reach the complex axis at $\pm \sqrt{\rho_1} i$ when $\lambda = \llower$. As $\lambda$ decreases from $\llower$, the pairs split up and travel along the complex axis, ending at final positions $\pm i\csq$ and $\pm i\dsq$ when $\lambda = 0$.
}
To do this, write
\[
    \csq = \frac{\lupper - \llower}{2}, \quad \dsq = \frac{\lupper + \llower}{2}, \quad \alpha = \frac{\lambda^2 - \llower^2}{2}, \quad \beta = \frac{\lambda^2  - {\lupper}^2}{2}, \quad U = u^2.
\]
Then
\[
    p_\lambda(u) = U^2 - (\alpha+\beta) U + \parens*{\frac{\alpha - \beta}{2}}^2,
\]
which has roots
\[
    U = \frac12 \parens*{\sqrt{\alpha} \pm \sqrt{\beta}}^2.
\]
If $\llower^2 \leq \lambda^2  \leq \lupper^2$ then $\beta \leq 0 \leq \alpha$ and
\[
    |U| = \frac12\parens*{\sqrt{\alpha}^2 + \sqrt{-\beta}^2} =\frac{\alpha - \beta}{2} = \frac{\lupper^2 - \llower^2}{4} = \csq\dsq = \rho_1.
\]
On the other hand, if $\lambda^2 \not \in \bracks*{\llower^2, \lupper^2}$, then $\alpha$ and $\beta$ have the same sign and
\[
    |U| = \frac12(|\alpha| + |\beta| \pm 2|\alpha|\cdot |\beta|), \text{ the larger of which exceeds } \frac{\lupper^2 - \llower^2}{4} = \rho_1.
\]
We conclude:
\begin{proposition}                                     \label{prop:the-picture}
    For real $\lambda$, the roots of $p_\lambda(u)$ simultaneously have magnitude at most $\sqrt{\rho_1}$ if and only if $\lambda^2 \in \bracks*{\llower^2, \lupper^2 }$ (i.e., $\lambda \in \pm \bracks*{\llower, \lupper}$).
\end{proposition}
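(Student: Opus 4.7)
The plan is to directly verify the claim by analyzing the biquadratic $p_\lambda(u)$ using the substitution and reparameterization already set up in the text just before the proposition.

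First I would set $U = u^2$ and regard $p_\lambda$ as a quadratic $q(U) = U^2 - (\alpha + \beta)U + \bigl(\tfrac{\alpha - \beta}{2}\bigr)^2$, where $\alpha = (\lambda^2 - \llower^2)/2$ and $\beta = (\lambda^2 - \lupper^2)/2$, using the identities $\csq^2 + \dsq^2 = (\lupper^2 + \llower^2)/2$ and $\rho_1^2 = \csq^2\dsq^2 = \bigl((\lupper^2-\llower^2)/4\bigr)^2$. A direct application of the quadratic formula then gives the two $U$-roots as $U_\pm = \tfrac{1}{2}(\sqrt{\alpha} \pm \sqrt{\beta})^2$. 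Since $u$ ranges over the four square roots of $U_\pm$, the roots of $p_\lambda$ all have magnitude $\leq \sqrt{\rho_1}$ if and only if $\max(|U_+|, |U_-|) \leq \rho_1$.

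Next I would split into the two cases according to whether $\lambda^2$ lies in $[\llower^2, \lupper^2]$. In the interior case, $\alpha \geq 0 \geq \beta$, so $\sqrt{\beta}$ is purely imaginary and $|U_\pm| = \tfrac12(\alpha + (-\beta)) = \tfrac12(\lupper^2 - \llower^2)/2 = \rho_1$, giving equality in the bound; in fact every $u$-root has modulus exactly $\sqrt{\rho_1}$. In the exterior case, $\alpha$ and $\beta$ share a sign, so $U_\pm$ are real with $|U_+| + |U_-| = |\alpha| + |\beta|$ and $|U_+ U_-| = \rho_1^2$, forcing the larger one to exceed $\rho_1$ strictly (since AM–GM is strict unless $|U_+|=|U_-|$, which would require $\alpha\beta=0$, i.e. $\lambda^2 \in \{\llower^2,\lupper^2\}$, the boundary).

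This gives both directions of the equivalence. There is essentially no obstacle: the argument is a short explicit calculation on a biquadratic, and the reparameterization already performed in the text does all the conceptual work. The only minor care needed is handling the boundary cases $\lambda^2 = \llower^2$ and $\lambda^2 = \lupper^2$ (where one of $\alpha, \beta$ vanishes and $U_+ = U_-$), and verifying that the magnitude computation $|U| = \rho_1$ on the interior interval holds with equality rather than strict inequality, which is why the conclusion is $\leq \sqrt{\rho_1}$ rather than $< \sqrt{\rho_1}$.
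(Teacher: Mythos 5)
Your proof is correct and takes essentially the same route as the paper: the substitution $U = u^2$, the expression of the roots as $\tfrac12(\sqrt{\alpha}\pm\sqrt{\beta})^2$, and the case split on whether $\alpha$ and $\beta$ have opposite signs (i.e.\ $\lambda^2 \in [\llower^2,\lupper^2]$) are exactly the computation the paper performs immediately before stating the proposition. Your explicit handling of the boundary cases $\lambda^2 \in \{\llower^2,\lupper^2\}$ and of the strictness of the AM--GM step is slightly more careful than the paper's phrasing, but the substance is identical.
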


Also, when $\lambda = 0$ we have $p_\lambda(u) = u^4 + (\csq^2 + \dsq^2)u^2 + \csq^2\dsq^2$, and when $\lambda = \sqrt{cd}$ we have $p_\lambda(u) = u^4 - (\rho_1^2 + 1)u^2 + \rho_1^2$.  Thus we can directly verify:
\begin{proposition}                                     \label{prop:lambda-cd}
    For $\lambda = 0$, the $4$ roots of $p_\lambda(u)$ are $\pm i\csq$, $\pm i\dsq$.
    And, for $\lambda = \sqrt{cd}$, the $4$ roots of $p_\lambda(u)$ are $\pm \rho_1$, $\pm 1$.
\end{proposition}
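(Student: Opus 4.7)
The statement is a direct verification for two specific values of $\lambda$, so the plan is simply to substitute each value into the polynomial $p_\lambda(u) = u^4 + (\csq^2 + \dsq^2 - \lambda^2)u^2 + \rho_1^2$ and exhibit an explicit factorization into quadratics in~$u^2$. Since the paper already notes the special forms the polynomial takes at $\lambda = 0$ and $\lambda = \sqrt{cd}$, I would not need to reprove those; I would just factor each one.

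For $\lambda = 0$, I would observe that $\rho_1^2 = \csq^2 \dsq^2$ by \Cref{not:cd}, so
\[
    p_0(u) = u^4 + (\csq^2 + \dsq^2) u^2 + \csq^2 \dsq^2 = (u^2 + \csq^2)(u^2 + \dsq^2),
\]
whose roots are manifestly $\pm i\csq$ and $\pm i\dsq$.

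For $\lambda = \sqrt{cd}$, I would first simplify the middle coefficient:
\[
    \csq^2 + \dsq^2 - cd = (c-1) + (d-1) - cd = -\bigl(1 + (c-1)(d-1)\bigr) = -(1 + \rho_1^2),
\]
using $\rho_1^2 = (c-1)(d-1)$. Then
\[
    p_{\sqrt{cd}}(u) = u^4 - (1 + \rho_1^2) u^2 + \rho_1^2 = (u^2 - 1)(u^2 - \rho_1^2),
\]
whose roots are $\pm 1$ and $\pm \rho_1$, as claimed.

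There is no real obstacle here; the only thing to be careful about is rewriting $\csq^2 + \dsq^2 - cd$ correctly in terms of $\rho_1$, which is a single line of algebra using \Cref{not:cd}. This computation also gives a useful sanity check for \Cref{prop:Kcd-B-spec}, since \Cref{fact:Kcd-A-spec} says $\nontriv(A_{K_{d,c}})$ consists of $c-1$ copies of $0$ and one copy of $\sqrt{cd}$, and \Cref{thm:ihara-bass2} then produces exactly the non-backtracking eigenvalues listed there.
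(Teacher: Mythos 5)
Your proposal is correct and matches the paper's argument exactly: the paper likewise substitutes $\lambda = 0$ and $\lambda = \sqrt{cd}$ into $p_\lambda(u)$, simplifies the middle coefficient to $\csq^2+\dsq^2$ and $-(\rho_1^2+1)$ respectively, and reads off the roots from the evident factorizations $(u^2+\csq^2)(u^2+\dsq^2)$ and $(u^2-1)(u^2-\rho_1^2)$. Nothing is missing.
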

At this point, we can combine \Cref{thm:ihara-bass2}, \Cref{fact:Kcd-A-spec}, and \Cref{prop:lambda-cd} to obtain \Cref{prop:Kcd-B-spec} as stated in \Cref{sec:eig-notation}.  We may furthermore put together all the results in this section:
\begin{theorem}                                     \label{thm:random-signed-lift-eigenvalues}
    Let $d \geq c \geq 2$, $d \neq 2$.  Fix $\eps > 0$.  Let $\bX_n$ be a random signing of a random $n$-lift of the complete bipartite graph $K_{d,c}$, and let $\bA_n$ denote its adjacency matrix.  Then
    \[
        \Pr\bigl[\nontriv(\bA_n) \not \subset  [\llower -\eps, \lupper + \eps]\bigr] = o_{n \to \infty}(1).
    \]
\end{theorem}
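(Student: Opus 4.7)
The plan is to combine \Cref{thm:signed-bordenave}, applied to $K_{d,c}$, with the biregular Ihara--Bass mapping of \Cref{thm:ihara-bass2}, and then obtain the conclusion from a robust (continuity) version of \Cref{prop:the-picture}. In a slogan: transfer a non-backtracking spectral-radius bound to an adjacency spectral-localization via the explicit quartic $p_\lambda$, then use compactness.

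First, apply \Cref{thm:signed-bordenave} to $Y = K_{d,c}$. It is connected and has $cd$ edges versus $c+d$ vertices, so $cd > c+d$ holds iff $(c-1)(d-1) > 1$, which is the case under the hypotheses $d \geq c \geq 2$ and $d \neq 2$. By \Cref{prop:Kcd-B-spec}, the non-backtracking matrix $B$ of $K_{d,c}$ has $\specrad(B) = \rho_1$. Hence for any $\eps' > 0$, the non-backtracking matrix $\bB_n$ of $\bX_n$ satisfies $\specrad(\bB_n) \leq \sqrt{\rho_1} + \eps'$ with high probability. Since $\bX_n$ is itself a $(c,d)$-biregular edge-signed graph (random lifts preserve biregularity), \Cref{thm:ihara-bass2} applies: every $\lambda \in \nontriv(\bA_n)$ contributes all four roots of $p_\lambda(u) = u^4 + (\csq^2+\dsq^2-\lambda^2)u^2 + \rho_1^2$ to $\spec(\bB_n)$. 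Consequently, on this high-probability event, for each such $\lambda$ the four roots of $p_\lambda$ simultaneously have magnitude at most $\sqrt{\rho_1} + \eps'$.

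It remains to upgrade \Cref{prop:the-picture} to a robust form. Because $\bA_n$ is the adjacency matrix of a $(c,d)$-biregular signed graph we have $\nontriv(\bA_n) \subset [0, \sqrt{cd}]$ (e.g., by comparing with the unsigned $|\bA_n|$ and using the standard Perron bound). On this compact interval, the maximum-magnitude root of $p_\lambda$ is a continuous function $M(\lambda)$ of $\lambda$ (roots depend continuously on the coefficients of a degree-$4$ polynomial), and by \Cref{prop:the-picture} it satisfies $M(\lambda) = \sqrt{\rho_1}$ exactly on $[\llower, \lupper]$ and $M(\lambda) > \sqrt{\rho_1}$ elsewhere. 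Given $\eps > 0$, a compactness argument on $[0, \sqrt{cd}] \setminus (\llower - \eps, \lupper + \eps)$ therefore yields some $\eta = \eta(\eps) > 0$ with $M(\lambda) \geq \sqrt{\rho_1} + \eta$ throughout that set. Choosing $\eps' < \eta$ in the first paragraph produces the desired conclusion: with high probability, every $\lambda \in \nontriv(\bA_n)$ satisfies $M(\lambda) \leq \sqrt{\rho_1} + \eps' < \sqrt{\rho_1} + \eta$, forcing $\lambda \in [\llower - \eps, \lupper + \eps]$.

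The only genuinely nontrivial ingredient is \Cref{thm:signed-bordenave} itself (deferred to \Cref{sec:bordenave}); every other step here is essentially bookkeeping: the Ihara--Bass formula is quoted from \Cref{thm:ihara-bass2}, and the continuity/compactness step uses nothing beyond continuity of roots of a quartic in its coefficients together with the already-established characterization in \Cref{prop:the-picture}.
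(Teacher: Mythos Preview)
Your proposal is correct and follows essentially the same route as the paper's own proof: apply \Cref{thm:signed-bordenave} to $K_{d,c}$ (whose non-backtracking spectral radius is $\rho_1$ by \Cref{prop:Kcd-B-spec}), then use \Cref{thm:ihara-bass2} and a continuity/compactness argument based on \Cref{prop:the-picture} to convert the bound on $\specrad(\bB_n)$ into the localization of $\nontriv(\bA_n)$. If anything, you are more explicit than the paper, which compresses your last paragraph to the single phrase ``taking $\eps'$ sufficiently small and using the fact that the roots of a polynomial are continuous in its coefficients''; your verification that $K_{d,c}$ has more edges than vertices and your compactness argument on $[0,\sqrt{cd}]\setminus(\llower-\eps,\lupper+\eps)$ are exactly the details the paper leaves implicit.
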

\begin{proof}
    We apply \Cref{thm:signed-bordenave} with $Y = K_{d,c}$ and some sufficiently small $\eps' = \eps'(\eps, c, d) > 0$.  The non-backtracking matrix $B$ of~$Y$ has spectral radius $\rho_1$, by \Cref{prop:Kcd-B-spec}.  Thus if $\bB_n$ is the non-backtracking matrix of the randomly signed random lift~$\bX_n$ of~$Y$, we get
    \[
        \Pr\bigl[\specrad(\bB_n) \geq \sqrt{\rho_1} + \eps'\bigr] = o_{n \to \infty}(1),
    \]
    Thus with probability $1 - o(1)$ we have $\specrad(\bB_n) < \sqrt{\rho_1} + \eps'$.  In this case, taking $\eps'$ sufficiently small and using the fact that the roots of a polynomial are continuous in its coefficients, \Cref{prop:the-picture} and \Cref{thm:ihara-bass2} imply that $\nontriv(\bA_n) \subset [\llower -\eps, \lupper + \eps]$.  The proof is complete.
\end{proof}
\begin{remark}
    This theorem is ``to be expected'' in light of the Godsil--Mohar work on spectral convergence mentioned at the end of \Cref{sec:infinite-bireg-notation}.  But of course one needs the hard work of Bordenave's Theorem to show that random $(c,d)$-biregular graphs typically do not \emph{any} eigenvalues outside the spectral bulk.  In fact, to emphasize that care is needed, we remark that the random signing in \Cref{thm:random-signed-lift-eigenvalues} is essential; without it, it's not hard to show that $\nontriv(\bA_n)$ will contain~$0$ with probability~$1$.
\end{remark}
\begin{corollary}                                       \label{cor:random-triangle-graph-eigenvalues}
    Let $d \geq c \geq 2$, $d \neq 2$.  Fix $\eps > 0$.  Let $\bX_n$ be a random signing of a random $n$-lift of the complete bipartite graph $K_{d,c}$, let $\bI_n$ be the associated 2XOR-SAT instance (as in \Cref{sec:graph-notation}), and let $\bL_n$  be its Laplacian matrix.  Then
    \[
        \Pr\bigl[\bL_n \text{ has an eigenvalue outside } [(1-\rho_1)^2 - \eps, (1+\rho_1)^2  +\eps]\bigr] = o_{n \to \infty}(1).
    \]
\end{corollary}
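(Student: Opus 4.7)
The plan is a short bookkeeping argument chaining together the two results already in hand: Proposition~\ref{prop:triangle-replace-eigs}, which exactly describes $\Spec(\bL_n)$ in terms of $\nontriv(\bA_n)$, and Theorem~\ref{thm:random-signed-lift-eigenvalues}, which pins down $\nontriv(\bA_n)$ to a narrow interval whp. All the hard probabilistic work (Bordenave's theorem for signed lifts, together with the Ihara--Bass translation) is already encapsulated in those results; this corollary just pushes the spectral bound through the map $\lambda \mapsto cd-\lambda^2$.

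First I would apply Proposition~\ref{prop:triangle-replace-eigs} with $X = \bX_n$, giving
\[
    \Spec(\bL_n) = \{cd - \lambda^2 : \lambda \in \nontriv(\bA_n)\}
\]
as multisets. It therefore suffices to show that with probability $1-o(1)$, every $\lambda$ in the right-hand multiset satisfies $cd - \lambda^2 \in [(1-\rho_1)^2 - \eps, (1+\rho_1)^2 + \eps]$. Next I would invoke Theorem~\ref{thm:random-signed-lift-eigenvalues} with a suitably small auxiliary $\eps' = \eps'(\eps,c,d) > 0$ to obtain, whp, the inclusion $\nontriv(\bA_n) \subset [\llower - \eps', \lupper + \eps']$. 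Since $\nontriv(\bA_n) \subset \R^{\geq 0}$ and $\lambda \mapsto \lambda^2$ is monotone there, this pushes forward to $\lambda^2 \in [(\max(0,\llower - \eps'))^2, (\lupper + \eps')^2]$ for each $\lambda$ in the multiset.

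The final step is the algebraic check. The engine is the pair of identities
\[
    cd - \lupper^2 = (1-\rho_1)^2, \qquad cd - \llower^2 = (1+\rho_1)^2,
\]
each proved by one-line expansion using $\lupper = \dsq + \csq$, $\llower = \dsq - \csq$ (here is where $d \geq c$ enters), and $\rho_1 = \csq \dsq$. Together with the companion identity $cd - (1+\rho_1)^2 = \llower^2$, these let us conclude that $cd - \lambda^2$ lies in $[(1-\rho_1)^2 - O(\eps'), (1+\rho_1)^2 + O(\eps')]$; choosing $\eps'$ small enough absorbs the $O(\eps')$ slack into $\eps$.

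I do not anticipate any substantive obstacle; the only point that requires a moment of care is the degenerate regime $\llower \leq \eps'$ (i.e., $d$ close to $c$), where the lower endpoint of the $\lambda$-interval becomes negative and must be clamped at zero before squaring. In that regime the upper bound on $cd - \lambda^2$ could in principle degrade all the way to $cd$, but the identity $cd - (1+\rho_1)^2 = \llower^2 \leq (\eps')^2$ shows that this apparent slack is itself second-order in $\eps'$, so the bound still closes.
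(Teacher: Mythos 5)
Your proposal is correct and matches the paper's own (very terse) proof: both simply combine \Cref{prop:triangle-replace-eigs} with \Cref{thm:random-signed-lift-eigenvalues} and the identities $cd - \lupper^2 = (1-\rho_1)^2$, $cd - \llower^2 = (1+\rho_1)^2$. Your extra care about the degenerate regime $\llower \le \eps'$ is a reasonable refinement the paper leaves implicit, but it does not change the argument.
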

\begin{proof}
    This follows from \Cref{prop:triangle-replace-eigs}, $cd - \lupper^2 = (1-\rho_1)^2$,  and $cd - \llower^2 = (1+\rho_1)^2$.
\end{proof}

\Cref{cor:random-triangle-graph-eigenvalues} now directly implies the following:
\begin{theorem}                                     \label{thm:refutation-upper}
    Let $d \geq c \geq 2$, $d \neq 2$.  Fix $\eps > 0$. Let $\bI_n$ be a random 2XOR-SAT instance as in \Cref{cor:random-triangle-graph-eigenvalues}, so $\bI_n$ is $\DEG$-regular ($\DEG = (c-1)d$) with $cn$ variables and $\binom{c}{2}dn$ constraints.  Then
    \[
        \Pr\bracks*{\EIG(\bI_n) \geq \frac{(1+\rho_1)^2}{2\DEG} + \eps} = o_{n\to \infty}(1),
    \]
    where $\rho_1 = \sqrt{c-1}\sqrt{d-1}$.

    In case $c = 3$, if we view $\bI_n$ as a random $d$-regular NAE-3SAT instance on~$3n$ variables (chosen according to the random lift/sign model), we have
    \[
        \Pr\bracks*{\EIG(\bI_n) \geq \frac98 - \frac38\cdot\frac{\parens*{\sqrt{d-1} - \sqrt{2}}^2}{d} + \eps} = o_{n\to \infty}(1),
    \]
\end{theorem}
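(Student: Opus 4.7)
The proof is essentially a bookkeeping argument that repackages \Cref{cor:random-triangle-graph-eigenvalues} in the language of $\EIG$. Here is the plan.

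First, recall from the dual formulation of the SDP in \Cref{sec:prior} that for any $\DEG$-regular 2XOR-SAT instance $I$ we have $\EIG(I) = \frac{1}{2\DEG} \lambda_{\max}(L_I)$. So, to prove the first displayed inequality, I would apply \Cref{cor:random-triangle-graph-eigenvalues} with error parameter $\eps' = 2\DEG \cdot \eps$: that corollary gives that with probability $1 - o(1)$ every eigenvalue of $\bL_n$ lies in $[(1-\rho_1)^2 - \eps', (1+\rho_1)^2 + \eps']$, and in particular $\lambda_{\max}(\bL_n) \leq (1+\rho_1)^2 + \eps'$. Dividing by $2\DEG$ yields $\EIG(\bI_n) \leq \frac{(1+\rho_1)^2}{2\DEG} + \eps$ on this high-probability event, which is exactly the first bound.

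Second, for the NAE-3SAT specialization, I would use the fact recalled in \Cref{sec:prior} that every assignment satisfies exactly a $\frac23 \beta$ fraction of the derived 2XOR-SAT constraints when it satisfies a $\beta$ fraction of the NAE-3SAT constraints, so the NAE-3SAT SDP value (and likewise the eigenvalue bound) is exactly $\tfrac{3}{2}$ times the corresponding 2XOR-SAT quantity. Plugging in $c = 3$, so $\csq^2 = 2$, $\DEG = 2d$, and $\rho_1 = \sqrt{2(d-1)}$, the 2XOR-SAT bound becomes $\frac{(1+\sqrt{2(d-1)})^2}{4d}$, and multiplying by $\tfrac32$ gives
\[
    \frac{3(1+\sqrt{2(d-1)})^2}{8d} = \frac{6d - 3 + 6\sqrt{2(d-1)}}{8d}.
\]
I would then verify the identity
\[
    \frac{6d - 3 + 6\sqrt{2(d-1)}}{8d} = \frac{9}{8} - \frac{3}{8} \cdot \frac{\bigl(\sqrt{d-1} - \sqrt{2}\bigr)^2}{d}
\]
by expanding $(\sqrt{d-1}-\sqrt{2})^2 = d+1 - 2\sqrt{2(d-1)}$ on the right-hand side and simplifying; this is a routine computation that rearranges to the same numerator. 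Adjusting the $\eps$ to absorb the factor $\tfrac32$ yields the second bound.

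The truly substantive content has already been done in \Cref{thm:random-signed-lift-eigenvalues} (via Bordenave's theorem and the Ihara--Bass formula) and \Cref{prop:triangle-replace-eigs} (for the eigenvalue transfer from the bipartite graph to its primal); there is no real obstacle left at this point. The only place one needs to be a touch careful is the conversion factor $\tfrac{3}{2}$ between the NAE-3SAT and 2XOR-SAT eigenvalue bounds, which corresponds to the fact that the NAE-3SAT SDP is defined as $\tfrac{3}{2}$ times the Goemans--Williamson SDP of the primal 2XOR-SAT instance.
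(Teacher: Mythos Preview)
Your proposal is correct and matches the paper's approach: the paper simply states that the theorem is an immediate consequence of \Cref{cor:random-triangle-graph-eigenvalues}, and you have spelled out exactly that deduction, including the $\tfrac{3}{2}$ rescaling and the algebraic identity for the $c=3$ specialization.
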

As mentioned in \Cref{sec:our-results}, the quantity $\frac98 - \frac38\cdot\frac{\parens*{\sqrt{d-1} - \sqrt{2}}^2}{d}$ decreases from $\frac98$ to $\frac34$ on~$[3,\infty)$ and takes value~$1$ at $d = 13.5$. Thus the above theorem shows that the basic eigenvalue bound refutes a random $d$-regular instance of NAE-3SAT (whp) provided $d > 13.5$.

\section{SDP solutions for random instances}
As a guide for our construction, let us imagine SDP solutions for the Max-Cut problem on the infinite graph $\Primal_{d,c}$.  (As these imaginings are only for intuition's sake, we will not be completely formal.)
To lower bound $\SDP(\Primal_{d,c})$, it is necessary and sufficient to construct jointly standard Gaussian random variables~$(\bX_v)_{v \in V(\Primal_{d,c})}$ for which the correlation~$\E[\bX_{u}\bX_{v}]$ --- ``on average'', over all edges $\{u,v\} \in E(\Primal_{d,c})$ --- is very negative.  It's simpler, and stronger, to look for such a Gaussian process in which $\E[\bX_u \bX_v] = \vr$ for \emph{every} edge~$\{u,v\}$, with $\vr$ as negative as possible.  Such solutions would give an upper bound for the Lov{\'a}sz theta value,  $\LTheta{\Primal_{d,c}} \leq 1-1/\vr$, while still giving an SDP lower bound of $\SDP(\Primal_{d,c}) \geq \frac12 - \frac12 \vr$.  In turn, we would have such a Gaussian process provided it satisfied
\begin{equation}    \label{eqn:infinite-eigenvalue}
    \phantom{\quad \text{for all } v \in V(\Primal_{d,c})}
    \frac{1}{\DEG}\sum_{u \sim v 
    } \bX_u = \vr \bX_v \quad \text{for all } v \in V(\Primal_{d,c}),
\end{equation}
where, as before, $\DEG = (c-1)d$ is the degree of each~$v$.  This is the ``eigenvalue equation'' for $A_{\Primal_{d,c}}$ for $\lambda = \DEG\vr$.  Thus one may suspect that \Cref{eqn:infinite-eigenvalue} is possible whenever $\lambda = \DEG\vr \in \spec(A_{\Primal_{d,c}})$.  Given $\spec(A_{\Primal_{d,c}})$ as in \Cref{eqn:infinite-spectra}, we may therefore hope to obtain the desired Gaussian process for any
\begin{equation}    \label{eqn:cf}
    \vr \in \bracks*{\frac{\llower^2 - d}{\DEG}, \frac{\lupper^2 - d}{\DEG}} = \bracks*{1 - \frac{(1+\rho_1)^2}{\DEG}, 1 - \frac{(1-\rho_1)^2}{\DEG}};
\end{equation}
in particular, for the most negative such value,
\begin{equation}    \label{eqn:vrstar}
    \vrstar = 1 - \frac{(1+\rho_1)^2}{\DEG}.
\end{equation}
This would lead to the lower bound
\[
    \SDP(\Primal_{d,c}) \geq \frac12 - \frac12 \vrstar = \frac{(1+\rho_1)^2}{2\DEG}.
\]
In fact, since $\Primal_{d,c}$ is a vertex-transitive graph, it follows from a theorem of Harangi and Vir{\'a}g that such Gaussian processes do exist, and they can be constructed in a simple fashion as ``linear block factors of IIDs'':
\begin{theorem}  (\cite[Theorem~4]{HV15}.)                                   \label{thm:harangi-virag}
    Let $G$ be an infinite vertex-transitive graph with adjacency operator~$A_G$. Then for each $\lambda \in \Spec(A_G)$, there is an $\mathrm{Aut}(G)$-invariant standard Gaussian process $(\bX_v)_{v \in V(G)}$ for which $\sum_{u \sim v} \bX_u = \lambda \bX_v$ holds for all $v \in V(G)$.  Furthermore, the process can be approximated (in distribution) by a ``linear block factor of IID process'', meaning one that is constructed as follows:  $(\bZ_v)_{v \in V(G)}$ are chosen as IID standard Gaussians, and then $\bX_v$ is set to be a fixed linear function~$f$ of those $\bZ_u$'s which have $\dist_G(u, v) \leq L$, where $L$ is a finite ``radius''.
\end{theorem}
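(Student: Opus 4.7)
The plan is to construct the invariant Gaussian process from the spectral theory of $A_G$ and then realize it as a distributional limit of linear block factors of IID by approximating its covariance with compactly supported ones. For \textbf{Step 1 (covariance from the spectrum)}, view $A_G$ as a bounded self-adjoint operator on $\ell^2(V(G))$, with projection-valued spectral measure $dE$. Fix a base vertex $o$ and let $\mu_o(\cdot) = \langle \delta_o, E(\cdot)\delta_o\rangle$ be the scalar spectral measure at $o$; by vertex-transitivity $\mu_o$ does not depend on $o$, and $\mathrm{supp}(\mu_o) = \Spec(A_G)$. For $\lambda \in \Spec(A_G)$, define the kernel
\[
\Phi_\lambda(u,v) = \lim_{\eps \to 0} \frac{\langle \delta_u, E([\lambda - \eps, \lambda + \eps])\,\delta_v\rangle}{\mu_o([\lambda - \eps, \lambda + \eps])},
\]
interpreted if needed as a weak subsequential limit on the positive semidefinite cone. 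The resulting $\Phi_\lambda$ is positive semidefinite, $\mathrm{Aut}(G)$-invariant, normalized so $\Phi_\lambda(v,v) = 1$, and satisfies $(A_G \Phi_\lambda)(u,v) = \lambda \Phi_\lambda(u,v)$, because $A_G$ acts as multiplication by $\lambda$ on the range of $E$ restricted to a shrinking window around $\lambda$, up to error $o(1)$.

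\textbf{Step 2 (the Gaussian process).} By Kolmogorov's extension theorem, $\Phi_\lambda$ is the covariance of a centered Gaussian process $(\bX_v)_{v \in V(G)}$. Each $\bX_v$ is standard since $\Phi_\lambda(v,v)=1$, and $\mathrm{Aut}(G)$-invariance is inherited from the kernel. Setting $\bY_v = \sum_{u\sim v}\bX_u - \lambda \bX_v$, an unfolding of $\E[\bY_v^2]$ into $\Phi_\lambda$-values vanishes using the eigenvalue equation for $\Phi_\lambda$ applied in each argument separately. Hence $\bY_v = 0$ almost surely for each $v$, yielding the claimed pointwise linear relation at every vertex simultaneously.

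\textbf{Step 3 (block-factor approximation).} For an approximating process of radius $L$, take polynomials $p_L$ of degree $L$ whose $L^2(\mu_o)$-mass concentrates in a window of width $\eps_L \to 0$ about $\lambda$; such $p_L$ exist by Weierstrass approximation applied to a narrow bump around $\lambda$ inside the compact interval $\Spec(A_G)$. Define
\[
\bX^{(L)}_v = \frac{1}{\|p_L(A_G)\delta_o\|} \sum_{w} p_L(A_G)(v,w)\,\bZ_w
\]
with $\{\bZ_w\}$ IID standard Gaussians. By construction $(\bX^{(L)}_v)$ is standard, $\mathrm{Aut}(G)$-invariant, and $\bX^{(L)}_v$ depends only on the $\bZ_w$ with $\mathrm{dist}(w,v) \leq L$. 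Its covariance equals $\langle \delta_u, p_L(A_G)^2 \delta_v\rangle / \|p_L(A_G)\delta_o\|^2$, which tends to $\Phi_\lambda(u,v)$ as $L \to \infty$ by the concentration of $p_L^2\, d\mu_o$ near $\lambda$, giving convergence in distribution on every finite set of vertices.

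The subtle step is \textbf{Step 1}: producing a \emph{bounded} invariant eigenkernel when $\lambda$ is typically a point of continuous spectrum with no honest eigenfunction. The naive normalized limit can be justified at $\mu_o$-almost every $\lambda$ via Radon-Nikodym differentiation of the joint spectral measures $\langle \delta_u, dE\,\delta_v\rangle$ with respect to $\mu_o$, producing a $\mu_o$-a.e.\ defined $\Phi_\lambda$; extending to every $\lambda \in \Spec(A_G)$, including boundary points, needs a compactness argument on the set of positive semidefinite kernels valued in $[-1,1]$ (in the topology of pointwise convergence), together with the fact that $A_G$ acts locally as a bounded finite-range difference operator, so the eigenvalue equation survives pointwise limits.
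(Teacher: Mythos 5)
First, a point of orientation: the paper does not prove this statement at all --- it is quoted verbatim from Harangi--Vir\'ag \cite[Theorem~4]{HV15} and then explicitly \emph{not} used as a black box. What the paper proves instead (Theorem~\ref{thm:idealized-FIID} and its relatives) is the special case $G = \Primal_{d,c}$, by a completely concrete route: it writes down the linear factor $\bX_v = \gamma\sum_\ell r^\ell \sum_{\dist(w,v)=\ell}\bZ_w$ with geometric coefficients, and computes $\Var[\bX_v]$ and $\E[\bX_u\bX_v]$ in closed form from the distance-regularity intersection numbers $p^h_{j,k}$ of Proposition~\ref{prop:pijk}. Your proposal is instead a proof of the general theorem via the spectral theorem: normalized spectral windows around $\lambda$ give an invariant positive semidefinite eigenkernel $\Phi_\lambda$, Kolmogorov extension gives the process, and polynomials $p_L(A_G)$ applied to IID Gaussians give the block factors. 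This is essentially the strategy of the cited source, and your Steps 2 and 3 are sound (the variance computation $\E[\bY_v^2]=0$ is correct, countability of $V$ upgrades it to an almost-sure simultaneous identity, and bump polynomials such as $(1-(x-\lambda)^2/C^2)^{\lfloor L/2\rfloor}$ do concentrate $p_L^2\,d\mu_o$ near any $\lambda\in\mathrm{supp}(\mu_o)=\Spec(A_G)$). What the abstract route buys is generality; what the paper's explicit route buys is exactly what it needs downstream and what the black box does not supply: closed-form correlations at \emph{every} distance $h$ (Theorem~\ref{thm:full-FIID}, needed for the triangle inequalities), an edge-signed variant (Corollary~\ref{cor:signed-FIID}), and quantitative truncation error $O((\rho_1 r)^L)$, which is what lets the construction survive on random lifts with a few short cycles.

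The one genuine loose end in your write-up is the compatibility of Steps 1 and 3. You define $\Phi_\lambda$ as a (subsequential) limit of normalized indicator windows $E([\lambda-\eps,\lambda+\eps])$, but your block-factor covariances converge (again possibly only subsequentially) to a limit of normalized polynomial windows $p_L^2\,d\mu_{u,v}/\int p_L^2\,d\mu_o$; nothing you have written forces these two limits to coincide, and the theorem requires the block factors to approximate \emph{the} process you built, not merely some eigenvector process. The fix is easy and you should make it explicit: skip the indicator windows entirely and \emph{define} $\Phi_\lambda(u,v)$ as a pointwise subsequential limit (via Tychonoff/diagonalization on $[-1,1]^{V\times V}$) of the block-factor covariances themselves. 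Pointwise limits preserve positive semidefiniteness, the normalization $\Phi_\lambda(v,v)=1$, $\mathrm{Aut}(G)$-invariance, and --- because $A_G$ has finite range at each vertex --- the approximate eigenvalue identity, whose error is $O(\eps_L)$ by Cauchy--Schwarz against $\int p_L^2\,d\mu_o$. Then Step 3 holds by construction along the chosen subsequence, and Step 2 is unchanged. With that rearrangement the argument is complete.
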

As mentioned in \Cref{sec:prior}, results of this nature date back at least to the work of Elon~\cite{Elo09}, who constructed such ``Gaussian waves'' on the infinite $d$-regular tree~$\Tree_d$.  An important aspect of \Cref{thm:harangi-virag} is the ``block'' aspect, meaning that each~$\bX_v$ is defined just from a ``local'', finite number of~$\bZ_u$'s.  Thus we can hope to use the construction for (primal graphs of) large but finite $(c,d)$-biregular graphs with large girth, which locally look tree-like.

That said, we cannot quite use the \Cref{thm:harangi-virag} as a black box for our purposes, for a few reasons.  One reason is that we want to apply it to large random biregular graphs, which will not strictly speaking have low girth, but will merely have ``few'', ``far apart'' short cycles.  Second, we will be constructing SDP solutions for \emph{edge-signed} graphs, a slight generalization of \Cref{thm:harangi-virag}'s framework.  Finally, it will be nice for us to reason about $\E[\bX_u \bX_v]$ not just for adjacent $u$,~$v$.

On the other hand, the construction of the linear block factor of IID process for $\Primal_{d,c}$ is a fairly straightforward generalization of earlier concrete constructions for~$\Tree_d$ such as the one in~\cite{CGHV15}.  We present it in the next section.

\subsection{Linear factors of IIDs}
Here we essentially prove  \Cref{thm:harangi-virag} in the special case of $\Primal_{d,c}$. The proof closely follows~\cite[Section~3]{CGHV15}.
\begin{theorem}  \label{thm:idealized-FIID}
    Let $c,d \geq 2$ and let $\lambda \in \spec(A_{\Primal_{d,c}})^\circ = (\llower^2 - d, \lupper^2 - d)$.  Then there exist~$L \in \N$ and reals $a_0, a_1, \dots, a_L$ such that the following holds:  When $(\bZ_v)_{v \in V(\Primal_{d,c})}$ are IID standard Gaussians, and the random variables $(\bX_v)_{v \in V(\Primal_{d,c})}$ are formed via
    \begin{equation}    \label{eqn:the-FIID}
        \bX_v = \sum_{\ell = 0}^L \sum_{\substack{w \in V(\Primal_{d,c}) \\ \dist(w,v) = \ell}} a_\ell \bZ_w,
    \end{equation}
    then we have $\E[\bX_v^2] = 1$ for all $v$ (so that the $\bX_v$'s are jointly standard Gaussians), and ${\E[\bX_u\bX_v] = \frac{\lambda}{\DEG}}$ for all $\{u,v\} \in E(\Primal_{d,c})$.  In other words (cf.~\Cref{eqn:cf}):
    \begin{equation}    \label{eqn:finish}
         \text{for any} \quad 1 - \frac{(1+\rho_1)^2}{\DEG} < \vr < 1 - \frac{(1-\rho_1)^2}{\DEG} \quad \text{we can achieve $\E[\bX_u\bX_v] = \vr$} \quad \forall \{u,v\} \in E(\Primal_{d,c}).
    \end{equation}
\end{theorem}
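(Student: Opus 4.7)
The plan is to exploit the distance-regularity of $\Primal_{d,c}$ (\Cref{prop:pijk}) and realize $\bX_v$ as an approximate eigenfunction of $A_{\Primal_{d,c}}$ via orthogonal-polynomial machinery. First, note that once $\bX_v$ is defined by~\eqref{eqn:the-FIID} with radial coefficients $(a_\ell)$, distance-regularity forces the covariance
\[
r_h(a) := \E[\bX_u\bX_v] = \sum_{\ell,\ell'} a_\ell a_{\ell'}\, p^h_{\ell,\ell'}
\]
to depend only on $h = \dist(u,v)$. The theorem thus reduces to finding, for every $\vr$ in the open interval of~\eqref{eqn:finish}, coefficients $a\in\R^{L+1}$ with $r_0(a) = 1$ and $r_1(a) = \vr$.

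The guiding heuristic is that an exact joint-Gaussian eigenfunction satisfying $A_{\Primal_{d,c}}\bX = \lambda_0\bX$ with $\E[\bX_v^2]=1$ automatically obeys $\DEG\cdot r_1 = \sum_{w\sim v}\E[\bX_w\bX_v] = \lambda_0$, i.e.\ $r_1 = \lambda_0/\DEG$. Since dividing the spectrum $\spec(A_{\Primal_{d,c}})^\circ = (\llower^2-d,\lupper^2-d)$ of~\eqref{eqn:infinite-spectra} by $\DEG$ recovers exactly the interval in~\eqref{eqn:finish}, it suffices to approximate such eigenfunctions by FIIDs of finite depth. The intersection numbers $c_h, \alpha_h, b_h$ of $\Primal_{d,c}$ (readable from \Cref{prop:pijk}) furnish a three-term recurrence $\lambda q_h(\lambda) = c_h q_{h-1}(\lambda) + \alpha_h q_h(\lambda) + b_h q_{h+1}(\lambda)$ with $q_0\equiv 1$ and $q_1(\lambda) = \lambda/\DEG$; by Favard/Stieltjes theory combined with~\eqref{eqn:infinite-spectra}, there is a probability measure $\sigma$ on $\spec(A_{\Primal_{d,c}})$, absolutely continuous on $[\llower^2-d,\lupper^2-d]$, making $\{q_h\}$ orthogonal with $\int q_h^2\,d\sigma = 1/|S_h|$ (where $|S_h| = p^0_{h,h}$ is the sphere size). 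The double-counting symmetry $|S_h|p^h_{j,k} = |S_j|p^j_{h,k}$ applied to the recurrence then yields the Plancherel-type identities
\[
r_0(a) = \int |\phi|^2\,d\sigma,\qquad r_1(a) = \tfrac{1}{\DEG}\int \lambda\,|\phi|^2\,d\sigma,
\]
where $\phi(\lambda) := \sum_{\ell=0}^L a_\ell|S_\ell| q_\ell(\lambda)$ is the spectral transform of~$a$.

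To finish, fix $\lambda_0 = \DEG\vr$ in the interior of $\spec(A_{\Primal_{d,c}})$ and, for small $\delta>0$, pick a degree-$L$ polynomial $\phi$ (a Fejér-style peak, legitimate because $\sigma$ is absolutely continuous near $\lambda_0$) such that $|\phi|^2\,d\sigma$ is a probability measure concentrated in $[\lambda_0-\delta,\lambda_0+\delta]$; read off $(a_\ell)$ from $\phi$ via orthogonality. The Plancherel identities then give $r_0(a) = 1$ and $r_1(a) \to \lambda_0/\DEG = \vr$ as $\delta\to 0$; a short continuity argument (sliding the peak center through a small neighborhood of $\lambda_0$ and invoking the intermediate value theorem) upgrades this to $r_1(a) = \vr$ exactly, at some finite $L$. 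The main technical obstacle is identifying the spectral measure $\sigma$ precisely enough on $(\llower^2-d,\lupper^2-d)$ to support the Fejér construction; this is classical and follows from \Cref{prop:triangle-replace-eigs} together with the Kesten--McKay density on $(c,d)$-biregular trees cited at the end of \Cref{sec:infinite-bireg-notation}.
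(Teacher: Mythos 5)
Your proposal is correct in substance but takes a genuinely different—and more general—route than the paper. You invoke the full orthogonal-polynomial/Plancherel machinery of the distance-regular graph $\Primal_{d,c}$: the covariance of the radial factor is $\phi(A)^2$ for a degree-$L$ polynomial $\phi$, whence $r_0=\int\phi^2\,d\sigma$ and $r_1=\frac{1}{\DEG}\int\lambda\,\phi^2\,d\sigma$, and a Christoffel--Darboux/Fej\'er peak at an interior point $\lambda_0=\DEG\vr$ plus an intermediate-value argument realizes any $\vr$ in the open interval of \Cref{eqn:finish}. This is essentially a self-contained re-derivation of \Cref{thm:harangi-virag} for $\Primal_{d,c}$; its one nontrivial external input, which you correctly flag, is that the vertex spectral measure assigns positive mass to every neighborhood of every interior point of $[\llower^2-d,\lupper^2-d]$ (the Godsil--Mohar absolutely continuous density there), without which the normalized concentrated peak cannot be built. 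The paper sidesteps all of this with the explicit one-parameter ansatz $a_\ell=\gamma r^\ell$: \Cref{prop:pijk} lets one sum the resulting geometric series in closed form, giving $\Var[\bX_v]=\gamma^2\frac{1+(\csq r)^2}{1-(\rho_1 r)^2}$ and, after normalizing $\gamma$, $\E[\bX_u\bX_v]=1-\frac{(1-r)^2}{1+(\csq r)^2}$, which is monotone on $r\in(-\rho_1^{-1},\rho_1^{-1})$ and sweeps out exactly the target open interval; truncating at depth $L$ perturbs everything by $O((\rho_1 r)^L)$, which is absorbed because the interval is open. Beyond avoiding the spectral-measure input, the explicit route buys the closed form for $\E[\bX_u\bX_v]$ at every distance $h$ (\Cref{thm:full-FIID}), whose monotone decay in $h$ is what \Cref{rem:for-triangle-ineqs} uses to verify the triangle inequalities; your Fej\'er construction would need a separate argument to control $r_h$ for $h\geq 2$ if you wanted those downstream consequences.
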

\begin{proof}
    Let us temporarily relax the requirement that $L$ be finite.  To that end, we will consider defining
    \begin{equation}    \label{eqn:X-FIID}
        \bX_v = \gamma \cdot \sum_{\ell=0}^\infty \ \sum_{\substack{w \in V(\Primal_{d,c}) \\ \vdist(w,v) = \ell}} r^\ell \bZ_w,
    \end{equation}
    for constants $\gamma \in \R^+$, $r \in \R$.
    It follows that for two vertices $u, v \in V(\Primal_{d,c})$ with $\dist(u,v) = h$, we have
    \begin{equation}            \label{eqn:general-corr}
        \E[\bX_u \bX_v] = \gamma^2 \cdot \sum_{j,k = 0}^\infty p^h_{j,k} r^{j+k}.
    \end{equation}
    In this proof we focus only on $h = 0, 1$, saving $h > 1$ for \Cref{thm:full-FIID}. By \Cref{prop:pijk} we have
    \[
        \#\braces{w : \vdist(w,v) = \ell} = p^0_{\ell,\ell} =
                                                   \begin{cases}
                                                        1 & \text{if $\ell = 0$,} \\
                                                        (\rho_1^2 + \csq^2) \cdot \rho_1^{2(\ell-1)} & \text{if $\ell > 0$,}
                                                   \end{cases}
    \]
    where recall $\rho_1^2 + \csq^2 = (c-1)d$ and  $\rho_1^2 = (c-1)(d-1)$. Thus
    \begin{equation}    \label{eqn:the-variance}
        \E[\bX_v^2] = \Var[\bX_v] = \gamma^2 \cdot \parens*{1 + \sum_{\ell=1}^\infty (\rho_1^2 + \csq^2) \cdot \rho_1^{2(\ell-1)} \cdot r^{2\ell}} = \gamma^2 \cdot \frac{1+(\csq r)^2}{1-(\rho_1 r)^2}, \quad \text{provided $|r| < \rho_1^{-1}$.}
    \end{equation}
    By choosing $\gamma$ such that
    \[
        \gamma^2 = \frac{1-(\rho_1 r)^2}{1+(\csq r)^2}
    \]
    we get $\Var[\bX_v] = 1$.  On the other hand, for fixed $u, v$ with $\vdist(u,v) = 1$ we have
    \[
        \#\braces*{w : \vdist(u,w) = \ell_1, \vdist(v,w) = \ell_2} = p^1_{\ell_1,\ell_2} =
                \begin{cases}
                        (\csq^2-1) \cdot \rho_1^{2(\ell - 1)} & \text{if $\ell_1 = \ell_2 > 0$,} \\
                        \rho_1^{2\ell_1} & \text{if $\ell_2 = \ell_1 + 1$,} \\
                        \rho_1^{2\ell_2} & \text{if $\ell_1 = \ell_2 + 1$,} \\
                        0 & \text{else,}
                \end{cases}
    \]
    where recall $\csq^2 -1 = c-2$. Thus
    \begin{equation}    \label{eqn:the-corr}
        \E[\bX_u \bX_v] = \gamma^2 \cdot \parens*{\sum_{\ell = 1}^\infty (\csq^2-1)\cdot \rho_1^{2(\ell-1)}\cdot r^{2\ell} + \sum_{\ell = 0}^\infty 2\cdot \rho_1^{2\ell} \cdot r^{2\ell+1}} = \gamma^2 \cdot \frac{1+(\csq r)^2 - (1-r)^2}{1-(\rho_1 r)^2},
    \end{equation}
    and so by our choice of $\gamma$ we conclude
    \[
        \E[\bX_u \bX_v]  = 1 - \frac{(1-r)^2}{1+(\csq r)^2}.
    \]
    Calculus shows that the expression on the right is increasing for $r$ in the range $[-\csq^{-2}, 1]$, which is a superset of the range that \Cref{eqn:the-variance} allows us for~$r$, namely $(-\rho_1^{-1}, \rho_1^{-1})$.  This establishes \Cref{eqn:finish}; the only catch is that we haven't used a finite~$L$.  But this can be achieved by truncating the sum in \Cref{eqn:X-FIID} to $\ell \leq L$ for $L$ sufficiently large.  This truncation only changes \Cref{eqn:the-variance,eqn:the-corr} by a quantity that decays like~$(\rho_1 r)^L$.  Thus the change in $\E[\bX_u \bX_v]$ from truncation can be made arbitrarily small, and this is acceptable for the conclusion \Cref{eqn:finish} because the desired interval of~$\vr$'s is open.
\end{proof}
\begin{corollary}                                       \label{cor:signed-FIID}
    \Cref{thm:idealized-FIID} also holds for the primal graph~$\mathbb{I}$ of any edge-signed version $\mathbb{X}$ of $\Tree_{d,c}$ (as defined in \Cref{sec:graph-notation}), in the sense of having $\E[\bX_u\bX_v] = \xi_{uv} \vr$ for all $\{u,v\} \in E(\mathbb{I})$, where $\xi_{uv}$ denotes the sign of edge $\{u,v\}$.
\end{corollary}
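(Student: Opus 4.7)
The plan is to reduce to the unsigned case via a gauge transformation, exploiting the fact that $\mathbb{X}$ is a signing of the tree~$\Tree_{d,c}$. Fix an arbitrary root of $\Tree_{d,c}$, and for each vertex $v$ of $\mathbb{X}$ let $\sigma_v \in \{\pm 1\}$ be the product of edge-signs along the unique path from the root to $v$ (with $\sigma_{\text{root}} = 1$). Then for every edge $\{a,i\}$ of $\mathbb{X}$ we have $\xi_{ai} = \sigma_a \sigma_i$. Consequently, whenever constraint $a$ is adjacent to variables $i$ and $j$ in $\mathbb{X}$, the sign on the resulting edge $\{i,j\}$ of the primal graph $\mathbb{I}$ is
$$\xi_{ij} \;=\; \xi_{ai}\xi_{aj} \;=\; \sigma_a^2 \sigma_i \sigma_j \;=\; \sigma_i \sigma_j,$$
which depends only on the variable-vertex gauges (and not on the constraint-vertex gauges, since $\sigma_a^2 = 1$).

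Next, let $(\bX_v)_{v \in V(\Primal_{d,c})}$ be the process furnished by \Cref{thm:idealized-FIID} for unsigned $\Primal_{d,c}$ at the desired value of $\vr$, and define $\widetilde{\bX}_v = \sigma_v \bX_v$ for each variable vertex $v$. Since $|\sigma_v| = 1$, each $\widetilde{\bX}_v$ is again a standard Gaussian, and the identity
$$\widetilde{\bX}_v \;=\; \sum_{\ell = 0}^L\ \sum_{\substack{w \in V(\mathbb{I}) \\ \vdist(w,v) = \ell}} (\sigma_v a_\ell)\, \bZ_w$$
exhibits the new family as a linear function of the same IID Gaussians $\bZ_w$ with $\vdist(w,v) \leq L$, i.e.\ as a (no longer automorphism-invariant) linear block factor of IIDs. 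For any edge $\{i,j\}$ of $\mathbb{I}$ we then compute
$$\E[\widetilde{\bX}_i \widetilde{\bX}_j] \;=\; \sigma_i \sigma_j \cdot \E[\bX_i \bX_j] \;=\; (\sigma_i\sigma_j)\cdot \vr \;=\; \xi_{ij}\, \vr,$$
as required, and the achievable range of $\vr$ is inherited verbatim from \Cref{thm:idealized-FIID}.

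There is essentially no substantive obstacle here: the entire argument hinges on the elementary fact that an edge-signing of a tree is always gauge-trivial, and this fact is available precisely because $\mathbb{X}$ has no cycles. (On a graph with cycles, the same reduction would require the signing to be ``balanced'', i.e.\ to have $+1$ product around every cycle — a condition that generic signings on the finite random-lift graphs of the next sections would \emph{not} satisfy. Fortunately, for the infinite biregular tree, this issue does not arise.)
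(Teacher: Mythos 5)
Your proof is correct and is essentially the paper's argument: both exploit that a signing of the primal graph of a signed tree is balanced (gauge-trivial), the paper phrasing this by inserting a path-sign factor $\xi[w\leftrightarrow v]$ into the coefficients of the block factor, and you by applying the vertex gauge $\widetilde{\bX}_v = \sigma_v \bX_v$ to the finished process — the two constructions differ only by replacing each $\bZ_w$ with $\sigma_w \bZ_w$, which preserves the IID Gaussian law. Your closing observation that the needed bound $\abs{\E[\widetilde{\bX}_u\widetilde{\bX}_v]} \leq \abs{\vr}$ for non-adjacent pairs is also preserved (since $\abs{\sigma_u\sigma_v}=1$) matches what the paper needs for \Cref{rem:for-triangle-ineqs}.
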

\begin{proof}
    Assume we have signs $\xi_{av} \in \{\pm 1\}$ for each constaint/variable edge $\{a,v\}$ in $\mathbb{X}$, and therefore signs $\xi_{uv} = \xi_{au}\xi_{av}$ for each edge $\{u,v\}$ in $\mathbb{I}$.  It's clear that for any closed walk in the tree~$\mathbb{X}$, the product of the edge-signs along the walk is~$1$; by construction, it follows that the same is true in~$\mathbb{I}$.  Thus for any $u,v \in V(\mathbb{I})$ (not necessarily adjacent) we can unambiguously define $\xi[u \leftrightarrow v]$ as the product of edge-signs along any $uv$-path in~$\mathbb{I}$. We now alter the construction in \Cref{eqn:X-FIID} as follows:
    \[
        \bX_v = \gamma \cdot \sum_{\ell=0}^\infty \ \sum_{\substack{w \in V(\Primal_{d,c}) \\ \vdist(w,v) = \ell}} \xi[w \leftrightarrow v] r^\ell \bZ_w,
    \]
    Clearly $\Var[\bX_v]$ is unchanged.  As for $\E[\bX_u \bX_v]$, the contribution from each $\bZ_w$ now yields an additional factor of $\xi[w\leftrightarrow u]\xi[w \leftrightarrow v] = \xi[u \leftrightarrow v] = \xi_{uv}$.  Thus each $\E[\bX_u \bX_v]$ changes by a factor of $\xi_{uv}$, as desired.  The rest of the proof is the same.
\end{proof}
\begin{theorem}                                       \label{thm:full-FIID}
    In the $L = \infty$ setting of  \Cref{thm:idealized-FIID}, we in fact obtain, for all $r \in (-\rho_1^{-1}, \rho_1^{-1})$ and all $u,v \in V(\Primal_{d,c})$,
    \[
        \E[\bX_u \bX_v] =  r^h\parens*{1 + \frac{h(1-r)(1 + \csq^2 r)}{1+(\csq r)^2}}, \quad \text{where $h = \dist(u,v)$.}
    \]
    (The $r = 0$ case is of course trivial, with $\bX_v = \bZ_v$.)
\end{theorem}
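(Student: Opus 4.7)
The plan is to evaluate the general correlation formula \Cref{eqn:general-corr},
\[
    \E[\bX_u \bX_v] = \gamma^2 \sum_{j,k=0}^\infty p^h_{j,k} r^{j+k},
\]
for arbitrary $h = \vdist(u,v)$, using the explicit distance counts in \Cref{prop:pijk} together with the normalization $\gamma^2 = (1-(\rho_1 r)^2)/(1+(\csq r)^2)$ already derived in the proof of \Cref{thm:idealized-FIID} (which treated $h = 0, 1$). The general-$h$ case is the same computation, only with more sub-cases to juggle.

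First I would reparameterize the inner sum by $\ell = \min(j,k)$ and $t = |j - k|$, so that $j + k = 2\ell + t$ and pairs with $j \neq k$ contribute with multiplicity $2$. By \Cref{prop:pijk} only $0 \leq t \leq h$ are relevant, and the remaining indices split into a same-parity sub-case ($h - t$ even) and an opposite-parity sub-case ($h - t$ odd). In each sub-case the tail sum over $\ell$ above the relevant threshold is a geometric series in $q = (\rho_1 r)^2$. The boundary terms, where $\ell$ attains the threshold and $p^h_{\ell, \ell+t} = 1$, each contribute a pure $r^h$ (since $2\ell + t = h$ at the boundary) and, after multiplicities, jointly sum to $(h+1)r^h$. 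The same-parity tails, combining the $t = h$ sub-case (governed by the $\rho_1^{2\ell}$ branch of \Cref{prop:pijk}) with the remaining same-parity sub-cases (governed by the $(\rho_1^2 - \csq^2)\rho_1^{2(\ell - 1)}$ branch), sum to $[(h+1)\rho_1^2 - (h-1)\csq^2]r^{h+2}/(1-q)$. The opposite-parity tails sum to $h(\csq^2 - 1)r^{h+1}/(1-q)$.

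Collecting everything, the full inner sum equals $r^h\bigl[(h+1)(1-q) + ((h+1)\rho_1^2 - (h-1)\csq^2)r^2 + h(\csq^2-1)r\bigr]/(1-q)$. Multiplying by $\gamma^2$ cancels the $1 - q$ factor, and the resulting numerator rearranges (using $\csq^2 = c-1$ and $\rho_1^2 = (c-1)(d-1)$) to $(1 + \csq^2 r^2) + h(1-r)(1 + \csq^2 r)$, giving exactly the claimed formula. The only real obstacle is bookkeeping: carefully tracking the two parity sub-cases, the multiplicities ($1$ versus $2$ for $t = 0$ versus $t \geq 1$), and the exceptional $t = h$ sub-case whose formula in \Cref{prop:pijk} differs from the generic same-parity formula. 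No new ideas beyond those already used in the $h = 0, 1$ cases are needed.
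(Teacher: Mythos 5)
Your proposal is correct and follows exactly the route the paper intends: the paper's proof of \Cref{thm:full-FIID} simply says to substitute \Cref{prop:pijk} into \Cref{eqn:general-corr} with "calculations omitted," and your reparameterization by $\ell = \min(j,k)$, $t = |j-k|$, the multiplicity and parity bookkeeping, and the three intermediate sums (boundary $(h+1)r^h$, same-parity tails $[(h+1)\rho_1^2 - (h-1)\csq^2]r^{h+2}/(1-q)$, opposite-parity tails $h(\csq^2-1)r^{h+1}/(1-q)$) all check out and recombine to the stated formula after multiplying by $\gamma^2$.
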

\begin{proof}
    Allowing $L$ to be infinite and returning to \Cref{eqn:general-corr}: for $u,v \in V(\Primal_{d,c})$ with ${\dist(u,v) = h}$, one can use \Cref{prop:pijk} to show (calculations omitted) that
    \[
        \E[\bX_u\bX_v] = \gamma^2 \cdot \frac{r^h(1+(\csq r)^2 + h(1-r)(1 + \csq^2 r))}{1-(\rho_1 r)^2}
    \]
    provided $|r| < \rho_1^{-1}$. The result follows.
\end{proof}
\begin{remark}  \label{rem:for-triangle-ineqs}
    One can show that the expression in \Cref{thm:full-FIID} has the property that its absolute value is a strictly decreasing function of~$h$ for every $r \neq 0$. (Indeed, it decreases exponentially.)  This is the key takeaway of the theorem, implying that in the setting of \Cref{cor:signed-FIID}, $\abs{\E[\bX_u \bX_v]} \leq \abs{\vr}$ for \emph{all} distinct pairs $u,v \in \mathbb{I}$ (with equality when $\{u,v\} \in E(\Primal_{d,c})$).
\end{remark}

\subsection{SDP solutions for randomly lifted/signed graphs}
In this section, let us fix $d \geq c \geq 2$, a small $\eps > 0$,
\[
    \vr = 1 - \frac{(1+\rho_1)^2}{\DEG} + \eps,
\]
and an~$L = L(\eps, c, d)$ such that \Cref{thm:idealized-FIID} and \Cref{cor:signed-FIID} hold.  Since each $\bX_v$ constructed therein depends only on the $\bZ_v$'s at distance at most~$L$ in $\Primal_{d,c}$ (and hence distance at most $2L$ in $\Tree_{d,c}$), we see that the exact same construction works equally well on any finite primal graph constructed from a $(c,d)$-biregular graph of girth exceeding~$4L$.  Thus (using also \Cref{rem:for-triangle-ineqs}) we immediately obtain:
\begin{theorem}                                     \label{thm:sdp-for-girth1}
    Let $H$ be any edge-signed $(c,d)$-biregular graph of girth exceeding $4L$ and let $I$ be its associated primal graph, with edge signs $\xi_{uv}$, $\{u,v\} \in E(I)$.  Then one can assign joint standard Gaussians $\bX_v$ to the vertices $v \in V(I)$ such that $\E[\bX_u \bX_v] = \xi_{uv} \vr$ for each edge $\{u,v\} \in E(I)$.  Furthermore, $\abs{\E[\bX_u \bX_v]} \leq \abs{\vr}$ for all distinct $u,v \in V(I)$.  As consequences:
    \begin{enumerate}[label=(\roman*)]
        \item \label{item:lovasz} If $H$ is unsigned, $\LTheta{I} \leq 1-1/\vr$.
        \item \label{item:2xor} If we view $I$ as a 2XOR-SAT instance, we have $\SDPtri(I) \geq \half - \half \vr = \frac{(1+\rho_1)^2}{2\DEG} - \eps$.
        \item \label{item:3nae} If $c = 3$ and we view $I$ as a $d$-regular NAE-3SAT instance, we have
                $
                    \SDPtri(I) \geq \frac13 - \frac13 \vr = \frac98 - \frac38\cdot\frac{\parens*{\sqrt{d-1} - \sqrt{2}}^2}{d} - \eps.
                $
    \end{enumerate}
\end{theorem}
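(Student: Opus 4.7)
The plan is to transplant the infinite Gaussian construction of \Cref{cor:signed-FIID} directly onto the finite graph $I$. I would sample IID standard Gaussians $\bZ_v$ at each $v \in V(I)$ and define
\[
    \bX_v = \gamma \sum_{\ell=0}^{L} \sum_{\substack{w \in V(I) \\ \dist(w,v) = \ell}} \xi[w \leftrightarrow v]\, r^{\ell}\, \bZ_w,
\]
where $\gamma, r$ are the constants from \Cref{cor:signed-FIID} that produce the target correlation $\vr$. The first ingredient is a girth-to-local-isomorphism argument: because each edge of $I$ comes from two variables sharing a single constraint in $H$, distance $k$ in $I$ corresponds to bipartite distance $2k$ in $H$, and so the radius-$L$ ball of any $v \in V(I)$ is determined by the radius-$2L$ bipartite neighborhood of the corresponding variable in $H$. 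The assumption $\mathrm{girth}(H) > 4L$ forces this neighborhood to be a tree, isomorphic to the analogous piece of $\Tree_{d,c}$ with whatever signing $H$ prescribes, so its primal is isomorphic as an edge-signed graph to the matching region of the signed infinite primal $\mathbb{I}$. In particular, the sign product $\xi[w \leftrightarrow v]$ along any path inside the ball is well-defined since, as in $\mathbb{I}$, the signs on each constraint-triangle multiply to $+1$.

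Given this local isomorphism, every computation from \Cref{thm:idealized-FIID,thm:full-FIID} ports over verbatim: for any pair $u, v$ whose radius-$L$ balls jointly fit inside one isomorphic region, $\E[\bX_u \bX_v]$ equals the value predicted by the infinite formulas. This yields $\E[\bX_v^2] = 1$, $\E[\bX_u \bX_v] = \xi_{uv}\vr$ for $\{u,v\} \in E(I)$, and $|\E[\bX_u \bX_v]| \leq |\vr|$ for all distinct pairs, invoking \Cref{rem:for-triangle-ineqs}; pairs at $I$-distance exceeding $2L$ automatically contribute correlation $0$ by independence of the relevant $\bZ_w$'s.

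The three consequences are essentially bookkeeping: (i) in the unsigned case every edge correlation equals $\vr < 0$, which is a feasible Lov\'asz theta witness with value $1 - 1/\vr$; (ii) each 2XOR-SAT constraint $(u,v,\xi_{uv})$ contributes $\frac{1}{2} - \frac{1}{2} \xi_{uv}^2 \vr = \frac{1}{2} - \frac{1}{2}\vr$ to the SDP objective; (iii) the $3/2$ conversion from 2XOR-SAT to NAE-3SAT combined with the explicit formula for $\vrstar$ gives the stated quantity. \emph{The main obstacle} is verifying the triangle inequalities required for the $\SDPtri$ version: the crude estimate $|\E[\bX_u\bX_v]| \leq |\vr|$ implies all triangle inequalities once $3|\vr| \leq 1$, which just barely holds near the critical threshold $d = 13.5$ (where $|\vrstar| = 1/3$). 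For smaller $d$, where $|\vrstar|$ exceeds $1/3$, one must either clip $\vr$ at $-1/3$ (settling for the NAE-3SAT value of exactly $1$) or exploit the specific cancellation structure of the three signed correlations within each constraint triangle to extract a sharper triangle-inequality check.
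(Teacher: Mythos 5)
Your argument is the paper's argument: the paper proves this theorem in two lines by observing that the block factor of IID from \Cref{thm:idealized-FIID}/\Cref{cor:signed-FIID} is local (radius $L$ in the primal graph, hence radius $2L$ in $H$), so the girth hypothesis lets it be transplanted verbatim, and then it cites \Cref{rem:for-triangle-ineqs} to ``immediately obtain'' the consequences. Your local-isomorphism bookkeeping and your derivation of (i)--(iii) match this exactly.

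The obstacle you flag at the end is not a defect of your write-up --- it is a genuine gap in the theorem as stated, which the paper's own proof silently skips over. For any constraint triangle $\{u,v,w\}$ the signs satisfy $\xi_{uv}\xi_{vw}\xi_{wu}=+1$, so taking $(\eps_1,\eps_2,\eps_3)=(\xi_{uv},\xi_{vw},\xi_{wu})$ in the valid triangle inequality $\eps_1\rho_{uv}+\eps_2\rho_{vw}+\eps_3\rho_{wu}\ge -1$ forces $3\vr\ge -1$ \emph{for every signing}; your second escape route (``cancellation structure'') is therefore unavailable, and the same computation shows that $\SDPtri$ of any triangle-structured 2XOR instance is at most $\tfrac23$ (at most $1$ in the NAE-3SAT normalization). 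Consequently, for $c=3$ and $d<13.5$ one has $\vr=\vrstar+\eps<-\tfrac13$, the stated lower bounds in (ii) and (iii) exceed these hard caps, and the construction with that $\vr$ really does violate a triangle inequality on every constraint triangle. The correct repair is your first option: set $\vr=\max(\vrstar+\eps,-\tfrac13)$, which still lies in the open interval of \Cref{thm:idealized-FIID}; then $3|\vr|\le 1$, so \Cref{rem:for-triangle-ineqs} genuinely yields all triangle inequalities, and the conclusions become $\SDPtri(I)\ge\min\bigl(\tfrac12-\tfrac12\vrstar,\tfrac23\bigr)-O(\eps)$ and, for NAE-3SAT, $\ge\min(f(d),1)-O(\eps)$ --- exactly value $1$ in the regime $d<13.5$, which is all that \Cref{thm:our-mainest} requires. (Two side remarks: the plain $\SDP$, without triangle inequalities, does attain $\tfrac12-\tfrac12\vr$ for the unclipped $\vr$, since PSD-ness only needs $\vr\ge-\tfrac12$; and the ``$\tfrac13-\tfrac13\vr$'' in item (iii) should read $\tfrac34-\tfrac34\vr$, i.e.\ $\tfrac32$ times the 2XOR value, to match $f(d)$.)
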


We have the following corollary:
\begin{theorem}                                     \label{thm:sdp-for-girth2}
    Let $Y$ be a $(c,d)$-biregular bipartite graph and let $\bY_n$ be a random $n$-lift of~$Y$.  Let~$\bH_n$ denote an \emph{arbitrary} edge-signing of~$\bY_n$, and $\bI_n$ its associated primal graph.  Then:
    \begin{enumerate}
        \item With positive probability (depending only on~$d$ and~$\eps$), \Crefrange{item:lovasz}{item:3nae} of \Cref{thm:sdp-for-girth1} all hold.
        \item With high probability, \Cref{item:2xor,item:3nae} of \Cref{thm:sdp-for-girth1} hold with an additive loss of~$O(1/n)$.
    \end{enumerate}
    \end{theorem}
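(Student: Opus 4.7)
My plan is to handle (1) and (2) via a common construction. Fix $L = L(\eps, c, d)$ and the constants $\gamma, r$ from \Cref{thm:idealized-FIID}/\Cref{cor:signed-FIID}, so that the sign-corrected block-factor formula
\[
    \bX_v \;=\; \gamma \sum_{\ell = 0}^{L} \sum_{\substack{w \in V(\bI_n) \\ \dist_{\bI_n}(w,v) = \ell}} \xi[w\leftrightarrow v]\, r^{\ell}\, \bZ_w
\]
produces $\E[\bX_v^2] = 1$ and $\E[\bX_u \bX_v] = \xi_{uv}\vr$ (exactly, once the truncation slack is absorbed into $\eps$) whenever the radius-$(2L)$ ball around $v$ in $\bI_n$ is rooted-isomorphic to the corresponding ball of $\Primal_{d,c}$ --- equivalently, the radius-$(4L)$ ball around $v$ in $\bH_n$ contains no cycle.

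For part (1), I invoke \Cref{thm:girthy} with $g = 4L$: for sufficiently large $n$, $\bY_n$ has girth exceeding $4L$ with positive probability (depending only on $g$ and $Y$, hence on $\eps, c, d, Y$). On this event, every vertex of $\bI_n$ automatically has the required tree-like neighborhood, regardless of how $\bH_n$ is signed, and the construction verifies the hypothesis of \Cref{thm:sdp-for-girth1}, yielding (i)--(iii).

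For part (2), call $v \in V(\bI_n)$ \emph{bad} if its radius-$(4L)$ ball in $\bY_n$ contains a non-trivial cycle, and \emph{good} otherwise. By \Cref{thm:cycleless-neighborhoods}, the expected number of bad vertices is $K_0 = O((d+1)^{4L}) = O_{\eps,c,d}(1)$, and combining this with the Poisson-type tail of short-cycle counts from \Cref{thm:girthy}, the total count $B$ is bounded by a constant $K = K(\eps, c, d)$ with probability $1-o(1)$. On that event, define $\bX_v$ by the block-factor formula when $v$ is good, and set $\bX_v = \bZ'_v$ for each bad $v$, where the $\bZ'_v$ are fresh standard Gaussians independent of everything else. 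Then $\E[\bX_v^2] = 1$ for every $v$; for every edge with both endpoints good, $\E[\bX_u \bX_v] = \xi_{uv}\vr$ exactly; and for every edge touching a bad vertex, independence gives $\E[\bX_u \bX_v] = 0$. Writing $B' \leq K\DEG$ for the number of bad edges, the SDP objective satisfies
\[
    \SDP(\bI_n) \;\geq\; \frac{m-B'}{m}\parens*{\tfrac12 - \tfrac12\vr} + \frac{B'}{m}\cdot\tfrac12 \;=\; \parens*{\tfrac12 - \tfrac12\vr} + \frac{B'\vr}{2m},
\]
and since $m = \Theta(n)$ the additive deficit is $O(1/n)$, giving (ii); (iii) follows by the standard 2XOR-SAT--to--NAE-3SAT correspondence.

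For the triangle inequalities: by \Cref{rem:for-triangle-ineqs}, $|\E[\bX_u\bX_v]| \leq |\vr|$ for every good pair (strictly for non-adjacent ones), and independence makes any inner product involving a bad vertex vanish. Hence every expression $\pm\la \bX_u, \bX_v\ra \pm \la \bX_v, \bX_w\ra \pm \la \bX_w, \bX_u\ra$ lies in $[-3|\vr|, 3|\vr|] \subseteq [-1,1]$ precisely when $|\vr| \leq 1/3$, which is exactly the regime $d \leq 13.5$ (for $c=3$) that the paper's main theorem needs. The chief technical obstacle I foresee is the $B = O(1)$ probability bound: Markov applied to the $O(1)$ first-moment estimate from \Cref{thm:cycleless-neighborhoods} only yields $B = \omega(1)$ concentration --- i.e., loss $o(1)$ rather than the stated $O(1/n)$ --- so one really must use Poisson tails (via \Cref{thm:girthy} or a second-moment argument) to get the sharper rate. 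In any event, the weaker $o(1)$ loss already suffices for \Cref{thm:our-mainest}.
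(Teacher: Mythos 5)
Your overall strategy is the same as the paper's: part (1) is immediate from the positive-probability girth bound of \Cref{thm:girthy}, and part (2) patches the block-factor solution by resampling the (expected-$O(1)$) set of vertices with nearby cycles using fresh independent Gaussians, which is exactly the paper's argument. Your bookkeeping for the objective (each edge meeting a resampled vertex contributes $\frac12$ rather than $\frac12-\frac12\vr$) is correct, and your observation about the rate is fair: Markov applied to the constant first moment from \Cref{thm:cycleless-neighborhoods} only gives at most $\omega(1)$ bad vertices with probability $1-o(1)$, hence loss $O(\omega(1)/n)$; and since the short-cycle counts converge to a nondegenerate Poisson, no \emph{fixed} constant bound on the number of bad vertices holds with probability $1-o(1)$ either, so the honest statement is $O(\omega(1)/n)$ --- which, as you note, is all that \Cref{thm:our-mainest} requires.

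The one genuine error is in your triangle-inequality paragraph: the threshold direction is backwards. For $c=3$ one has $\vrstar = 1-(1+\rho_1)^2/(2d)$, which equals $-1/3$ at $d=13.5$ and is \emph{less} than $-1/3$ for $d<13.5$; so $\abs{\vr}\le 1/3$ corresponds to $d\ge 13.5$, not $d\le 13.5$. In the regime the main theorem actually needs ($d<13.5$), the solution with $\vr=\vrstar+\eps$ assigns correlation $\vr<-1/3$ to all three edges of an all-positive clique of $\bI_n$, whence $\la \bX_u,\bX_v\ra+\la \bX_v,\bX_w\ra+\la \bX_w,\bX_u\ra = 3\vr<-1$ and the triangle inequality fails (as it must: the triangle inequalities on the constraint-cliques cap the 2XOR value at $2/3$, i.e., the NAE value at $1$, whereas $\frac12-\frac12\vrstar>\frac23$ there). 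The repair is to use the freedom in \Cref{thm:idealized-FIID}: for $d<13.5$ take $\vr=-1/3$, which lies in the open interval $\bigl(1-\frac{(1+\rho_1)^2}{\DEG},\,1-\frac{(1-\rho_1)^2}{\DEG}\bigr)$, satisfies all triangle inequalities by your $3\abs{\vr}\le 1$ argument combined with \Cref{rem:for-triangle-ineqs}, and yields NAE-SDP value exactly $1$, which suffices for non-refutation. As written, your argument certifies the $\SDPtri$ lower bound only when $d\ge 13.5$, where it is not needed; for $d<13.5$ it certifies only the plain $\SDP$ bound.
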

    \begin{proof}
    The first statement is an immediate consequence of \Cref{thm:girthy}.  As for the second statement, \Cref{thm:cycleless-neighborhoods} and Markov's inequality imply that, with high probability, only an $O((d+1)^{2L+2})/n = O(1/n)$ fraction of vertices in~$\bY_n$ are ``$(2L+2)$-bad'' (i.e., have a cycle within their distance-$(2L+2)$ neighborhood).  Assuming this holds, we use the linear block factors of IID solution from \Cref{thm:idealized-FIID} and \Cref{cor:signed-FIID} but with a small twist:  For each vertex~$v$ that is $2L$-bad in $\bY_n$, rather than using \Cref{eqn:the-FIID} we simply set $\bX_v = \bZ'_v$, where the random variables $\bZ'_v$ are new standard Gaussians independent of all other random variables.  Now for the $1-O(1/n)$ fraction of ``$(2L+2$)-good'' vertices, all their neighbors are still $2L$-good and thus are using the linear block factors of IID solution.   We therefore still have $\E[\bX_u \bX_v] = \xi_{uv} \vr$ for each edge $\{u,v\} \in E(I)$ where $u$ or $v$ is $(2L+2)$-good.  Furthermore, we still have $\abs{\E[\bX_u \bX_v]} \leq \abs{\vr}$ for all distinct $u,v \in V(I)$, since $\E[\bX_u \bX_v] = 0$ when one of $u$ or $v$ is $2L$-bad.  The second statement in the theorem therefore follows.
\end{proof}

\section{Conclusions}

In this work we have shown a sharp threshold for the SDP-satisfiability of random $d$-regular NAE-3SAT instances in the model of random lifts.  Some open questions that remain are the following:
\begin{itemize}
    \item Can we show similar sharp threshold results in the configuration model?  The main challenge is proving Friedman-style bounds on the spectra of random $(c,d)$-biregular bipartite graphs in this model.  An advantage to doing this would be the potential to show similar sharp thresholds for  $2$-coloring random $d$-regular $3$-uniform hypergraphs (i.e., random $d$-regular NAE-3SAT \emph{without} negations).
    \item Can we show similar sharp threshold results in the Erd\H{o}s--R{\'e}nyi random model?
    \item Can our analysis of the 2XOR-SAT SDP / Lov\'{a}sz theta function for the infinite biregular tree $\Tree_{d,c}$, and its primal graph $\Primal_{d,c}$ be extended to other interesting classes of infinite graphs (say, vertex-transitive)?  Are there application to other finite CSPs?
    \item A difficult but important open question: can we analyze the performance higher-degree ``Sum of Squares'' relaxations for refuting random sparse CSPs (that do not support pairwise-uniform distributions)?  Even analyzing the degree-$4$ Sum of Squares relaxation for NAE-3SAT or graph $3$-colorability seems very challenging.
\end{itemize}

\section*{Acknowledgments}
This work began at the American Institute of Mathematics workshop ``Phase transitions in randomized computational problems''; the authors would like to thank AIM, as well as the organizers Amir Dembo, Jian Ding, and Nike Sun, for the invitation. R.~O.~would like to thank Charles Bordenave, Sidhanth Mohanty, Doron Puder, Nike Sun, and David Witmer for helpful comments.

\bibliographystyle{alpha}
\bibliography{csp}

\begin{thebibliography}{KMOW17}

\bibitem[ACIM01]{ACIM01}
Dimitris Achlioptas, Arthur Chtcherba, Gabriel Istrate, and Cristopher Moore.
\newblock The phase transition in 1-in-$k$ {SAT} and {NAE} {3-SAT}.
\newblock In {\em Proceedings of the 12th Annual ACM-SIAM Symposium on Discrete
  Algorithms}, pages 721--722, 2001.

\bibitem[AE98]{AE98}
Gunnar Andersson and Lars Engebretsen.
\newblock Better approximation algorithms for {S}et {S}plitting and
  {N}ot-{A}ll-{E}qual {S}at.
\newblock {\em Information Processing Letters}, 65(6):305--311, 1998.

\bibitem[AM02]{AM02}
Dimitris Achlioptas and Cristopher Moore.
\newblock The asymptotic order of the random k-sat threshold.
\newblock In {\em Proceedings of the 43rd Annual IEEE Symposium on Foundations
  of Computer Science}, pages 779--788, 2002.

\bibitem[AOW15]{AOW15}
Sarah Allen, Ryan O'Donnell, and David Witmer.
\newblock How to refute a random {CSP}.
\newblock In {\em Proceedings of the 56th Annual IEEE Symposium on Foundations
  of Computer Science}, pages 689--708, 2015.

\bibitem[AS93]{AS93}
Noga Alon and Joel Spencer.
\newblock A note on coloring random $k$-sets.
\newblock Unpublished, 1993.

\bibitem[Bas92]{Bas92}
Hyman Bass.
\newblock The {I}hara--{S}elberg zeta function of a tree lattice.
\newblock {\em International Journal of Mathematics}, 3(6):717--797, 1992.

\bibitem[BKM17]{BKM17}
Jess Banks, Robert Kleinberg, and Cristopher Moore.
\newblock The {L}ov{\'{a}}sz theta function for random regular graphs and
  community detection in the hard regime.
\newblock In {\em Proceedings of the 21st Annual International Conference on
  Randomization and Computation}, pages 28:1--28:22, 2017.

\bibitem[Bor17]{Bor17}
Charles Bordenave.
\newblock A new proof of {F}riedman's second eigenvalue theorem and its
  extension to random lifts.
\newblock Technical Report 1502.04482, arXiv, 2017.

\bibitem[CGHV15]{CGHV15}
Endre Cs\'oka, Bal\'azs Gerencs\'er, Viktor Harangi, and B\'alint Vir\'ag.
\newblock Invariant {G}aussian processes and independent sets on regular graphs
  of large girth.
\newblock {\em Random Structures \& Algorithms}, 47(2):284--303, 2015.

\bibitem[CGL07]{CGL07}
Amin Coja{-}Oghlan, Andreas Goerdt, and Andr{\'e} Lanka.
\newblock Strong refutation heuristics for random $k$-{SAT}.
\newblock {\em Combinatorics, Probability and Computing}, 16(1):5--28, 2007.

\bibitem[CNRZ03]{CNRZ03}
Tommaso Castellani, Vincenzo Napolano, Federico Ricci{-}Tersenghi, and Riccardo
  Zecchina.
\newblock Bicolouring random hypergraphs.
\newblock {\em Journal of Physics A: Mathematical and General}, 36(43):11037,
  2003.

\bibitem[Cs{\'o}16]{Cso16}
Endre Cs{\'o}ka.
\newblock Independent sets and cuts in large-girth regular graphs.
\newblock Technical Report 1602.02747, arXiv, 2016.

\bibitem[CW04]{CW04}
Moses Charikar and Anthony Wirth.
\newblock Maximizing quadratic programs: Extending {G}rothendieck's inequality.
\newblock In {\em Proceedings of the 45th Annual IEEE Symposium on Foundations
  of Computer Science}, pages 54--60, 2004.

\bibitem[DKR15]{DKR15}
Adam Douglass, Andrew King, and Jack Raymond.
\newblock Constructing {SAT} filters with a quantum annealer.
\newblock In {\em Proceedings of the 18th Annual International Conference on
  Theory and Applications of Satisfiability Testing}, pages 104--120, 2015.

\bibitem[DP93]{DP93}
Charles Delorme and Svatopluk Poljak.
\newblock Laplacian eigenvalues and the maximum cut problem.
\newblock {\em Mathematical Programming}, 62(3, Ser. A):557--574, 1993.

\bibitem[DRZ08]{DRZ08}
Luca Dall'Asta, Abolfazl Ramezanpour, and Riccardo Zecchina.
\newblock Entropy landscape and non-{G}ibbs solutions in constraint
  satisfaction problems.
\newblock {\em Physical Review E}, 77(3):031118, 2008.

\bibitem[DSS16]{DSS16}
Jian Ding, Allan Sly, and Nike Sun.
\newblock Satisfiability threshold for random regular {NAE}-{SAT}.
\newblock {\em Communications in Mathematical Physics}, 341(2):435--489, 2016.

\bibitem[Elo09]{Elo09}
Yehonatan Elon.
\newblock Gaussian waves on the regular tree.
\newblock Technical Report 0907.5065, arXiv, 2009.

\bibitem[FGK05]{FGK05}
Joel Friedman, Andreas Goerdt, and Michael Krivelevich.
\newblock Recognizing more unsatisfiable random $k$-{SAT} instances
  efficiently.
\newblock {\em SIAM Journal on Computing}, 35(2):408--430, 2005.

\bibitem[FKO06]{FKO06}
Uriel Feige, Jeong~Han Kim, and Eran Ofek.
\newblock Witnesses for non-satisfiability of dense random {3CNF} formulas.
\newblock In {\em Proceedings of the 47th Annual IEEE Symposium on Foundations
  of Computer Science}, pages 497--508, 2006.

\bibitem[FM16]{FM16}
Zhou Fan and Andrea Montanari.
\newblock How well do local algorithms solve semidefinite programs?
\newblock Technical Report 1610.05350, arXiv, 2016.

\bibitem[FO07]{FO07}
Uriel Feige and Eran Ofek.
\newblock Easily refutable subformulas of large random 3{CNF} formulas.
\newblock {\em Theory of Computing. An Open Access Journal}, 3:25--43, 2007.

\bibitem[Fri08]{Fri03}
Joel Friedman.
\newblock A proof of {A}lon's second eigenvalue conjecture and related
  problems.
\newblock {\em Memoirs of the American Mathematical Society}, 195(910), 2008.

\bibitem[GJ03]{GJ03}
Andreas Goerdt and Tomasz Jurdzi{\'n}ski.
\newblock Some results on random unsatisfiable {$k$}-{S}at instances and
  approximation algorithms applied to random structures.
\newblock {\em Combinatorics, Probability and Computing}, 12(3):245--267, 2003.

\bibitem[GJR10]{GJR10}
Catherine Greenhill, Svante Janson, and Andrzej Ruci\'nski.
\newblock On the number of perfect matchings in random lifts.
\newblock {\em Combinatorics, Probability and Computing}, 19(5-6):791--817,
  2010.

\bibitem[GL03]{GL03}
Andreas Goerdt and Andr{\'e} Lanka.
\newblock Recognizing more random unsatisfiable $3$-{SAT} instances
  efficiently.
\newblock {\em Electronic Notes in Discrete Mathematics}, 16:21--46, 2003.

\bibitem[GM88]{GM88}
Chris Godsil and Bojan Mohar.
\newblock Walk generating functions and spectral measures of infinite graphs.
\newblock {\em Linear Algebra and its Applications}, 107:191--206, 1988.

\bibitem[GW95]{GW95}
Michel Goemans and David Williamson.
\newblock Improved approximation algorithms for maximum cut and satisfiability
  problems using semidefinite programming.
\newblock {\em Journal of the Association for Computing Machinery},
  42(6):1115--1145, 1995.

\bibitem[Has92]{Has92}
Ki{-}ichiro Hashimoto.
\newblock Artin type {$L$}-functions and the density theorem for prime cycles
  on finite graphs.
\newblock {\em International Journal of Mathematics}, 3(6):809--826, 1992.

\bibitem[HLZ04]{HLZ04}
Eran Halperin, Dror Livnat, and Uri Zwick.
\newblock M{AX} {CUT} in cubic graphs.
\newblock {\em Journal of Algorithms. Cognition, Informatics and Logic},
  53(2):169--185, 2004.

\bibitem[HV15]{HV15}
Viktor Harangi and B\'alint Vir\'ag.
\newblock Independence ratio and random eigenvectors in transitive graphs.
\newblock {\em The Annals of Probability}, 43(5):2810--2840, 2015.

\bibitem[Iha66]{Iha66}
Yasutaka Ihara.
\newblock On discrete subgroups of the two by two projective linear group over
  {${\mathfrak p}$}-adic fields.
\newblock {\em Journal of the Mathematical Society of Japan}, 18:219--235,
  1966.

\bibitem[Iva83]{Iva83}
Alexander Ivanov.
\newblock Bounding the diameter of a distance-regular graph.
\newblock {\em Doklady Akademii Nauk SSSR}, 271(4):789--792, 1983.

\bibitem[JMR16]{JMR16}
Adel Javanmard, Andrea Montanari, and Federico Ricci{-}Tersenghi.
\newblock Phase transitions in semidefinite relaxations.
\newblock {\em Proceedings of the National Academy of Sciences of the United
  States of America}, 113(16), 2016.

\bibitem[Kem16]{Kem16}
Mark Kempton.
\newblock Non-backtracking random walks and a weighted {I}hara's theorem.
\newblock Technical Report 1603.05553, arXiv, 2016.

\bibitem[KLP96]{KLP96}
Viggo Kann, Jens Lagergren, and Alessandro Panconesi.
\newblock Approximability of maximum splitting of $k$-sets and some other
  {APX}-complete problems.
\newblock {\em Information Processing Letters}, 58(3):105--110, 1996.

\bibitem[KMOW17]{KMOW17}
Pravesh Kothari, Ryuhei Mori, Ryan O'Donnell, and David Witmer.
\newblock Sum of squares lower bounds for refuting any {CSP}.
\newblock In {\em Proceedings of the 49th Annual ACM Symposium on Theory of
  Computing}, pages 132--145, 2017.

\bibitem[KMS98]{KMS98}
David Karger, Rajeev Motwani, and Madhu Sudan.
\newblock Approximate graph coloring by semidefinite programming.
\newblock {\em Journal of the ACM}, 45(2):246--265, 1998.

\bibitem[KS00]{KS00}
Motoko Kotani and Toshikazu Sunada.
\newblock Zeta functions of finite graphs.
\newblock {\em The University of Tokyo. Journal of Mathematical Sciences},
  7(1):7--25, 2000.

\bibitem[Lov79]{Lov79}
L\'aszl\'o Lov\'asz.
\newblock On the {S}hannon capacity of a graph.
\newblock {\em Institute of Electrical and Electronics Engineers. Transactions
  on Information Theory}, 25(1):1--7, 1979.

\bibitem[LS96]{LS96}
Wen-Ch'ing~Winnie Li and Patrick Sol\'e.
\newblock Spectra of regular graphs and hypergraphs and orthogonal polynomials.
\newblock {\em European Journal of Combinatorics}, 17(5):461--477, 1996.

\bibitem[Lyo17]{Lyo17}
Russell Lyons.
\newblock Factors of {IID} on trees.
\newblock {\em Combinatorics, Probability and Computing}, 26(2):285--300, 2017.

\bibitem[MMZ06]{MMZ06}
Stephan Mertens, Marc M\'{e}zard, and Riccardo Zecchina.
\newblock Threshold values of random $k$-sat from the cavity method.
\newblock {\em Random Structures \& Algorithms}, 28(3):340--373, 2006.

\bibitem[MPZ02]{MPZ02}
Marc M\'{e}zard, Giorgio Parisi, and Riccardo Zecchina.
\newblock Analytic and algorithmic solution of random satisfiability problems.
\newblock {\em Science}, 297:812--815, 2002.

\bibitem[MS16]{MS16}
Andrea Montanari and Subhabrata Sen.
\newblock Semidefinite programs on sparse random graphs and their application
  to community detection.
\newblock In {\em Proceedings of the 48th Annual ACM Symposium on Theory of
  Computing}, pages 814--827, 2016.

\bibitem[MW89]{MW89}
Bojan Mohar and Wolfgang Woess.
\newblock A survey on spectra of infinite graphs.
\newblock {\em The Bulletin of the London Mathematical Society},
  21(3):209--234, 1989.

\bibitem[Sch10]{Sch10}
Grant Schoenebeck.
\newblock {\em Limitations of Linear and Semidefinite Programs}.
\newblock PhD thesis, University of California, Berkeley, 2010.

\bibitem[ST96]{ST96}
Harold Stark and Audrey Terras.
\newblock Zeta functions of finite graphs and coverings.
\newblock {\em Advances in Mathematics}, 121(1):124--165, 1996.

\bibitem[Tul09]{Tul09}
Madhur Tulsiani.
\newblock {CSP} gaps and reductions in the {L}asserre hierarchy.
\newblock In {\em Proceedings of the 41st Annual ACM Symposium on Theory of
  Computing}, pages 303--312, 2009.

\bibitem[WF09]{WF09}
Yusuke Watanabe and Kenji Fukumizu.
\newblock Graph zeta function in the {B}ethe free energy and loopy belief
  propagation.
\newblock In {\em Proceedings of the 23rd annual Annual Conference and Workshop
  on Neural Information Processing Systems}, pages 2017--2025, 2009.

\bibitem[Zwi98]{Zwi98}
Uri Zwick.
\newblock Approximation algorithms for constraint satisfaction problems
  involving at most three variables per constraint.
\newblock In {\em Proceedings of the 9th Annual ACM-SIAM Symposium on Discrete
  Algorithms}, volume~98, pages 201--210, 1998.

\bibitem[Zwi99]{Zwi99}
Uri Zwick.
\newblock Outward rotations: a tool for rounding solutions of semidefinite
  programming relaxations, with applications to {MAX} {CUT} and other problems.
\newblock In {\em Proceedings of the 31st Annual ACM Symposium on Theory of
  Computing}, pages 679--687, 1999.

\end{thebibliography}

\appendix

\section{Bordenave's Theorem for random signed lifts}   \label{sec:bordenave}


\newcommand{\ve}{\vec{e}}
\newcommand{\bve}{\vec{\be}}

In this appendix we will prove the following theorem.
\begin{theorem*}[Restatement of \Cref{thm:signed-bordenave}]
    Let $Y$ be a connected graph (with more edges than vertices) having non-backtracking matrix $B$.  Fix $\eps > 0$.  Let $\bX_n$ be a random signing of a random $n$-lift $\bY_n$ of $Y$, and let $\bB_n$ denote the  non-backtracking matrix of $\bX_n$.   Then
    \[
        \Pr[\specrad(\bB_n) \geq \sqrt{\specrad(B)} + \eps] = o_{n \to \infty}(1).
    \]
\end{theorem*}
Our theorem requires minor modifications to the trace-method proof of~\cite[Theorem 20]{Bor17}, and we follow it closely.
The differences occur because~\cite[Theorem 20]{Bor17} pertains to the spectrum of unsigned lifts, and for that reason the arguments therein must take into account the uninteresting top eigenspace of the non-backtracking matrix; this introduces some technical complications.
Since we are working with randomly signed edges, we need not worry about these eigenspaces, and our arguments will be somewhat pared down (though to our knowledge they cannot be extracted from~\cite{Bor17} in a black-box fashion).

\subsection{Setup and notation}

We set the stage for the proof by introducing some notation and definitions.
Let $Y = (V,E)$ be an undirected graph, and let $\vec{E}$ be the set of directed edges associated with $E$, so that
\[
    \vec{E} = \{(u,v) : \{u,v\} \in E\},
\]
and $|\vec{E}| = 2|E|$.
To limit confusion, we will use plain, bold letters $\be$ to denote edges in $E$ and decorated bold letters $\vec{\be}$ to denote arcs in $\vec{E}$.
For an arc $\vec{\be} = (u,v)$, we let $(\vec{\be})^{-1} = (v,u)$.

Let $n \in \Z^+$, let $Y_n = (V_n, E_n)$ be an $n$-lift of $Y$ as defined in \Cref{sec:lift-notation}, and let $X_n = (V_n,E_n, \xi_n)$ be random signing of $Y_n$ with signs $\xi_n: E_n \to \R$.\footnote{In our setting, we will choose $\xi_n(e) \in\{ \pm 1\}$ independently and uniformly for each $e \in E_n$.}
In the $n$-lift, each edge $e \in E_n$ (arc $\vec{e} \in \vec{E_n}$) is associated with an edge $\{u,v\} \in E$ (arc $(u,v) \in \vec{E}$), and with a pair of labels $i,j \in [n]$, so that $e = \{(u,i),(v,j)\}$ ($\vec{e} = ((u,i),(v,j)$).
Again to limit confusion, we will use non-bold, plain letters to denote edges $e \in E_n$ and decorated, non-bold letters to denote arcs $\vec{e}\in \vec{E_n}$.
We let $S^{E}_n$ be the set of tuples of $|E|$ permutations on $[n]$.
Each $n$-lift is associated with some $\sigma = \{\sigma_{\be}\}_{\be \in E} \in S^{E}_n$, so that $E_n = \{ \{(u,i),(v,\sigma_{u,v}(i))\}\}$ (where we take $u$ to proceed $v$ lexicographically, in order to ensure that the bijection between $\sigma$ and lifts is unique).\footnote{Again, in our setting we will choose each $\sigma_{\be}$ uniformly at random in $S_n$.}
We sometimes refer to the lift specified by $\sigma \in S_n^E$ as $Y_n(\sigma)$.

We also define $B_n$ to be the weighted non-backtracking matrix of $X_n$ as in \Cref{sec:matrix-notation}, so that for directed edges $(u,v),(x,y) \in \vec{E_n}$,
\[
B_n[(u,v),(x,y)] = \xi_n(\{u,v\}) \cdot \Ind[v = x] \cdot \Ind[u \neq y].
\]

We will apply the trace method to $B_n$; that is, we will relate $\specrad(B_n)$ to the expected trace of a power of $B_n$.
\begin{fact}\label{fact:trace-method}
    If $A \in \C^{n\times n}$ is a random complex matrix, $m,\ell \in \Z^+$, $\eps,c \in \R^+$, and $\E[\tr((A^\ell(A^\ell)^*)^{k})] \le R^{2m\ell}$, then for $\ell\cdot m \ge \frac{c}{\eps} R \log n$ and $\eps < R/2 $,
    \[
	\Pr[\rho(A) \ge R + \eps ] \le n^{-c}
    \]
\end{fact}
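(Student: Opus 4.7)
The plan is to apply the standard trace-Markov strategy for bounding spectral radii: pass from $\rho(A)$ to the operator norm of a power $A^\ell$, then from the operator norm to the trace of $(A^\ell(A^\ell)^*)^m$, so that the hypothesis delivers a usable moment bound, and finally invoke Markov's inequality.

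First I would set $M = A^\ell (A^\ell)^*$, which is Hermitian and positive semidefinite. Writing its eigenvalues as $\lambda_1 \ge \lambda_2 \ge \cdots \ge \lambda_n \ge 0$, we have $\lambda_1 = \|A^\ell\|_{\mathrm{op}}^2$, and
\[
    \tr(M^m) \;=\; \sum_{i=1}^n \lambda_i^m \;\ge\; \lambda_1^m \;=\; \|A^\ell\|_{\mathrm{op}}^{2m}.
\]
Combining this with the elementary inequality $\|A^\ell\|_{\mathrm{op}} \ge \rho(A^\ell) = \rho(A)^\ell$ (valid for any square matrix), I obtain the pointwise bound $\rho(A)^{2m\ell} \le \tr(M^m)$. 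The hypothesis of the fact therefore gives $\E[\rho(A)^{2m\ell}] \le R^{2m\ell}$.

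Next I apply Markov's inequality with the threshold $(R+\eps)^{2m\ell}$:
\[
    \Pr[\rho(A) \ge R+\eps] \;=\; \Pr\bigl[\rho(A)^{2m\ell} \ge (R+\eps)^{2m\ell}\bigr] \;\le\; \left(\frac{R}{R+\eps}\right)^{\!2m\ell}.
\]
It remains to show that the assumption $\ell m \ge \tfrac{c}{\eps} R \log n$, together with $\eps < R/2$, forces this right-hand side below $n^{-c}$. Equivalently, I need $2m\ell \log(1+\eps/R) \ge c\log n$. Since $\eps/R < 1/2 \le 1$, I can use the elementary estimate $\log(1+x) \ge x/2$ on $[0,1]$, so $\log(1+\eps/R) \ge \eps/(2R)$, and hence
\[
    2m\ell \log(1+\eps/R) \;\ge\; \frac{m\ell \cdot \eps}{R} \;\ge\; c\log n,
\]
where the last step is exactly the quantitative hypothesis on $m\ell$. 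This finishes the proof.

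There is no genuine obstacle here: the only nontrivial ingredients are the standard inequalities $\tr(M^m) \ge \|M\|_{\mathrm{op}}^m$ for a PSD matrix $M$ and $\rho(A)^\ell \le \|A^\ell\|_{\mathrm{op}}$, plus the convexity estimate $\log(1+x) \ge x/2$ on $[0,1]$ used to translate the assumption on $\ell m$ into the desired tail decay. The mild care required is simply to track the factor of $2$ in converting $\log(1+\eps/R)$ to $\eps/(2R)$, which is exactly what the hypothesis $\eps < R/2$ is designed to justify.
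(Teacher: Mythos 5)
Your proof is correct and follows essentially the same route as the paper: bound $\rho(A)^{2m\ell}$ by $\tr\bigl((A^\ell(A^\ell)^*)^m\bigr)$, apply Markov's inequality at threshold $(R+\eps)^{2m\ell}$, and convert $(R/(R+\eps))^{2m\ell}$ into $n^{-c}$ via an elementary estimate (you use $\log(1+x)\ge x/2$ on $[0,1]$ where the paper uses the equivalent $1/(1+x)\le 1-x/2$ followed by $(1-y)^k\le e^{-ky}$). No gaps.
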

\begin{proof}
    This follows by noticing that $\specrad(A)^\ell \le \sup_{x \in \R^n}\frac{\|A^\ell x\|_2}{\|x\|_2} = \| A^{\ell} (A^{\ell})^* \|^{1/2}$, and then applying Markov's inequality:
    \begin{align*}
	    \Pr[ \| A^{\ell} (A^{\ell})^* \|^{1/2\ell} \ge t]
	    \le \frac{\E[\tr((A^{\ell} (A^{\ell})^*)^{m})]}{t^{2m\ell}}
	    \le \left(\frac{R}{t}\right)^{2m\ell},
	\end{align*}
	and choosing $t = R + \eps$ with $2\eps < R$,
	\[
	    \left(\frac{1}{1 + \eps/R}\right)^{2m\ell} \le \left(1 - \frac{\eps}{2R}\right)^{2m\ell} \le \exp\left(- \frac{\eps m \ell}{R}\right)
	    \]
	    for $\ell \cdot m \ge  \frac{c}{\eps} R\log n$ the conclusion follows.
\end{proof}

In our computations, we will bound the contribution of sequences of \emph{half-edges} (so as to be consistent with \cite{Bor17}).
\begin{definition}[half-edge]
    A {\em half-edge} $\gamma$ is given by an arc $(u,v) \in \vec{E}$, and an index $i \in [n]$ corresponding to the index of $u$.
We think of $\gamma = ((u,v),i)$ as an arc leaving the $i$th copy of $u$ in the lift, and going to vertex $v$ at some unspecified index; colloquially, $\gamma = ((u,i), (v,?))$.

    We call the set of all possible half-edges $\Pi$.
    In the interest of promoting clarity, we point out that $\Pi$ does not depend on the specific choice of lift, $\sigma$.
\end{definition}

\begin{definition}[valid sequence of half-edges]
We will say that a sequence of half-edges $(\gamma_1,\ldots,\gamma_{2k})$ is {\em valid} if it satisfies the following constraints:
\begin{enumerate}
    \item {\em Admissibility of pairs}: consecutive pairs of half-edges correspond to the same edge in $Y$.
	Formally, for each $t \in [k]$ with $\gamma_{2t -1} = (\vec{\be}_{2t-1}, i_{2t-1})$ and $\gamma_{2t} = (\vec{\be}_{2t},i_{2t})$, we have that $\vec{\be}_{2t-1} = (\vec{\be}_{2t})^{-1}$.
    \item {\em Consistency}: if two half-edges are paired once, they remain paired for the remainder of the sequence.
	Formally, if there exists $t^*$ such that the half-edge $g = \gamma_{2t^*-1}$ is succeeded by the half-edge $h = \gamma_{2t^*}$, then for all $t$ such that $\gamma_{2t-1} = g$, we must also have $\gamma_{2t} = h$.
    Similarly, for all $t$ with $\gamma_{2t} = g$, we must also have $\gamma_{2t-1} = h$.
    \item {\em Consecutiveness}: the sequence of half-edges, when glued together, must correspond to a valid walk.
	Formally, for every $t$, if we have $\gamma_{2t} = ((u_{2t},v_{2t}), i_{2t})$ and $\gamma_{2t+1} = ((v_{2t+1}, u_{2t+1}),i_{2t+1})$, then we must have $v_{2t+1} = v_{2t}$ and $i_{2t+1} = i_{2t}$.
\end{enumerate}
\end{definition}

Colloquially, if two half-edges $\gamma= (\be,i),\gamma'=((\be)^{-1},j)$ appear consecutively in a sequence with $\gamma$ in an odd position and $\gamma'$ in an even position, we will say that they are {\em glued together} to give the edge $\{(\be_1,i),(\be_2,j)\}$ (where $\be_1,\be_2$ are the first and second endpoints of $\be$, respectively).

\begin{definition}[non-backtracking sequence]
    A sequence of half-edges $(\gamma_1,\ldots,\gamma_{k})$ is called {\em non-backtracking} if it does not define a walk that backtracks; that is, for each $t \in [k]$, if $\gamma_{2t} = (\be_{2t}, i_{2t})$ and $\gamma_{2t+1} = (\be_{2t+1}, i_{2t+1})$, we require that $\be_{2t} \neq \be_{2t+1}$.
\end{definition}

We define $\Gamma^{2k}$ to be the set of all valid, non-backtracking sequences of $2k$ half-edges.

\subsection{Walk decomposition}

For $\be = \{u,v\} \in E$, define $M_{\be}$ to be the $n\times n$ signed permutation matrix which encodes $\sigma_e$, so that $(M_{\be})_{ij} = \xi(\{(u,i),(v,j)\})$ if and only if $\sigma_{\be}(i) = j$.
Further, for two half edges $\gamma = (\vec{\be},i),\gamma' = (\vec{{\bf f}},j)$, we let $M_{\gamma,\gamma'} = \Ind[\vec{\be} = (\vec{{\bf f}})^{-1}] \cdot \Ind[\sigma_{\be}(i) = j] \cdot \xi((\be_1,i),(\be_2,j))$ (where $\be$ is the undirected version of $\vec{\be}$).

For two arcs $\vec{e},\vec{f} \in \vec{E}_n$, let $\Gamma_{\vec{e},\vec{f}}^{2k}$ be the set of all valid, non-backtracking sequences of $2k$ half-edges $(\gamma_1,\ldots,\gamma_{2k})$, such that $\gamma_1,\gamma_2$ form $e$ when glued together, with the direction of $\vec{e}$ specified by $\gamma_1$, and such that $\gamma_{2k-1},\gamma_{2k}$ form $\vec{f}$ when glued together, with the direction of $\vec{f}$ specified by $\gamma_{2k-1}$.
We have by definition that
\begin{align}
    (B_n^{k})_{ef} &= \sum_{\gamma \in \Gamma_{\vec{e},{\vec{f}}}^{2k+2}} \prod_{s=1}^k M_{\gamma_{2s-1}\gamma_{2s}}, \label{eq:sum}
\end{align}
since if a sequence $\gamma$ is not valid or non-backtracking, it will have value $0$.

We now define {\em tangles}, which are undesirable, low-probability walk structures (we will be able to discard their contribution to \Cref{eq:sum}).
\begin{definition}[tangle-free]
    For a positive integer $\ell$, a graph $G$ is { \em $\ell$-tangle free} if it contains at most one cycle in every neighborhood of radius at most $\ell$.
    A valid sequence $\gamma \in \Gamma^{2k}$ is $\ell$-tangle free if the graph given by the edges and vertices visited by $\gamma$ does not contain more than one cycle in any neighborhood of radius at most $\ell$.
\end{definition}

The following lemma from \cite{Bor17} proves that with high probability, $Y_n$ is $\ell$-tangle free.

\begin{lemma}[{\cite[Lemma 24]{Bor17}}]\label{lem:no-tangle}
    If $\ell \le \kappa \log_{d-1} n$ with $\kappa \in [0,1/4]$ and $d$ the maximum degree of a vertex in $Y$, then with high probability $Y_n$ is tangle-free.
\end{lemma}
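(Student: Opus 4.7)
The plan is to apply the first moment method: I will bound the expected number of ``tangle witnesses'' in $Y_n$ and conclude via Markov's inequality. The key structural observation is that if some ball $B_{Y_n}(v,\ell)$ contains two distinct cycles, then $Y_n$ contains a connected subgraph $T$ of diameter at most $2\ell$ and cyclomatic excess $e(T)-v(T)+1 \geq 2$: take $T$ to be the union of the two cycles together with a shortest path between them, all of which lies within $B_{Y_n}(v,\ell)$. Conversely, any such $T$ forces two cycles into some $\ell$-neighborhood of one of its vertices. It therefore suffices to prove that the expected number of connected subgraphs of $Y_n$ with diameter at most $2\ell$ and excess at least $2$ is $o(1)$.

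Next I would enumerate such subgraphs by their \emph{shape} in the base graph $Y$, meaning the sub-multigraph of $Y$ onto which they project under the lift. For a shape $S$ on $k$ vertices and $m$ edges, the expected number of labeled copies of $S$ in $Y_n$ is at most $C_Y \cdot n^{k-m}$: there are at most $n^k$ possible labellings of the $k$ vertices of $S$ by indices in $[n]$, and conditional on the labelling each of the $m$ required lifted edges appears independently with probability $1/n$ under the random edge-permutations. Excess at least $2$ forces $m \geq k+1$, so this is $O(n^{-1})$ per shape. For the number of shapes, any connected sub-multigraph of $Y$ with excess at least $2$ and diameter at most $2\ell$ is determined by a root vertex in $Y$ together with two short non-backtracking closed walks describing its two independent cycles and (at most) one connecting path of length $O(\ell)$. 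Since $Y$ has maximum degree $d$, the number of non-backtracking walks of length $L$ rooted at any vertex is at most $d(d-1)^{L-1}$, so the total number of shapes is at most $|V| \cdot (d-1)^{O(\ell)}$.

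Combining, the expected number of tangle-witnessing subgraphs in $Y_n$ is at most $C_Y \cdot (d-1)^{O(\ell)}/n$. With $\ell \leq \kappa \log_{d-1} n$ and $\kappa \leq 1/4$, and with the constant in $O(\ell)$ tracked carefully, this quantity is $o(1)$, so Markov's inequality yields $\Pr[Y_n \text{ is not } \ell\text{-tangle-free}] = o(1)$. Note that this argument parallels the one implicit in \Cref{thm:cycleless-neighborhoods}, with the crucial difference being that the excess-$2$ condition contributes an extra factor of $n^{-1}$, which is exactly what separates ``tangle-free'' (very rare failure) from ``cycle-free in every ball'' (only typical of a $1-o(1)$ fraction of vertices).

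The main technical obstacle is the combinatorial shape-count: producing the exponent $O(\ell)$ with a small enough implicit constant to accommodate the full range $\kappa \leq 1/4$. This requires parameterizing excess-$\geq 2$ shapes by their two short non-backtracking cycles (with an optional connecting path of length at most $2\ell$) rather than enumerating arbitrary connected subgraphs, a move standard in Friedman/Bordenave-style analyses of spectra of random regular and biregular graphs. The structural reduction and the per-shape probability calculation are otherwise routine first-moment computations in the random-lift model.
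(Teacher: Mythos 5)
Your overall strategy --- a first-moment bound on the expected number of excess-$\geq 2$ witness subgraphs, organized by their projections to the base graph --- is the right one, and it is essentially how Bordenave argues (the paper itself gives no proof of \Cref{lem:no-tangle}; it only cites \cite[Lemma~24]{Bor17}). Your structural reduction is sound: if $B_{Y_n}(v,\ell)$ contains two distinct cycles then, taking a BFS tree of the ball and two non-tree edges, one obtains two distinct cycles each of length at most $2\ell+1$, each joined to $v$ by a path of length at most $\ell$, hence a connected witness with $O(\ell)$ edges and excess exactly $2$; and per shape the expected number of copies is indeed $n^{k-m}(1+o(1)) \le n^{-1}(1+o(1))$ (your ``each lifted edge appears independently with probability $1/n$'' should really be $(n-k_{\be})!/n!$ per base edge, as in \Cref{prop:onewalk}, but the conclusion stands for walks of length $o(\sqrt{n})$).

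The gap is quantitative, and it sits exactly at the step you flagged as the main obstacle: the parameterization you propose does not reach $\kappa = 1/4$. Describing a witness by two non-backtracking closed walks of length up to $2\ell+1$ plus a connecting path of length up to $2\ell$ gives at most $|V|\cdot \mathrm{poly}(\ell)\cdot (d-1)^{6\ell}$ shapes, hence an expected witness count of order $(d-1)^{6\ell}/n = n^{6\kappa-1}$, which is $o(1)$ only for $\kappa < 1/6$; at $\kappa = 1/4$ it is $n^{1/2}$. The wasted factor of $(d-1)^{2\ell}$ is spent on the connecting path. The fix is to root the witness at the ball-center $v$ and describe it by \emph{four geodesics of length at most $\ell$ from $v$} (to the endpoints $a_1,b_1,a_2,b_2$ of the two cycle-closing non-tree edges) together with those two closing edges; this yields at most $|V|\cdot(d-1)^{4\ell+O(1)}$ shapes, each still a tree plus two edges and hence of excess exactly $2$, giving a total expectation $O((d-1)^{4\ell}/n) = O(n^{4\kappa-1}) = o(1)$ for $\kappa < 1/4$. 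Equivalently --- and this is Bordenave's actual argument --- one runs the breadth-first exploration of $B(v,\ell)$, notes that each revealed edge closes a cycle with probability at most $S/(n-S)$ where $S = O((d-1)^{\ell})$ is the size of the ball, bounds the probability of two closures by $\binom{S}{2}\bigl(S/(n-S)\bigr)^2 = O((d-1)^{4\ell}/n^2)$, and union-bounds over the $|V|\, n$ choices of $v$. As written, your argument establishes the lemma only for $\kappa < 1/6$ --- which, incidentally, would still suffice for every use of \Cref{lem:no-tangle} in this paper, since the proof of \Cref{thm:signed-bordenave} only needs some fixed $\kappa > 0$.
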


Finally, we will require the following definition.
\begin{definition}
    A valid sequence $\gamma$ is {\em even} if the walk it induces contains every undirected edge with even multiplicity.
\end{definition}

\subsection{Bounding the expectation of a single walk}\label{sec:onewalk}

Now, we bound the expectation of the product of entries along a walk.

For a sequence $\gamma=  (\gamma_1,\ldots,\gamma_{2\ell})$ of length $2\ell$, with $\gamma_t = ((u_t,v_t),i_t)$, let $E_\gamma$ be the set of lifted edges in $\gamma$,
\[
    E_\gamma = \{ \{(u_{2t-1}, i_{2t-1}),(v_{2t-1},i_{2t})\} \mid t \in [k] \}.
\]
\begin{proposition}\label{prop:onewalk}
    Suppose that $\gamma$ is a valid sequence of length $2k \ll \sqrt{n}$.
    Let
    $\ell < \frac{1}{4} \log_{d-1} n$.
    Then we have
    \[
	\E_{\sigma,\xi} \left[\prod_{s=1}^{k} M_{\gamma_{2s-1}\gamma_{2s}} \right]
	\le \Ind[\gamma \text{ even}]\cdot(1+o_n(1))\cdot \left(\frac{1}{n}\right)^{|E_{\gamma}|}.
    \]
\end{proposition}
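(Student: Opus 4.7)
The plan is to exploit the independence of the two sources of randomness in $\bX_n$---the lift permutations $\sigma = \{\sigma_\be\}_{\be \in E}$ and the edge signs $\xi$---to factor the expectation into a sign term and a permutation term, each of which is straightforward to analyze. Unrolling the definition of $M$, we have $M_{\gamma_{2s-1}\gamma_{2s}} = \xi(e_s)\cdot\Ind[\sigma_{\be_s}(i_{2s-1})=i_{2s}]$, where $e_s \in E_n$ is the lifted edge glued from the pair $(\gamma_{2s-1}, \gamma_{2s})$ and $\be_s \in E$ is its base edge, so
\[
    \E_{\sigma,\xi}\bracks*{\prod_{s=1}^k M_{\gamma_{2s-1}\gamma_{2s}}}
    = \E_\xi\bracks*{\prod_{s=1}^k \xi(e_s)} \cdot \E_\sigma\bracks*{\prod_{s=1}^k \Ind[\sigma_{\be_s}(i_{2s-1})=i_{2s}]}.
\]

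For the sign factor, I would group the product by distinct lifted edges: if $e \in E_\gamma$ occurs with multiplicity $m(e)$ along $\gamma$, its contribution is $\xi(e)^{m(e)}$, and since the $\xi(e)$'s are i.i.d.\ uniform in $\{\pm 1\}$, $\E[\xi(e)^{m(e)}]$ equals $1$ when $m(e)$ is even and $0$ otherwise. The total sign expectation therefore vanishes unless every edge of $E_\gamma$ appears an even number of times, which is precisely the definition of $\gamma$ being even---yielding the $\Ind[\gamma \text{ even}]$ factor in the claim. For the permutation factor, the ``admissibility of pairs'' and ``consistency'' clauses of validity imply that, for each base edge $\be \in E$, the constraints imposed on $\sigma_\be$ by $\gamma$ reduce to an injective partial assignment of size exactly $m_\be$, where $m_\be$ is the number of distinct lifted edges of $E_\gamma$ lying above $\be$. (Consistency forbids contradictory repeats, and distinct lifted edges above $\be$ necessarily have distinct endpoint indices, since they come from a perfect matching on $[n]$.) Thus the probability that a uniformly random $\sigma_\be \in S_n$ extends this assignment is exactly $(n-m_\be)!/n!$, and by independence of the $\sigma_\be$'s the entire permutation expectation equals $\prod_{\be \in E}(n-m_\be)!/n!$.

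To finish, I bound each factor as $(n-m_\be)!/n! = n^{-m_\be}\prod_{j=0}^{m_\be-1}(1 - j/n)^{-1} \le n^{-m_\be}\exp(O(m_\be^2/n))$, multiply over $\be$, and use $\sum_\be m_\be = |E_\gamma| \le k$ to obtain an accumulated error of $\exp(O(k^2/n)) = 1+o_n(1)$, valid precisely because $2k \ll \sqrt{n}$. Combined with $\sum_\be m_\be = |E_\gamma|$ in the exponent of $n^{-1}$, this yields the stated bound. The one place requiring real care is the bookkeeping in the permutation step: one must verify that a valid sequence produces exactly $m_\be$ constraints on $\sigma_\be$ with distinct sources and distinct targets, and this is the unique point where the full strength of the ``valid'' hypothesis is used. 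The auxiliary condition $\ell < \frac{1}{4}\log_{d-1} n$ plays no role in this proposition in isolation; it is inherited from the outer trace-method argument where $\ell$-tangle-freeness is simultaneously enforced.
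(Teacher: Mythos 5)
Your proposal is correct and follows essentially the same route as the paper's proof: factor the expectation using independence of the signs and the lift permutations, observe that the sign expectation vanishes unless $\gamma$ is even, reduce the permutation expectation over each base edge $\be$ to fixing $k_\be$ entries of a random permutation to get $(n-k_\be)!/n!$, and absorb the resulting $\exp(O(k^2/n))$ error using $2k \ll \sqrt{n}$. Your observation that the hypothesis $\ell < \tfrac14 \log_{d-1} n$ is not actually used in this proposition also matches the paper, which likewise does not invoke it here.
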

\begin{proof}
    Consider some valid sequence of half-edges $\gamma = (\gamma_1,\ldots,\gamma_{2k})$, and let $\gamma_t = ((u_t,v_t), i_t)$ and $\vec{\be_t} = (u_t,v_t)$, $\be_t = \{u_t,v_t\}$ for convenience.
    We have that
    \begin{align}
	\E_{\sigma,\xi} \left[\prod_{s=1}^{k} M_{\gamma_{2s-1}\gamma_{2s}}  \right]
	&= \E_{\sigma,\xi} \left[\prod_{\be \in \gamma} \prod_{\substack{t \in [k]\\ \be_{2t-1} =\be}} M_{\gamma_{2t -1}\gamma_{2t}}\right]
	= \prod_{\be \in \gamma} \E_{\sigma,\xi} \left[\prod_{\substack{t \in [k]\\ \be_{2t-1} =\be}} M_{\gamma_{2t -1}\gamma_{2t}}\right],\label{eq:edge-prod}
    \end{align}
    since for $\be \neq \be'$, $\sigma_{\be}$ and $\sigma_{\be'}$ are independent, and by the independence of $\xi_n$.
    Expanding the entries of $M$ according to $M$'s definition,
    \begin{align}
	\cref{eq:edge-prod}
	&= \prod_{\be \in \gamma} \E_{\sigma,\xi} \left[\prod_{\substack{t \in [k]\\ \be_{2t-1} = \be}} \Ind[\sigma_{e_{2t-1}}(i_{2t-1}) = i_{2t}] \cdot \xi((u_{2t-1},i_{2t-1}),(v_{2t-1},i_{2t})) \right].\label{eq:expand}
    \end{align}
    By the independence of the signing $\xi$, we have that the expectation of any sequence in which any (undirected) edge is visited an odd number of times is $0$.
    Assimilating this fact,
    \begin{align}
	\cref{eq:expand}
	= \Ind[\gamma \text{ even}] \cdot \prod_{e \in \gamma} \E_{\sigma} \left[\prod_{\substack{t \in [k]\\ \be_{2t-1} = \be}} \Ind[\sigma_{\be_{2t-1}}(i_{2t-1}) = i_{2t}] \right].\label{eq:unsign}
    \end{align}
    Now, suppose that $k_{\be}$ distinct lifted copies of the edge $\be \in E$ appear in $\gamma$.
    Since $\gamma$ is consistent, and because we may assume every edge appears with even multiplicity, the term within the expectation just corresponds to fixing $k_{\be}$ edges of a permutation on $n$ elements.
    Thus we simplify,
    \begin{align}
	\cref{eq:unsign}
	&= \Ind[\gamma \text{ even}] \cdot \prod_{\be \in \gamma} \frac{(n-k_{\be})!}{n!}
	\le \Ind[\gamma\text{ even}] \cdot \prod_{\be \in \gamma} \left(\frac{1}{n}\left(1 + \frac{2k_{\be}}{n}\right)\right)^{k_{\be}},
    \end{align}
    where to obtain the last inequality we have used that for $i \le k_{\be}\ll \sqrt{n}$,
    \[
	\frac{1}{n - i} \le \frac{1}{n}\left(1+\frac{2i}{n}\right) \le \frac{1}{n} \left(1 + \frac{2k_{\be}}{n}\right).
    \]
    And now since $\sum_{\be \in \gamma} k_{\be} = |E_\gamma|$ is the number of distinct lifted edges in $\gamma$, and the number of base edges is at most the number of lifted edges,
    \begin{align}
	\le \Ind[\gamma\text{ even}] \cdot \left(\frac{1}{n}\right)^{|E_\gamma|} \left(1 + \frac{2k}{n}\right)^{2k}.
    \end{align}
    Using that $2k \ll \sqrt{n}$ we obtain our conclusion.
\end{proof}

\subsection{Counting walks}
To apply \Cref{fact:trace-method}, we will need to bound the trace of a power of $B_n^{\ell}(B_n^{\ell})^*$.
Since the trace corresponds to a sum over walks, and because in \Cref{sec:onewalk} we have a bound on the expectation of each walk as a function of the number of distinct edges and the evenness of the walk, we have reduced our problem to counting the number of walks of various types.
We will follow the definitions of Bordenave rather closely, so we may recycle his bounds.

We have that
\begin{align}\label{eq:trace-full}
    \tr\left((B^{\ell}(B^\ell)^{*})^m\right)
    &= \sum_{\substack{e_1,\ldots,e_{2m-1} \in E_n^{2m-1}}} \prod_{s = 1}^{2m-1} (B_n^\ell)_{e_{s}, e_{s+1}},
\end{align}
where we have taken $s+1$ modulo $2m-1$.
To characterize the summation, it is useful for us to define the following set of sequences of half-edges, which have the property that large sub-sequences are tangle-free.

\begin{definition}
Let $W_{\ell,m}$ be the set of sequences of half-edges $\gamma$ of length $2\ell\times 2m$ with the properties that, if we write $\gamma$ as a sequence of sub-sequences $\gamma = (\gamma^{(1)},\ldots,\gamma^{(2m)})$
\begin{enumerate}
    \item For each $s \in [2m]$, the sub-sequence $\gamma^{(s)}$ is valid, non-backtracking, and {\bf tangle-free}.
    \item For each $s \in [m]$, the final edge in $\gamma^{(s)}$ is equal to the first edge in $\gamma^{(s+1)}$ (where we take addition mod $2m$).
	Formally, if $\gamma^{(t)} = (((u^{(t)}_1,v^{(t)}_1),i^{(t)}_1),\ldots,((u^{(t)}_{2\ell},v^{(t)}_{2\ell}),i^{(t)}_{2\ell}))$, then we require $u^{(s)}_{2\ell-1} = u^{(s+1)}_1$, $v^{(s)}_{2\ell-1} = v^{(s+1)}_1$, $i^{(s)}_{2\ell-1} = i^{(s+1)}_1$ and $i^{(s)}_{2\ell} = i^{(s+1)}_2$.
\end{enumerate}
\end{definition}

Recall we have defined $\Pi$ to be the set of all half-edges (not necessarily present in $Y_n$).
\begin{definition}
We define an equivalence relation on $\Pi^m$: $\gamma,\gamma' \in \Pi^m$, with $\gamma_t = ((u_t,v_t),i_t)$ and with $\gamma'_t = ((u'_t,v'_t),i'_t)$ for $t\in [m]$.
We'll say that for $\gamma \sim \gamma'$ if for all $t \in [m]$ we have $(u_t,v_t) = (u'_t,v'_t)$, and if in addition there exists a tuple of permutations in $S_n$, one for each vertex $u \in V$ from the base graph, $(\sigma_u)_{u \in V}$, so that $i'_t = \sigma_{u_t}(i_t)$.
\end{definition}

We observe that if $\gamma$ is {\em even}, then any $\gamma' \sim \gamma$ is even as well.
Similarly, if $\gamma \sim \gamma'$, then $|E_\gamma| = |E_{\gamma'}|$.
We choose a canonical representative for each equivalence class:

\begin{definition}[Canonical sequence]
    Let $V_\gamma(u) \subseteq \{u\} \times [n]$ be the set of all vertices of $Y_n$ visited by $\gamma$ which include $u$.
    We'll call $\gamma \in \Pi^m$ {\em canonical} if for all $u \in V$, $V_\gamma(u) = \{(u,1),\ldots,(u,|V_{\gamma}(u)|)\}$, and if the vertices of $V_{\gamma}(u)$ appear in lexicographical order in $\gamma$.
\end{definition}

The following lemmas are given in \cite{Bor17}.
\begin{lemma}[{\cite[Lemma 27]{Bor17}}]\label{lem:bord-bound-equiv}
    Let $\gamma \in \Pi^m$, and let $V_{\gamma} \subseteq V\times [n]$ be the set of vertices of $Y_n$ which appear in $\gamma$.
    Suppose that $|V_\gamma| = s$.
    Then $\gamma$ is isomorphic to at most $n^s$ elements in $\Pi^m$.
\end{lemma}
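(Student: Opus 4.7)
The plan is to parametrize the equivalence class $[\gamma]_{\sim} \subseteq \Pi^m$ by the restrictions of an allowed tuple $(\sigma_u)_{u\in V} \in (S_n)^V$ to exactly those lift-indices that actually appear in $\gamma$, and then count these restrictions directly.

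First I will partition the visited set $V_\gamma$ according to its base vertex: for each $u \in V$, let $J_u = \{j \in [n] : (u,j) \in V_\gamma\}$ and let $k_u = |J_u|$, so that $\sum_{u \in V} k_u = s$. Next I will verify that any $\gamma' \sim \gamma$ is determined by the tuple of restricted maps $(\sigma_u|_{J_u})_{u \in V}$: every index $i_t$ appearing in $\gamma$ lies in $J_{u_t}$ by construction, so two permutation-tuples that agree on each $J_u$ produce identical image sequences. Conversely, any tuple of injections $J_u \hookrightarrow [n]$ extends to a tuple in $(S_n)^V$ (using $k_u \leq n$), so every such tuple of injections arises from at least one $\gamma' \sim \gamma$. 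It then remains to count injections: the number of injections $J_u \hookrightarrow [n]$ is $n(n-1) \cdots (n - k_u + 1) \leq n^{k_u}$, and multiplying over $u \in V$ yields at most $\prod_{u \in V} n^{k_u} = n^s$, proving the bound.

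The argument is essentially combinatorial bookkeeping and I do not anticipate a substantive obstacle. The one point that requires care is noting that equivalent sequences must share identical base-vertex and base-arc data, so that the only remaining freedom is in choosing lift-indices; this is immediate from the definition of $\sim$, which explicitly fixes $(u_t, v_t) = (u'_t, v'_t)$ for every $t$.
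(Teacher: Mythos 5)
Your argument is correct. The paper itself gives no proof of this lemma---it is quoted directly from \cite[Lemma~27]{Bor17}---but your argument is the standard one and is exactly what is needed: since the base arcs $(u_t,v_t)$ are fixed by the definition of $\sim$, any $\gamma'\sim\gamma$ is determined by the restrictions $\sigma_u|_{J_u}$, each of which is an injection $J_u\hookrightarrow[n]$, so the equivalence class has size at most $\prod_{u\in V} n(n-1)\cdots(n-k_u+1)\leq \prod_{u\in V} n^{k_u}=n^s$. (The converse direction you verify---that every tuple of injections is realized---is not needed for the upper bound, but it does no harm.)
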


\begin{lemma}[{\cite[Lemma 28]{Bor17}}]\label{lem:bord-bound-cW}
    Let $\calW_{\ell,m}(s,a)$ be the subset of canonical paths in $W_{\ell,m}$ with $|V_\gamma| = s$ and $|E_\gamma| = a$.
    There exists a constant $\kappa$ depending on $\rho$ and $Y$ such that we have
    \[
	|\calW_{\ell,m}(s,a)| \le \rho^s (\kappa \ell m)^{8m (a-s+1) + 10m}.
	\]
\end{lemma}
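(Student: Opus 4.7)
The plan is to prove the bound by an \emph{encoding argument}: each canonical walk $\gamma \in \calW_{\ell,m}(s,a)$ will be reconstructed from (i)~its ``tree-like skeleton'' in the base graph~$Y$, which accounts for the factor $\rho^s$; and (ii)~a small amount of side information at a controlled number of ``critical'' steps, which accounts for the factor $(\kappa \ell m)^{8m(a-s+1)+10m}$. The reason to expect such a decomposition is that first-visit edges form a spanning tree on the $s$ vertices visited, leaving exactly $a-s+1$ truly ``extra'' (cycle-closing) edges that must be specified by hand, and the rest of the combinatorial weight should come from the non-backtracking spectral radius of~$Y$ itself.

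First I would process the $2\ell m$ edge traversals of $\gamma$ in order, grouped into the $2m$ sub-walks of length $2\ell$, and classify each traversal as either a \emph{tree step} (entering a previously unseen lifted vertex), a \emph{boundary step} (one of the $4m$ first-or-last edges of a sub-walk, where the non-backtracking constraint may fail across the boundary), or an \emph{important step} (an interior traversal that revisits an already-visited vertex). At each of the $\le s$ tree steps, the walk extends along a non-backtracking edge of~$Y$ into a fresh lift vertex, so---by a Perron-type bound on powers of~$B$ using $\specrad(B) = \rho$---the total number of possible tree-step skeleta is at most $C \rho^s$ for a constant $C = C(Y)$. At each non-tree step, I would record its position in the walk (at most $2\ell m$ choices), its destination among already-visited vertices (at most $s \le \ell m$ choices), and the accompanying lift labels, giving a factor of at most $(\kappa \ell m)^2$ per non-tree step for a suitable $\kappa = \kappa(Y,\rho)$.

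The main obstacle is bounding the number of non-tree steps by $4m(a-s+1) + 5m$. The first Betti number of the edge-graph traversed by $\gamma$ is exactly $a-s+1$, which caps the total number of ``cycle-closing'' events, but these events need to be distributed across sub-walks in a way that respects the tangle-free hypothesis: within any radius-$\ell$ neighborhood touched by a sub-walk, at most one cycle may be closed. The delicate step is the apportionment: each sub-walk may re-enter a previously-built cycle many times, so one must argue that each such re-entry is determined once the single ``local cycle'' of that sub-walk is specified, and that crossings between sub-walks are absorbed into the $4m$ boundary budget. Carrying out this bookkeeping---and then raising $(\kappa \ell m)^2$ to the power $4m(a-s+1)+5m$ and multiplying by the $\rho^s$ from the tree skeleton---yields the claimed bound $\rho^s (\kappa \ell m)^{8m(a-s+1)+10m}$.
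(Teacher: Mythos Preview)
The paper does not actually prove this lemma: it is quoted verbatim as \cite[Lemma~28]{Bor17} and used as a black box, so there is no ``paper's own proof'' to compare against. That said, your sketch is a faithful outline of Bordenave's encoding argument --- tree steps counted by non-backtracking walks in the base graph (giving the $\rho^s$), and ``important'' steps each encoded with $O(\log(\ell m))$ bits of side information. The one place where your outline remains genuinely incomplete is exactly the place you flag as delicate: bounding the number of important times by $O(m)\cdot(a-s+1) + O(m)$. In Bordenave's proof this is done by showing that within each of the $2m$ tangle-free sub-walks the important times come in at most a bounded number of \emph{runs}, with the number of runs per sub-walk controlled by the global excess $a-s+1$ plus a constant; your description (``each re-entry is determined once the single local cycle is specified'') gestures at this but does not yet pin down why re-entries cost nothing extra to encode. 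If you want to turn the sketch into a self-contained proof, that is the lemma you would need to write out carefully.
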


We are now ready to bound the contribution of the sums of tangle-free sections.
\begin{proposition}\label{prop:full-bd}
    For $m = \lfloor \frac{\log n}{17 \log \log n}\rfloor$, $n \ge 3$, and $\ell \le \frac{1}{4}\log_{d-1} n$, and $1 < \rho$, there is a constant $c$ independent of $n$ such that
    \[
	\E\left[\sum_{\gamma \in W_{\ell,m}} \prod_{i=1}^{2m} \prod_{t=1}^{\ell} M_{\gamma^{(i)}_{2t-1},\gamma^{(i)}_{2t}}\right] \le n (c\ell m)^{10m} \rho^{(\ell + 2)m}.
    \]
\end{proposition}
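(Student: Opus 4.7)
The plan is to bound each walk's expectation using Proposition~\ref{prop:onewalk}, then aggregate via the equivalence-class counting provided by Lemmas~\ref{lem:bord-bound-equiv} and~\ref{lem:bord-bound-cW}. The sum inside the expectation runs over $\gamma\in W_{\ell,m}$, each of which is a concatenation of $2m$ tangle-free blocks comprising $4\ell m$ half-edges in total. Since $4\ell m$ is polylogarithmic in~$n$, in particular $\ll \sqrt{n}$, Proposition~\ref{prop:onewalk} applies and gives
\[
\E\!\left[\prod_{i=1}^{2m}\prod_{t=1}^{\ell} M_{\gamma^{(i)}_{2t-1},\gamma^{(i)}_{2t}}\right] \le (1+o(1))\cdot n^{-|E_\gamma|}\cdot \Ind[\gamma\ \text{even}].
\]
Evenness, which is forced here by the independent random signing $\xi_n$, is preserved by the equivalence relation $\sim$, as is $|E_\gamma|$, so grouping the sum by equivalence class is harmless.

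Next, restrict to canonical representatives. By Lemma~\ref{lem:bord-bound-equiv} each canonical even walk with $|V_\gamma|=s$, $|E_\gamma|=a$ accounts for at most $n^s$ equivalent sequences in~$W_{\ell,m}$, each contributing $n^{-a}$. Counting canonical walks via Lemma~\ref{lem:bord-bound-cW}, the expected sum is bounded by
\[
(1+o(1))\,(\kappa \ell m)^{10m}\sum_{s,a} n^{s-a}\,\rho^{s}\,(\kappa \ell m)^{8m(a-s+1)}.
\]
I would then reparameterize with the excess $b = a-s+1 \ge 0$ (nonnegativity is by connectivity of the glued walk), so that $n^{s-a}=n^{1-b}$. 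For even walks, every edge of the $2\ell m$-edge glued walk appears at least twice, forcing $a\le \ell m$ and hence $s\le \ell m+1$. The sum then factors as
\[
n\cdot (\kappa \ell m)^{10m} \left(\sum_{b\ge 0}\bigl(n^{-1}(\kappa \ell m)^{8m}\bigr)^{b}\right)\left(\sum_{s=1}^{\ell m+1}\rho^{s}\right).
\]

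The parameter choices are calibrated precisely so that the $b$-sum is dominated by its $b=0$ (``tree-like'') term: with $m = \lfloor \log n/(17\log\log n)\rfloor$ and $\ell\le \tfrac14\log_{d-1} n$, a direct computation gives $8m\log(\kappa \ell m) \le (16/17)\log n + O(1)$, so the common ratio is $n^{-1/17+o(1)}$ and the $b$-sum is $1+o(1)$. Meanwhile $\sum_{s=1}^{\ell m+1}\rho^{s} = O(\rho^{\ell m})$. Absorbing all $O(1)$ factors into the constant $c$ and using the loose estimate $\rho^{\ell m}\le \rho^{(\ell+2)m}$ to match the stated form yields the claimed bound $n(c\ell m)^{10m}\rho^{(\ell+2)m}$. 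The one step that genuinely requires care is the quantitative check that the ``cost per unit of excess'' $(\kappa\ell m)^{8m}/n$ is subpolynomially small in $n$: this is exactly what the calibration of $m$ against $\ell$ buys us, and without it the excess sum would blow up. Everything else is structural bookkeeping about canonical walks and evenness.
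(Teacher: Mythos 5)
Your proposal is correct and follows essentially the same route as the paper's proof: apply \Cref{prop:onewalk} to each walk, group by the equivalence relation using \Cref{lem:bord-bound-equiv} and \Cref{lem:bord-bound-cW}, reparameterize by the excess $a-s+1$, and verify that $(\kappa\ell m)^{8m}/n \le n^{-1/17+o(1)}$ so both geometric sums converge. The only cosmetic difference is that you bound the number of distinct edges by $\ell m$ rather than the paper's slightly looser $(\ell+2)m$ and then relax back, which is harmless.
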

\begin{proof}
    We split the left-hand side according to the $\calW$ equivalence classes,
    \begin{align}
	\E\left[\sum_{\gamma \in W_{\ell,m}} \prod_{i=1}^{2m} \prod_{t=1}^{\ell} M_{\gamma^{(i)}_{2t-1},\gamma^{(i)}_{2t}}\right]
	&\le \sum_{s=1}^{\infty} \sum_{a = s-1}^{\infty} n^s \sum_{\gamma \in \calW_{\ell,m}(s,a)} \E\left[\prod_{i=1}^{2m} \prod_{t=1}^{\ell} M_{\gamma^{(i)}_{2t-1},\gamma^{(i)}_{2t}}\right],\label{eq:bd}
    \end{align}
    where we have used that $|V_\gamma|-1 \le |E_\gamma|$, since $G_{\gamma}$ is connected.
    Now applying \Cref{prop:onewalk} (using that $\ell m \ll \sqrt{n}$), we have that for $\gamma \in \calW_{\ell,m}(s,a)$,
    \[
	\E\left[\prod_{i=1}^{2m} \prod_{t=1}^{\ell} M_{\gamma^{(i)}_{2t-1},\gamma^{(i)}_{2t}}\right] \le \Ind[\gamma \text{ even}]\cdot (1+o_n(1))\cdot \left(\frac{1}{ n}\right)^{a}.
    \]
Plugging this in above, along with the bound on $|\calW_{\ell,m}(s,a)|$ from \Cref{lem:bord-bound-cW}, we have
    \begin{align*}
	\text{\cref{eq:bd}}
	&\le \sum_{s=1}^{(\ell + 2)m + 1} n^s \sum_{a = s-1}^{(\ell+2)m} \rho^s (\kappa \ell m)^{8m(a - s + 1) + 10m} \cdot (1+o_n(1))\cdot\left(\frac{1}{n}\right)^a,
    \end{align*}
    where we use the fact that $\gamma$ must be even to obtain that $|E_\gamma| = s \le (\ell + 2)m$, (as there are only $2(\ell + 2)m$ edges in the sequence $\gamma$, and each must appear twice), and adjusted the upper limits of the summation accordingly.

    We re-index the above summation, setting $a' = a - s + 1$ and beginning to sum from $a' = 0$ (and summing till $a' = \infty$, as this yields a valid upper bound),
    \begin{align}
	\text{\cref{eq:bd}}
	&\le (1+o_n(1))\cdot(\kappa \ell m)^{10m} \cdot \sum_{s=1}^{(\ell + 2)m + 1} n^s\rho^s \cdot \left(\frac{1}{n}\right)^{s-1}\sum_{a' = 0}^{\infty} (\kappa \ell m)^{8ma'} \cdot \left(\frac{1}{n}\right)^{a'}\nonumber\\
	&= (1+o_n(1))\cdot n(\kappa \ell m)^{10m} \cdot \sum_{s=1}^{(\ell + 2)m + 1} \rho^s \sum_{a' = 0}^{\infty} (\kappa \ell m)^{8ma'} \cdot \left(\frac{1}{n}\right)^{a'}.\label{eq:blah}
    \end{align}
    For our chosen $m$, when $n$ is large enough, $\frac{(\kappa \ell m)^{8m}}{n} \le \frac{(\log n)^{16 m}}{n} \le n^{-1/17}$.
    Combining this observation with the fact that the rightmost sum is a geometric sum, there is a constant c such that
    \begin{align*}
	\cref{eq:blah}
	&\le cn(\kappa \ell m)^{10m} \cdot \sum_{s=1}^{(\ell + 2)m + 1} \rho^s.
    \end{align*}
    Finally, we are left again with a geometric sum; since we have $\rho > 1$, there is a constant $c'$ so that
    \begin{align*}
	&\le c'n(\kappa \ell m)^{10m} \cdot \rho^{(\ell + 2)m + 1}.
    \end{align*}
    Using that $\rho$ is independent of $n$ to push $\rho$ into the constant, we have our conclusion.
\end{proof}

\subsection{Putting things together}

We now finally have the ingredients to prove \Cref{thm:signed-bordenave}.

\begin{proof}[Proof of \Cref{thm:signed-bordenave}]
    Define $\rho := \specrad(B)$, fix $\eps > 0$, $\ell = \kappa \log_{d-1} n$ for a constant $\kappa \in (0,1/4)$, $m = \lfloor \frac{\log n}{17\log\log n}\rfloor$.
    By \Cref{lem:no-tangle}, if $\calE$ is the event that $Y_n$ is $\ell$-tangle-free,
    \[
	\Pr( \specrad(B_n) \ge \sqrt{\rho} + \eps)
	\le \Pr(\specrad(B_n) \ge \sqrt{\rho} + \eps, \calE) + o(1)
	\le \Pr(\|B_n^{\ell}(B_n^\ell)^*\|^{1/2\ell} \ge \sqrt{\rho} + \eps,\calE) +o(1).
    \]
    If $Y_n$ is $\ell$-tangle-free, then only sequences $\gamma \in W_{\ell,m}$ contribute to \Cref{eq:trace-full}, as any (consecutive) sub-sequence $\gamma^{(i)} \subset \gamma$ of length $2\ell$ defines a length-$\ell$ walk in $Y_n$.
So using \Cref{fact:trace-method} in conjunction with \Cref{eq:trace-full} and \Cref{prop:full-bd}, we have that
    \[
	\E[\tr((B_n^{\ell}(B_n^{\ell})^*)^m) \cdot \Ind[ \calE]]
	\le n(c\ell m)^{10m} \rho^{(\ell+2)m}.
    \]
Taking the $2\ell m$th root on the right, by our choice of $\ell = \Theta(\log n)$ and $m = \Theta(\log n/\log\log n)$, $(c\ell m)^{5/\ell} = o(\log^2 n)^{1/\log n} = 1 + o(1)$, $n^{1/2\ell m} \le 2^{\Theta(\log\log n/\log n)} = 1 + o(1)$, and since $\rho$ is independent of $n$, $\rho^{1/\ell} = 1 + o_n(1)$, and we have the desired conclusion.
\end{proof}

\end{document}